\newtheorem{theorem}{Theorem}
\newcommand{\Tau}{\mathrm{T}}
\begin{document}
%
\title{On the Benefits of Network-Level Cooperation in Millimeter-Wave Communications} 


\author{Cristian~Tatino,~\IEEEmembership{Student Member,~IEEE,}
        Nikolaos~Pappas,~\IEEEmembership{Member,~IEEE,}
        Ilaria~Malanchini,~\IEEEmembership{Member,~IEEE,}
        Lutz~Ewe,
        and~Di~Yuan,~\IEEEmembership{Senior Member,~IEEE}
\thanks{This work extends the preliminary study in~\cite{RelConfO}. This project has received funding from the European Union's Horizon 2020 research and innovation programme under the Marie Sklodowska-Curie grant agreement No. 643002.}}%

\maketitle

\begin{abstract}
Relaying techniques for millimeter-wave wireless networks represent a powerful solution for improving the transmission performance. In this work, we quantify the benefits in terms of delay and throughput for a random-access multi-user millimeter-wave wireless network, assisted by a full-duplex network cooperative relay. The relay is equipped with a queue for which we analyze the performance characteristics (e.g., arrival rate, service rate, average size, and stability condition). Moreover, we study two possible transmission schemes: fully directional and broadcast. In the former, the source nodes transmit a packet either to the relay or to the destination by using narrow beams, whereas, in the latter, the nodes transmit to both the destination and the relay in the same timeslot by using a wider beam, but with lower beamforming gain. In our analysis, we also take into account the beam alignment phase that occurs every time a transmitter node changes the destination node. We show how the beam alignment duration, as well as position and number of transmitting nodes, significantly affect the network performance. We additionally discuss the impact of beam alignment errors and imperfect self-interference cancellation technique at the relay for full-duplex communications. Moreover, we illustrate the optimal transmission scheme (i.e., broadcast or fully directional)  for several system parameters and show that a fully directional transmission is not always beneficial, but, in some scenarios, broadcasting and relaying can improve the performance in terms of throughput and delay. 

%
\end{abstract}

\begin{IEEEkeywords}
Millimeter-waves, network cooperative relaying, beam alignment, random access networks, directional communications.
\end{IEEEkeywords}

%
\IEEEpeerreviewmaketitle

\section{Introduction}
\label{sec:Intro}


In the recent years, millimeter-wave (mm-wave) communications have attracted the interest of many researchers, who see the abundance of spectrum resource in the mm-wave frequency range (30-300 GHz) as a possible solution to the longstanding problem of spectrum scarcity. 
For this reason, mm-wave wireless networks have been identified as one of the key enabler technologies for the next generation of mobile communications, i.e., 5G~\cite{5GEnabler}. Although mm-waves communications can reach tremendous high data rates~\cite{RappMeas}, the signal propagation is subject to higher path loss and penetration loss~\cite{Meas28,Meas73}, in comparison to lower frequency communications. Such high losses cause frequent interruptions, especially when obstacles block the signal path~\cite{SurMM}.
Directional communications and narrow beams provide high beamforming gains that contribute to mitigate the path loss issue. By using narrow beams, the transmitters focus the signal energy along only few directions and paths and, usually, the line-of-sight (LOS) path is characterized by the lowest path loss~\cite{Meas28,Meas73}. When the LOS path is blocked by an obstacle, the use of reflected transmission paths can overcome the blockage issue~\cite{Our1}. 

Other solutions for avoiding interruptions caused by blockages provide alternative transmission paths by using additional nodes, e.g., multi-connectivity \cite{MCcapacity,Our2} and relaying techniques. In the latter, a source node (user equipment, UE) transmit a packet to an intermediate node (relay) when the source-destination path is blocked. Though relaying and cooperative communications have been extensively analyzed for microwave frequencies~\cite{Coop1,Coop2,Coop3,CoopSpa,Rel1,MonCoop,CoopWC,Niko,Rel2,ITC30,CoopSurv,CoopSel}, mm-wave communications present some peculiarities, such as the use of narrow beams and the beam alignment phase, that make further analysis necessary. For instance, by using narrow beams, UEs might not be able to transmit simultaneously to both the relay and the destination node, which is usually the case with omnidirectional transmissions at lower frequencies. By using narrow beams in mm-waves, in each timeslot the UE may transmit a packet either to the destination or the relay. This is particularly true for UEs that are equipped with a single phased antenna array, which is a common solution for mm-wave communications in order to minimize the cost and energy consumption~\cite{PhaseArray}. Moreover, every time the UEs change the receiver (i.e., from the destination to the relay and vice-versa), a new beam alignment might be required~\cite{InitAcc1,InitAcc2}. This can both cause further delays and affect the throughput.

We propose a novel analysis of network-level cooperative communications in mm-wave wireless networks with a mm-wave access point (mmAP) as transmission target and one network cooperative full-duplex relay that is equipped with a queue. We analyze the impact of  directional communications by evaluating two possible transmission schemes: broadcast ($\mathrm{BR}$) and fully directional ($\mathrm{FD}$). Using the former, the UEs transmit simultaneously to both the mmAP and the relay by means of wider beams at lower beamforming gains, whereas, with the $\mathrm{FD}$ scheme, the UEs transmit either to the mmAP or to the relay by using narrow beams. Moreover, we take into account the beam alignments that occur every time the transmitters change receiver and scheme.  

\subsection{Related Work}
\label{sec:Rel}
Several works have been proposed for evaluating the benefits of relaying techniques in mm-wave communications, e.g,~\cite{CoopDiv,RelayPhy1,RelayPhy2,ConnRel,CovRel,ProbD2D,CovD2D,RelayBlo,FairRelay,FallRelay}. In~\cite{CoopDiv}, the authors propose a physical layer analysis of cooperative communications for frequencies above 10 GHz and evaluate the outage probability of several multiple access protocols, combining techniques, and relay transmission techniques. The study shows that the use of relays drastically improves the coverage probability and the correlation between the source-relay and relay-destination links can be exploited to improve the performance. The authors of~\cite{RelayPhy1,RelayPhy2} use stochastic geometry to show the improvements in the signal-to-interference-plus-noise ratio (SINR) distribution and coverage probability for a mm-wave cellular network that is assisted by a relay. The results of~\cite{RelayPhy2} show the asymptotic gain that can be achieved by using the best relay selection strategy over random relay selection.

Stochastic geometry is also used in~\cite{ConnRel,CovRel,ProbD2D,CovD2D}. In~\cite{ConnRel}, the connection probability for mm-wave wireless networks with multi-hop relaying is analyzed. The authors show that the connection probability is strictly correlated to the obstacle density and the width of the region where the relays are potentially selected. In~\cite{CovRel}, the coverage probability for a decode-and-forward relay is analyzed; the authors consider the relay that has the highest signal-to-noise ratio (SNR) to the receiver among the set of relays that can decode the source message. In~\cite{ProbD2D} and~\cite{CovD2D}, the authors focus on relaying techniques for device-to-device (D2D) scenarios and analyze, by using stochastic geometry, the coverage probability and the relay selection problem, respectively. The relay selection strategy is further evaluated in~\cite{RelayBlo,FairRelay}. The former proposes a two-hop relay selection algorithm for mm-wave communications to take into account the dependency between the source-destination and relay-destination paths in terms of line-of-sight (LOS) probability. The work in~\cite{FairRelay} considers a joint relay selection and mmAP association problem. In particular, the authors propose a distributed solution that takes into account the load balancing and fairness aspects among multiple mmAPs. 

None of the aforementioned studies considers the beam alignment phase. This aspect is taken into account in~\cite{FallRelay}, for a single source-destination pair and a single half-duplex relay. When the source-destination link is blocked, the source node can transmit either to the relay by using mm-waves or to both the relay and the destination by using lower frequencies. In the former case a beam alignment occurs. The authors compare the two approaches in terms of throughput and delay, but differently from our approach they assume continuous time and single UE scenario.

In general, analysis of relaying techniques in mm-wave wireless networks regarding network-level performance need further studies. However, it is worth mentioning works that propose similar analysis for lower frequencies, such as~\cite{MonCoop,CoopWC,Niko,Rel2,ITC30,CoopSurv,CoopSel}. In~\cite{MonCoop,CoopWC}, benefits and challenges of cooperative communications for wireless networks are extensively discussed. In~\cite{Niko}, the authors consider a multi-user scenario with a full-duplex relay and a destination that have multi-packet reception capability, whereas, the studies in~\cite{Rel2, ITC30} analyze a similar scenario, but with two relays. In~\cite{Niko, Rel2, ITC30}, the relays are equipped with infinite size queue for which the performance are analyzed as well as the per-user and network throughput, and the delay per packet. Buffer-aided relays are also considered by~\cite{CoopSurv,CoopSel}. The former illustrates and compares several buffer-aided relaying protocols, whereas, the latter analyzes presents a comprehensive study of relay selection techniques for lower frequencies wireless networks.

\subsection{Contributions}
\label{sec:Contributions}
We provide a novel analysis of delay and throughput for random access multi-user cooperative relaying mm-wave wireless networks. We show the tradeoff between using the aforementioned transmission schemes, i.e., $\mathrm{FD}$ and $\mathrm{BR}$, by taking into account the different beamforming gains and interference caused by both types of transmissions. Namely, in contrast to the $\mathrm{FD}$ scheme, $\mathrm{BR}$ transmissions use wider beams that provide a lower beamforming gain, but they can allow to transmit simultaneously both to the relay and the mmAP. Furthermore, switching transmission scheme involves a beam alignment phase between the transmitter and the receiver and, therefore, we show how the duration of this phase impacts the performance. 

In more detail, at first, we compute the analytical expression of the user transmit probability, which, as we show, is decreased by the beam alignment. Then, by using queueing theory, we study the performance characteristics of the queue at the relay. More precisely, we consider what is called network-level cooperation at the relay. This forward the successfully decoded packets that are stored in a queue, whose operations are analyzed in details. This analysis includes stability condition, as well as the service and the arrival rate. Moreover, we model the evolution of the queue as a discrete time Markov Chain in which each state denotes the number of the packets in the queue. Since we derive the transition probabilities then we can provide the probability that the queue is empty and the average queue size.

Finally, we identify the optimal transmission scheme (i.e., $\mathrm{FD}$ and $\mathrm{BR}$) with respect to several system parameters, e.g., number and positions of nodes, and beam alignment duration. In addition, we also analyze and discuss the impact of imperfect beam alignment and imperfect self-interference cancellation on the network performance. Namely, we investigate when it is more beneficial for the UEs to transmit simultaneously to both the mmAP and the relay by using wider beams, and when instead it is better to use narrow beams and transmit either to the mmAP or the relay. To the best of our knowledge, such analysis has not been investigated yet.

The rest of the paper is organized as follows. In Section \ref{sec:Ass} we describe the system model. In Section~\ref{sec:PA}, we present the queue analysis at the relay and, in Section~\ref{sec:Thr}, we evaluate the aggregate network throughput. In Section~\ref{sec:Del} we derive the delay per packet expression and, in Section~\ref{sec:Res}, we provide performance evaluation. Finally, Section~\ref{sec:Conc} concludes the paper.



\section{System Model and Assumptions}
\label{sec:Ass}

\subsection{Network Model}
\label{sec:NM}
We consider a set $\mathcal{N}$, with cardinality $N$, of symmetric\footnote{This study can be generalized to the asymmetric case; however, the analysis will be dramatically involved without providing any additional meaningful insight.} UEs, which are characterized by the same mm-wave networking characteristics such as propagation conditions and topology. We assume multiple packet reception capability both at the mmAP and the relay ($R$), which can form multiple beams at the same time for multiple packets reception~\cite{Hybrid}. Each UE however is considered to be equipped with one analog beamformer and it can form only one beam at a time. We assume slotted time and each packet transmission takes one timeslot. The relay has no packets of its own, but it stores the successfully received packets from the UEs in a queue, which has infinite size\footnote{The
analysis with infinite size is more general, and it can also provide insights on the optimal design of the queue size based on the distribution of the occupancy of the queue. Moreover, the analysis is still valid if the
queue is large enough}. The UEs have saturated queues, i.e., they never empty. We assume that acknowledgements (ACKs) are instantaneous and error free and successfully received packets are removed from the queues of the transmitting nodes, i.e., both the UEs and $R$.  

In a given timeslot, the relay transmits a packet to the mmAP with probability $q_{r}$, whereas, the UEs decide to transmit a packet with probability $q_{u}$. Then, the UEs randomly select one of the two transmission schemes, i.e., $\mathrm{BR}$ or $\mathrm{FD}$, with probability $q_{ub}$ and $q_{uf}$, respectively, with $q_{uf}+q_{ub}=1$. If the UEs use a $\mathrm{BR}$ transmission and the transmission to the destination fails, the relay stores the packets (that are correctly decoded) in its queue and is responsible to transmit it to the destination. This technique is also known as network level cooperation relaying~\cite{Coop2, Rel1,Rel2,Niko}. In contrast, if the $\mathrm{FD}$ scheme is selected, then the UEs choose to transmit either to the relay, with probability $q_{ur}$, or to the mmAP, with probability $q_{um}$, where $q_{ur}+q_{um}=1$. 

We can summarize this process by defining the set of transmission strategies, $\mathcal{S}=\{fm,fr,b\}$, where, $fm$, $fr$ and $b$ represent the cases in which a UE transmits to the mmAP, to $R$, and to both, respectively. Let $P(s=i)$ be the probability of using strategy $i$, with $i \in \mathcal{S}$. These probabilities do not depend on the particular timeslot and are given by: $P(s=fm)=q_{uf}q_{um}$, $P(s=fr)=q_{uf}q_{ur}$ and $P(s=b)=q_{ub}$. If the selected strategy is the same as in the previous transmission attempt, then the UE can directly transmit, otherwise it has to perform a beam alignment. The alignment is done by the UEs every time they decide to transmit and change strategy. We assume that the beam alignment duration is independent from the selected strategy and equals to $D_{a}$ timeslots and, while a UE is performing an alignment, it can not transmit. Thus, the probability that a UE is \textit{actually} transmitting, $q_{tx}$, is affected by the beam alignment, and its derivation is presented in the next section.
\begin{figure}[tb]
	\centering
	\includegraphics[width=6cm]{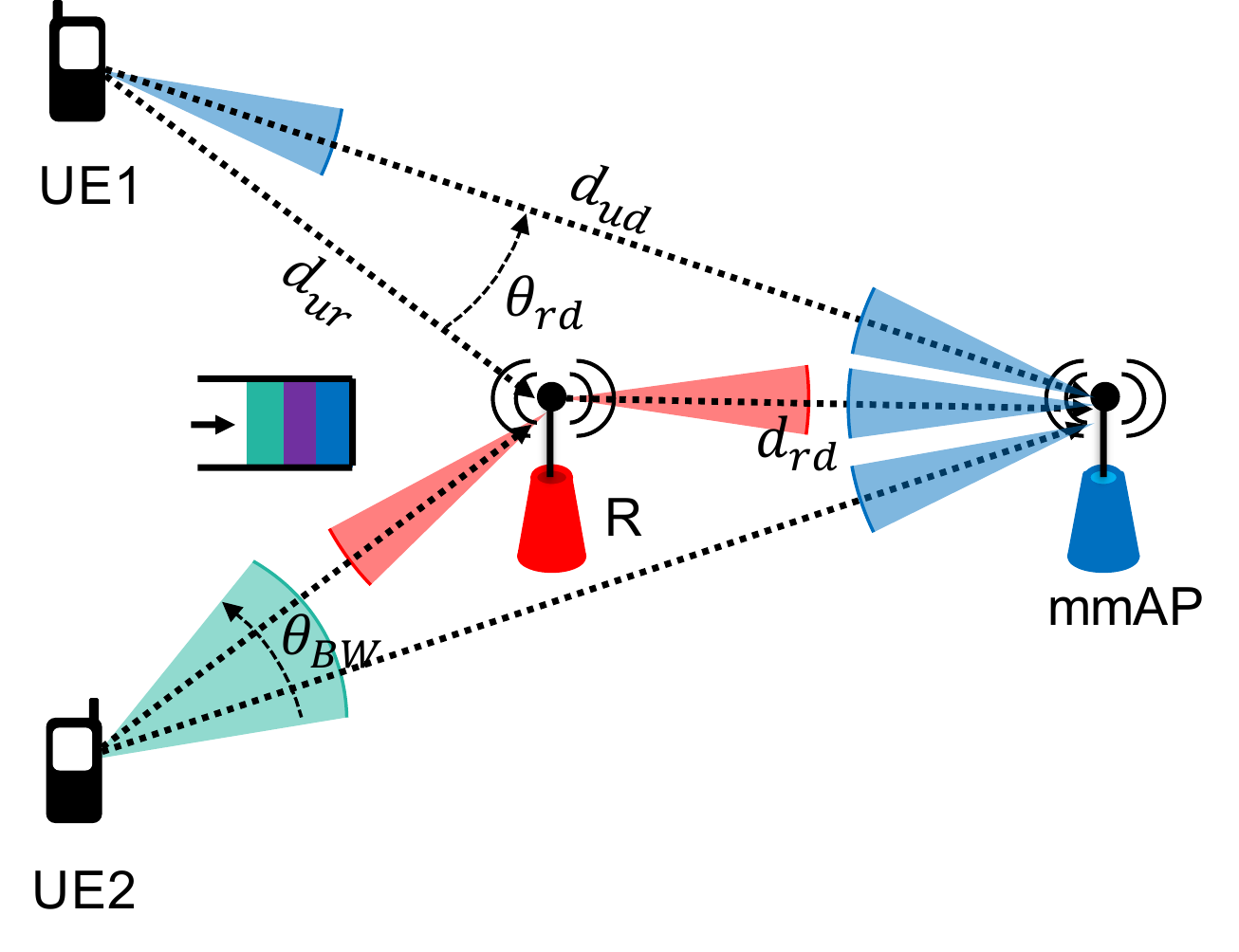}
	\caption[]{$\mathrm{FD}$ (UE1) and $\mathrm{BR}$ (UE2) transmissions for a scenario with two UEs, one relay and one mmAP. In this example, UE1 is transmitting to the mmAP.}
	 \label{fig:FB}
\end{figure}
In Fig.~\ref{fig:FB}, we illustrate an example of the $\mathrm{FD}$ and $\mathrm{BR}$ transmissions, where $d_{ur}$ and $d_{ud}$ represent the distances of the paths UE-$R$ and UE-mmAP, respectively. The parameter $\theta_{rd}$ is the angle formed by $R$ and the mmAP with a UE as vertex and $\theta_{BW}$ is the beamwidth. Hereafter, we indicate the probability of the complementary event by a bar over the term (e.g., $\overline{q}_{u}=1-q_{u}$). Moreover, we use superscripts $f$ and $b$ to indicate the $\mathrm{FD}$ and $\mathrm{BR}$ transmissions, respectively. 

\subsection{SINR Expression and Success Probability}
\label{sec:SINRexp}
A packet is considered to be successfully received if the SINR is above a certain threshold $\gamma$. Ideally, multiple transmissions at the receiver side of a node do not interfere when they are received on different beams. However, in real scenarios, interference cancellation techniques are not perfect. Therefore, we introduce a coefficient $0\le \alpha \le1$ that models the interference between received beams\footnote{Our work can be easily generalized to the case where $\alpha$ depends on the transmission strategy. However, in order to keep the clarity of the presentation we consider $\alpha$ to be constant.}. The cases $\alpha=0$ and $\alpha=1$ represent perfect interference cancellation and no interference cancellation, respectively. Moreover, given the negligible interference between transmissions of different pairs of nodes in mm-waves~\cite{Int_Fischione}, we assume that an $\mathrm{FD}$ transmission to the mmAP does not interfere with the packet transmitted to $R$ and vice-versa. On the other hand, when a UE uses a $\mathrm{BR}$ transmission, its transmission interferes with the transmissions of the other UEs for both the mmAP and $R$. 

We assume that the links between all pairs of nodes are independent and can be in two different states, line-of-sight (LOS) and non-line-of-sight (NLOS). Specifically, $\text{LOS}_{ij}$ and $\text{NLOS}_{ij}$ are the events that node $i$ is in LOS and NLOS with node $j$, respectively. The associated probabilities are denoted as $P(\text{LOS}_{ij})$ and $P(\text{NLOS}_{ij})$. Note that, hereafter, we use subscripts $i$ and $j$ to indicate generic nodes, while, $u$, $r$, and $d$ to indicate the UEs, the relay, and the mmAP, respectively.  

In order to compute the SINR for link $ij$, we first identify the sets of interferers that use $\mathrm{FD}$ and $\mathrm{BR}$ transmissions, which are $\mathcal{I}_{f}$ and $\mathcal{I}_{b}$, respectively. Then, we partition each of them into the sets of nodes that are in LOS and NLOS with node $j$. These sets are $\mathcal{I}_{fl}$ and $\mathcal{I}_{fn}$, for the nodes that use the $\mathrm{FD}$ transmission and $\mathcal{I}_{bl}$ and $\mathcal{I}_{bn}$ for the UEs that use the $\mathrm{BR}$ transmission. 
Thus, when link $ij$ is in LOS, we can derive the SINR, conditioned to $\mathcal{I}_{fl},\mathcal{I}_{fn},\mathcal{I}_{bl},\mathcal{I}_{bn}$, as follows:
\begin{align*} 
&\text{SINR}_{ij/\mathcal{I}_{fl},\mathcal{I}_{fn},\mathcal{I}_{bl},\mathcal{I}_{bn}}^{f} |\text{LOS}_{ij}\\
&=\frac{p_{t}g_{i}^{f}g_{j}^{f}h_{l}(i,j)}{p_{N}+\alpha \Bigl(\displaystyle \sum_{k \in \mathcal{I}_{fl}}p_{r/l}^{f}(k,j)+\sum_{m \in \mathcal{I}_{bl}}p_{r/l}^{b}(m,j)+\sum_{u \in \mathcal{I}_{fn}}p_{r/n}^{f}(u,j)+\sum_{v \in \mathcal{I}_{bn}}p_{r/n}^{b}(v,j)\Bigl)},\stepcounter{equation}\tag{\theequation}\label{eq:SINR}
\end{align*} 

where, $g_{i}$ and $g_{j}$ are the transmitter and receiver beamforming gains, respectively. They are computed in according to the ideal sectored antenna model \cite{Bai}, which is given by: $g_{i}=g_{j}=\frac{2\pi}{\theta_{BW}}$ in the main lobe, and $0$ otherwise. 
The transmit and noise power are $p_{t}$ and $p_{N}$, respectively and $h_{l}(i,j)$ is the path loss on link $ij$ when this is in LOS. The terms $p_{r/l}(i,j)$ and $p_{r/n}(i,j)$ represent the received power by node $j$ from node $i$, when the first is in LOS and NLOS, respectively. Similar expressions of the SINR can be derived also in case of $\mathrm{BR}$ and NLOS. 

Finally, the success probabilities for a packet sent on link $ij$ by using $\mathrm{FD}$ and $\mathrm{BR}$ transmissions are represented by the terms $P_{ij/\mathcal{I}_{f},\mathcal{I}_{b}}^{f}$ and $P_{ij/\mathcal{I}_{f},\mathcal{I}_{b}}^{b}$, respectively. Here, we consider only the conditioning on the sets $\mathcal{I}_{f}$ and $\mathcal{I}_{b}$ because we average on all the possible scenarios for the LOS and NLOS link conditions. The expression for the $\mathrm{FD}$ transmission and $N$ UEs is given in Appendix A, where we assume perfect beam alignment and relay self-interference cancellation. Imperfect beam alignment and Imperfect relay self-interference cancellation cases are discussed in Section~\ref{sec:Beam_Err} and~\ref{sec:FDself}, respectively.

\begin{table}[!t]
	\renewcommand{\arraystretch}{1.0}
	\caption{Summary of the notation.}
	\label{par}
	\centering
	\begin{tabular}{ l | l || l | l }
		\hline
		UE & user equipment &  $N$ & number of UEs\\
		mmAP & mm-wave access point (destination) &  $R$ & relay\\
		$D_{a}$ & beam alignment duration &$q_{r}$ & relay transmit probability  \\
		$q_{u}$ & UE transmit probability & $q_{tx}$ & \textit{actual} UE transmit probability\\
		$q_{ub}$ & probability to use a $\mathrm{BR}$ transmission & $q_{uf}$ & probability to use an $\mathrm{FD}$ transmission\\
		$q_{um}$ & probability to transmit to the mmAP when & $q_{ur}$ & probability to transmit to $R$ when\\
		 &using $\mathrm{FD}$ transmissions& 	 &using $\mathrm{FD}$ transmissions\\
		$\mathcal{S}$ & set of transmission strategies & $b$ & broadcast transmissions\\
		$fm$ & fully directional transmission to the mmAP & $fr$ & fully directional transmission to the relay\\ 
		$d_{dr}$ & mmAP-relay distance& $d_{ud}$ & UE-mmAP distance \\
		$d_{ur}$ & UE-relay distance & $\alpha$ & interference cancellation parameter\\ 
		$\gamma$ &SINR threshold for successful transmissions & $g_{i}^{s}$ & beamforming gain at node $i$ while using strategy $s$  \\
		$\mathcal{I}_{b}$ & set of interferers that use $\mathrm{BR}$ transmissions & $\mathcal{I}_{f}$ &  set of interferers that use $\mathrm{FD}$ transmissions\\
		$\lambda_{r}$ & arrival rate at the relay & $\mu_{r}$ & service rate at the relay\\ 
		$P_{ij/\mathcal{I}_{f},\mathcal{I}_{b}}^{b}$ & success probability of a transmission from the i-th&	$P_{ij/\mathcal{I}_{f},\mathcal{I}_{b}}^{b}$ & success probability of a transmission from the i-th \\ 
		 &to the j-th nodes by using a $\mathrm{BR}$ transmission& 	 &to the j-th nodes by using an $\mathrm{FD}$ transmission\\
		$\theta_{rd}$ & angle between the mmAP& $\theta_{BW}^{b}$ & beamwidth for $\mathrm{BR}$ transmissions \\
		& and $R$ with the UE as vertex & $\theta_{BW}^{f}$ & beamwidth for $\mathrm{FD}$ transmissions \\
		\hline
	\end{tabular}
\end{table}

\section{Performance Analysis}
\label{sec:PA}

\subsection{UE Transmit Probability}
\label{sec:TP}
In this section, we first derive the \textit{actual} transmit probability of a UE in a given timeslot, i.e., $q_{tx}$, when beam alignment is taken into account. Then, we evaluate the performance of the queue at the relay and we analyze the network throughput and the delay per packet. 

\begin{theorem}\label{theo}
\textit{For each timeslot $k$, the probability distributions of the transmission strategy selection $P(s_{k}=i)$ are identical distributed (i.d.) with $i \in \mathcal{S}$, then, the transmit probability for a UE in a timeslot $k$, with constant alignment duration $D_{a}$, is given by:}
\begin{align*} 
q_{tx}&=P\bigl(I_{k}\bigr)q_{u}=\frac{q_{u}}{1+D_{a}\bigl(1-P(s_{k}=i \cap s_{\hat{k}}=i)\bigr)}\stepcounter{equation}\tag{\theequation}\label{eq:qtxf},
\end{align*} 
\textit{where, $q_{u}$ is defined in Section~\ref{sec:NM} and $P\bigl(I_{k}\bigr)$ is the probability that the UE has not started an alignment in the previous $D_{a}$ timeslots. The term $P(s_{k}=i \cap s_{\hat{k}}=i)$ is the probability to use the i-th strategy in timeslot $k$ while using the same strategy for the previous transmission attempt, which occurs in the $\hat{k}$-th timeslot.}
\end{theorem}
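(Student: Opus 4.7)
The plan is to prove the result in two stages: first establish the factorization $q_{tx} = q_u\,P(I_k)$, then compute $P(I_k)$ via a stationary balance argument. The first stage follows immediately from the setup: a UE actually transmits in slot $k$ precisely when two events co-occur, namely the per-slot Bernoulli decision to initiate a transmission (success probability $q_u$, drawn afresh each slot) and the event $I_k$ that the UE is not currently executing a beam alignment. Since the transmission Bernoulli is independent of past alignment history, these events factor and yield $q_{tx} = q_u P(I_k)$ without further work.

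The bulk of the argument is the computation of $P(I_k)$. The identically distributed hypothesis on the strategy selection makes the UE's alignment state stationary, so $P(I_k)$ does not depend on $k$. A new alignment is initiated in a given slot exactly when the UE is free and the newly chosen strategy differs from the one used at the most recent previous attempt $\hat{k}$. The probability of a strategy mismatch, summed over $i \in \mathcal{S}$, is $1 - P(s_k = i \cap s_{\hat{k}} = i)$, so the per-slot alignment-initiation rate equals $P(I_k)\bigl(1 - P(s_k = i \cap s_{\hat{k}} = i)\bigr)$. Each alignment locks the UE out for exactly $D_a$ consecutive slots and alignments cannot overlap, so the time fraction spent aligning equals $D_a$ times this initiation rate, giving the balance
\begin{equation*}
1 - P(I_k) \;=\; D_a\,P(I_k)\bigl(1 - P(s_k = i \cap s_{\hat{k}} = i)\bigr).
\end{equation*}
Solving this linear equation in $P(I_k)$ produces the denominator claimed in the theorem, and multiplying by $q_u$ completes the derivation.

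The step I expect to require the most care is justifying the balance equation, specifically the decoupling of $s_{\hat{k}}$ from $s_k$: the index $\hat{k}$ is itself random and depends on past alignment history, so one cannot simply treat $s_{\hat{k}}$ as an independent sample of the strategy distribution without invoking the identically distributed hypothesis explicitly. The cleanest way to make this rigorous is to model the UE's evolution as a discrete-time Markov chain on the state space $\{\text{free}, \text{aligning at stage } 1, \ldots, \text{aligning at stage } D_a\}$, compute its stationary distribution, and read off $P(I_k)$ as the stationary probability of the free state, thereby avoiding any direct reasoning about the random time $\hat{k}$.
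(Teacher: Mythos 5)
Your overall strategy (a stationarity balance: the fraction of slots lost to alignment equals $D_a$ times the per-slot alignment-initiation rate) is the same one the paper uses in Appendix~B, but two steps are wrong relative to the model. First, the factorization $q_{tx}=q_uP(I_k)$ is not ``immediate.'' In this model a free UE that decides to transmit does \emph{not} always transmit: it transmits only if the freshly drawn strategy coincides with the one used at the previous attempt $\hat{k}$; otherwise it spends the slot starting an alignment. Conversely, a UE that began an alignment exactly $D_a$ slots earlier transmits in slot $k$ with probability $1$, not $q_u$. The paper writes $q_{tx}=P\bigl(Al_{k-D_a}\bigr)\cdot 1+P\bigl(I_k\bigr)q_uP(s_k=i\cap s_{\hat{k}}=i)$ and only then shows, using $P\bigl(Al_{k-D_a}\bigr)=P\bigl(I_{k-D_a}\bigr)q_u\bigl(1-P(s_{k-D_a}=i\cap s_{\widehat{k-D_a}}=i)\bigr)$ together with the identical-distribution hypothesis, that the two terms recombine into $q_uP(I_k)$; with your reading of $I_k$ as ``not currently executing an alignment'' (which includes the just-finished state) the product formula is in fact false. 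Second, and more consequentially, your balance equation drops a factor of $q_u$: an alignment is initiated in a slot only when the UE is free, \emph{decides to transmit} (probability $q_u$), and changes strategy, so the initiation rate is $P(I_k)\,q_u\,\bigl(1-P(s_k=i\cap s_{\hat{k}}=i)\bigr)$ and the correct balance is $1-P(I_k)=D_aP(I_k)\,q_u\,\bigl(1-P(s_k=i\cap s_{\hat{k}}=i)\bigr)$, which yields $P(I_k)$ as in \eqref{eq:PI}. Your version happens to reproduce the right-hand side of \eqref{eq:qtxf} as printed, but that display is inconsistent with \eqref{eq:PI} and with the derivation in Appendix~B: the $q_u$ belongs in the denominator, and your argument needs the extra factor to be correct.

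A smaller remark on your proposed fix: the Markov chain on $\{\text{free},\text{aligning at stage }1,\ldots,\text{aligning at stage }D_a\}$ is a reasonable way to avoid reasoning about the random index $\hat{k}$, but the ``free'' state must be split into ``free with no alignment completed at the end of the preceding slot'' and ``alignment just completed,'' since these have different transmit behavior ($q_uP(s_k=i\cap s_{\hat{k}}=i)$ versus $1$) and different outgoing transitions; collapsing them is precisely what hides the cancellation the paper exploits to get $q_{tx}=q_uP(I_k)$.
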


\begin{proof}
The proof is given in Appendix~B.
\end{proof}

From \eqref{eq:qtxf}, one can notice that $q_{tx}$ is inversely proportional to the beam alignment duration $D_{a}$ as well as to the probability of changing strategy $1-P(s_{k}=i \cap s_{\hat{k}}=i) $. Assuming that the probabilities of the transmission strategy selection, $P(s_{k}=i)$, are independent in each timeslot $k$ and have values as reported in Section~\ref{sec:NM}, $P\bigl(I_{k}\bigr)$ can be written as: 
\begin{align*} 
P\bigl(I_{k}\bigr)&=\frac{1}{1+D_{a}q_{u}\bigl(1-(q_{uf}q_{um})^{2}-(q_{uf}q_{ur})^{2}-(q_{ub})^{2}\bigr)}\stepcounter{equation}\tag{\theequation}\label{eq:PI}.
\end{align*} 

\subsection{Queue Analysis}
\label{sec:QA}
In this section, we evaluate the arrival rate, $\lambda_{r}$, the service rate, $\mu_{r}$, and the stability condition for the queue at the relay $R$. First, we compute $\lambda_{r}$ that can be expressed as follows:
\begin{align*} 
\lambda_{r}&=P(Q=0)\lambda_{r}^{0}+P(Q\neq0)\lambda_{r}^{1}=P(Q=0)\sum_{k=1}^{N}kr_{k}^{0}+P(Q\neq0)\sum_{k=1}^{N}kr_{k}^{1},\stepcounter{equation}\tag{\theequation}\label{eq:lambdaN}
\end{align*} 
where, $\lambda_{r}^{0}$ and $r_{k}^{0}$ represent the arrival rate at $R$ and the probability that it receives $k$ packets in a timeslot when the queue is empty. Whereas, when the queue is not empty, these two terms assume different values, i.e., $\lambda_{r}^{1}$ and $r_{k}^{1}$. The probabilities that the queue is either empty or not empty, $P(Q=0)$ and $P(Q\neq0)$, respectively, are derived in appendix D. When the queue is not empty, $R$ may transmit and interfere with the other transmissions to the mmAP. This interference affects the probability to successfully transmit a packet to the mmAP and therefore the number of received packets by $R$. Thus, In order to compute $\lambda_{r}^{0}$ and $\lambda_{r}^{1}$, we first compute the success transmission probability by identifying the nodes that belong to the sets of interferers $\mathcal{I}_{f}$ and $\mathcal{I}_{b}$. Since the UEs are symmetric, it is sufficient to indicate the number of UEs that are interfering and whether $R$ is transmitting; i.e., we indicate with $\{|\mathcal{I}_{f}|,r\}^{f}$ and $\{|\mathcal{I}_{f}|\}^{f}$ the sets of interferers that use $\mathrm{FD}$ transmissions when $R$ is transmitting or not, and with $\{r\}^{f}$ the set of interferers when only the relay $R$ is transmitting. For the sake of clarity, we first present hereafter the results for two UEs and then, in Appendix D, we generalize the analysis to $N$ UEs. When $N=2$, we can have at maximum two interferers, i.e., the relay and one UE ($|\mathcal{I}_{f}|\le 1$). Moreover, $R$ can receive at maximum two packets per timeslot, i.e., when both the UEs successfully transmit to $R$. Thus, by considering all the possible transmission strategies, $s \in \mathcal{S}$, and all the possible combinations of successfully received packets, we can compute $\lambda_{r}^{0}$ and $\lambda_{r}^{1}$:
\begin{align*}
\lambda_{r}^{0}&=2q_{tx}\overline{q}_{tx}q_{uf}q_{ur}P_{ur}^{f}+2q_{tx}\overline{q}_{tx}q_{ub}P_{ur}^{b}\overline{P}_{ud}^{b}\\
&+q_{tx}^2q_{uf}^2q_{ur}^2q_{ur}^2\Bigl[2P_{ur/\{1\}^{f}}^{f}\overline{P}_{ur/\{1\}^{f}}^{f}+2\Bigl(P_{ur/\{1\}^{f}}^{f}\Bigl)^{2}\Bigl]+2q_{tx}^{2}q_{uf}^{2}q_{ur}q_{um}P_{ur}^{f}\\
&+2q_{tx}^{2}q_{1f}q_{ub}q_{ur}\Bigl[P_{ur/\{1\}^{b}}^{f}\Bigl(1-P_{ur/\{1\}^{f}}^{b}\overline{P}_{ud}^{b}\Bigl)+\overline{P}_{ur/\{1\}^{b}}^{f}P_{ur/\{1\}^{f}}^{b}\overline{P}_{ud}^{b}+2\Bigl(P_{ur/\{1\}^{f}}^{b}\overline{P}_{ud}^{b}\Bigl)^{2}\Bigl]\\
&+2q_{tx}^{2}q_{ub}q_{uf}q_{um}P_{ur}^{b}\overline{P}_{ud/\{1\}^{f}}^{b}+q_{tx}^{2}q_{ub}^{2}\Bigl[2P_{ur/\{1\}^{b}}^{b}\overline{P}_{ud/\{2\}^{b}}^{b}\Bigl(1-P_{ur/\{1\}^{b}}^{b}\overline{P}_{ud/\{1\}^{b}}^{b}\Bigl)\\
&+2\Bigl(P_{ur/\{1\}^{b}}^{b}\overline{P}_{ud/\{1\}^{b}}^{b}\Bigl)^{2}\Bigl],\stepcounter{equation}\tag{\theequation}\label{eq:lambda02}
\end{align*}
where, ${q}_{tx}$, $q_{ub}$, $q_{uf}$, $q_{ud}$, and $q_{ur}$ are introduced in Section~\ref{sec:NM} and a summary of the notation is available in Table~\ref{par}. In order to compute $\lambda_{r}^{1}$, we must consider the possible interference of $R$. Thus, $\lambda_{r}^{1}=\overline{q}_{r}\lambda_{r}^{0}+q_{r}A_{r}$ and $A_{r}$ is given by:
\begin{align*}
A_{r}&=2q_{tx}\overline{q}_{tx}q_{uf}q_{ur}P_{ur}^{f}+2q_{tx}\overline{q}_{tx}q_{ub}P_{ur}^{b}\overline{P}_{ud}^{b}+q_{tx}^2q_{uf}^2q_{ur}^2q_{ur}^2\Bigl[2P_{ur/\{1\}^{f}}^{f}\overline{P}_{ur/\{1\}^{f}}^{f}\\
&+2\Bigl(P_{ur/\{1\}^{f}}^{f}\Bigl)^{2}\Bigl]+2q_{tx}^{2}q_{uf}^{2}q_{ur}q_{um}P_{ur}^{f}+2q_{tx}^{2}q_{uf}q_{ub}q_{ur}\Bigl[P_{ur/\{1\}^{b}}^{f}\Bigl(1-P_{ur/\{1\}^{f}}^{b}\overline{P}_{ud/\{r\}^{f}}^{b}\Bigl)\\
&+\overline{P}_{ur/\{1\}^{b}}^{f}P_{ur/\{1\}^{f}}^{b}\overline{P}_{ud/\{r\}^{f}}^{b}+2\Bigl(P_{ur/\{1\}^{f}}^{b}\overline{P}_{ud/\{r\}^{f}}^{b}\Bigl)^{2}\Bigl]+2q_{tx}^{2}q_{ub}q_{uf}q_{um}P_{ur}^{b}\overline{P}_{ud/\{1,r\}^{f}}^{b}+q_{tx}^{2}q_{ub}^{2}\\
&\times\Bigl[2P_{ur/\{1\}^{b}}^{b}\overline{P}_{ud/\{r\}^{f},\{1\}^{b}}^{b}\Bigl(1-P_{ur/\{1\}^{b}}^{b}\overline{P}_{ud/\{r\}^{f},\{1\}^{b}}^{b}\Bigl)+2\Bigl(P_{ur/\{1\}^{b}}^{b}\overline{P}_{ud/\{r\}^{f},\{1\}^{b}}^{b}\Bigl)^{2}\Bigl].\stepcounter{equation}\tag{\theequation}\label{eq:AR}
\end{align*}

As introduced in Section \ref{sec:NM}, $R$ can transmit a packet to the mmAP by using the $\mathrm{FD}$ scheme. This transmission may be subject to the interference of the UEs that are transmitting to the mmAP. Therefore, we compute the service rate as $\mu_{r}=q_{r}B_{r}$, where $B_{r}$ is given by:
\begin{align*}
B_{r}&=P_{rd}^{f}\Bigl(\overline{q}_{tx}^{2}+2q_{tx}\overline{q}_{tx}q_{uf}q_{ur}+q_{tx}^{2}q_{uf}^{2}q_{2f}q_{ur}^{2}\Bigl)+P_{rd/\{1\}^{f}}^{f}\Bigl(2q_{tx}\overline{q}_{tx}q_{uf}q_{um}+2q_{tx}^{2}q_{uf}^{2}q_{um}q_{ur}\Bigl)\\
&+P_{rd/\{1\}^{b}}^{f}\Bigl(2q_{tx}\overline{q}_{tx}q_{ub}+2q_{tx}^{2}q_{ub}q_{uf}q_{ur}\Bigl)+P_{rd/\{2\}^{f}}^{f}q_{tx}^{2}q_{uf}^{2}q_{um}^{2}+P_{rd/\{1\}^{f},\{1\}^{b}}^{f}2q_{tx}q_{uf}q_{ub}q_{um}\\
&+P_{rd/\{2\}^{b}}^{f}q_{tx}^{2}q_{ub}^{2}.\stepcounter{equation}\tag{\theequation}\label{eq:muR2}
\end{align*}

By applying the Loyne's criterion~\cite{loynes}, we can now obtain the range of values of $q_{r}$ for which the queue is stable by solving the following inequality: $\lambda_{r}^{1} <\mu_{r}$. Thus, we have that the queue at $R$ is stable if and only if $q_{rmin}<q_{r}\le1$, where $q_{rmin}$ is given by:
\begin{equation}\label{eq:qmin}
\begin{aligned} 
q_{rmin}=\frac{\lambda_{r}^{0}}{\lambda_{r}^{0}+B_{r}-A_{r}}.
\end{aligned} 
\end{equation}

The evolution of the queue at the relay can be modelled as a discrete time Markov Chain (DTMC), see Fig.~\ref{fig:DMTC}. The terms $p_{k}^{0}$ and $p_{k}^{1}$, derived in Appendix~C, are the probabilities that the queue size increases by $k$ packets in a timeslot when the queue is empty or not. The expressions for the probability that the queue is empty, $P(Q=0)$, and the average relay queue size, $\overline{Q}$, are derived in Appendix~D that contains the queue performance analysis for $N$ symmetric UEs.
\begin{figure}[tb]
	\centering
	\includegraphics[width=8cm]{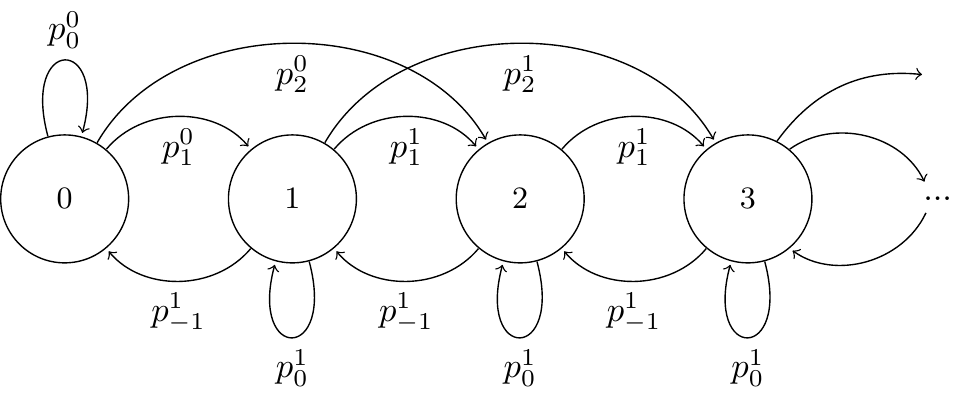}
	\caption[]{The DTMC model for the two UE case.}
	 \label{fig:DMTC}
\end{figure}

\section{Throughput and Delay Analysis}
\label{sec:ThrDel}
\subsection{Throughput Analysis}
\label{sec:Thr}
In this section, we derive the network aggregate throughput, $T$, for $N$ UEs. More specifically, $T$ represents the end-to-end throughput from the UEs to the mmAP and can be computed by considering the following cases: i) when the queue at $R$ is stable and ii) otherwise. In the former case, $T$ can be expressed as follows:
\begin{equation}\label{eq:TT2}
\begin{aligned} 
T=Nq_{tx}T_{u}&=Nq_{tx}q_{uf}\Bigl(q_{um}T_{ud}^{f}+q_{ur}T_{ur}^{f}\Bigr)+Nq_{tx}q_{ub}\Bigl(T_{ud}^{b}+T_{ur}^{b}\Bigr),
\end{aligned} 
\end{equation}
where, $T_{u}$ is the per-user throughput conditioned to the event that the UE is transmitting. On the other hand, when the queue at $R$ is unstable, the aggregate throughput becomes:
\begin{equation}\label{eq:TT}
\begin{aligned} 
T=Nq_{tx}\Bigl(q_{uf}q_{um}T_{ud}^{f}+q_{ub}T_{ud}^{b}\Bigr)+\mu_{r}.
\end{aligned} 
\end{equation}
The terms, $T_{ud}^{f}$, $T_{ud}^{b}$, $T_{ur}^{f}$, and $T_{ur}^{b}$ represent the contributions to $T_{u}$ given by the packets received directly by the mmAP or by $R$, when the $\mathrm{FD}$ and the $\mathrm{BR}$ transmissions are used, respectively and can be expressed as follows: 
\begin{equation}\label{eq:TUidNf}
\begin{aligned} 
T_{ud}^{f}&=\Bigl(1-q_{r}P(Q\neq0)\Bigr)T_{ud}^{f0}+q_{r}P(Q\neq0)T_{ud}^{f1},
\end{aligned}
\end{equation}
\begin{equation}\label{eq:TUidNb}
\begin{aligned} 
T_{ud}^{b}&=\Bigl(1-q_{r}P(Q\neq0)\Bigr)T_{ud}^{b0}+q_{r}P(Q\neq0)T_{ud}^{b1},
\end{aligned}
\end{equation}
\begin{align*} 
T_{ur}^{b}&=\Bigl(1-q_{r}P(Q\neq0)\Bigr)T_{ur}^{b0}+q_{r}P(Q\neq0)T_{ur}^{b1},\stepcounter{equation}\tag{\theequation}\label{eq:TUirNb}
\end{align*} 
where, we show the contributions to $T_{u}$ when $R$ is interfering or not. These two cases are indicated with the superscripts $0$ and $1$, respectively (e.g., $T_{ud}^{f0}$ and $T_{ud}^{f1}$). The term $P(Q=0)$ is derived in Appendix D. Note that the expression of $T_{ur}^{f}$ is not affected by the interference of $R$ because, in the following analysis, we assume perfect self-interference cancellation at $R$. This assumption is relaxed in Section~\ref{sec:FDself}, where we present results for imperfect self-interference cancellation. In order to compute the throughput components (e.g., $T_{ud}^{f0}$ and $T_{ud}^{f1}$), we follow the same reasoning that is done for $\mu_{r}$ in~\eqref{eq:muR2}. More specifically, we average the number of successfully transmitted UEs packets on all the possible interference scenarios.

Hereafter, we indicate by $m$ the number of UEs that interfere and with $i$ the number of those that use $\mathrm{FD}$ transmissions ($m-i$ UEs use the $\mathrm{BR}$ transmission). Moreover, among the interfering UEs that use $\mathrm{FD}$ transmissions, a certain number $j$ transmit to $R$ and $i-j$ to the mmAP. Thus, we obtain the following:
\begin{align*} 
T_{ud}^{f0}&=\sum_{m=0}^{N-1}\binom{N-1}{m}q_{tx}^{m}\overline{q}_{tx}^{\ N-1-m} \sum_{i=0}^{m}\binom{m}{i}q_{uf}^{i}q_{ub}^{m-i}\sum_{j=0}^{i} \binom{i}{j}q_{ur}^{j}q_{um}^{i-j}P_{ud/\{i-j\}^{f},\{m-i\}^{b}}^{f},\stepcounter{equation}\tag{\theequation}\label{eq:Tidf0}
\end{align*}
\begin{align*} 
T_{ud}^{b0}&=\sum_{m=0}^{N-1}\binom{N-1}{m}q_{tx}^{m}\overline{q}_{tx}^{\ N-1-m} \sum_{i=0}^{m}\binom{m}{i}q_{uf}^{i}q_{ub}^{m-i}\sum_{j=0}^{i} \binom{i}{j}q_{ur}^{j}q_{um}^{i-j}\times P_{ud/\{i-j\}^{f},\{m-i\}^{b}}^{b}\stepcounter{equation}\tag{\theequation}\label{eq:Tidb0},
\end{align*}  
\begin{align*} 
T_{ud}^{f1}&=\sum_{m=0}^{N-1}\binom{N-1}{m}q_{tx}^{m}\overline{q}_{tx}^{\ N-1-m}\sum_{i=0}^{m}\binom{m}{i}q_{uf}^{i}q_{ub}^{m-i}\sum_{j=0}^{i} \binom{i}{j}q_{ur}^{j}q_{um}^{i-j}P_{ud/\{i-j,r\}^{f},\{m-i\}^{b}}^{f}\stepcounter{equation}\tag{\theequation}\label{eq:Tidf1},
\end{align*}
\begin{align*} 
T_{ud}^{b1}&=\sum_{m=0}^{N-1}\binom{N-1}{m}q_{tx}^{m}\overline{q}_{tx}^{\ N-1-m}\sum_{i=0}^{m}\binom{m}{i}q_{uf}^{i}q_{ub}^{m-i}\sum_{j=0}^{i} \binom{i}{j}q_{ur}^{j}q_{um}^{i-j}P_{ud/\{i-j,r\}^{f},\{m-i\}^{b}}^{b}.\stepcounter{equation}\tag{\theequation}\label{eq:Tidb1}
\end{align*}
Finally, we derive the terms $T_{ur}^{f}$, $T_{ur}^{b0}$ and $T_{ur}^{b1}$ as follows:
\begin{align*} 
T_{ur}^{f}&=\sum_{m=0}^{N-1}\binom{N-1}{m}q_{tx}^{m}\overline{q}_{tx}^{\ N-1-m} \sum_{i=0}^{m}\binom{m}{i}q_{uf}^{i}q_{ub}^{m-i} \sum_{j=0}^{i} \binom{i}{j}q_{ur}^{j}q_{um}^{i-j}P_{ur/\{j\}^{f},\{m-i\}^{b}}^{f},\stepcounter{equation}\tag{\theequation}\label{eq:Tirf}
\end{align*} 

\begin{align*}
T_{ur}^{b0}&=\sum_{m=0}^{N-1}\binom{N-1}{m}q_{tx}^{m}\overline{q}_{tx}^{\ N-1-m}\sum_{i=0}^{m}\binom{m}{i}q_{uf}^{i}q_{ub}^{m-i} \sum_{j=0}^{i} \binom{i}{j}q_{ur}^{j}q_{um}^{i-j}P_{ur/\{j\}^{f},\{m-i\}^{b}}^{b}\overline{P}_{ud/\{i-j\}^{f},\{m-i\}^{b}}^{b},\stepcounter{equation}\tag{\theequation}\label{eq:Tirb0}
\end{align*} 

\begin{align*}
T_{ur}^{b1}&=\sum_{m=0}^{N-1}\binom{N-1}{m}q_{tx}^{m}\overline{q}_{tx}^{\ N-1-m} \sum_{i=0}^{m}\binom{m}{i}q_{uf}^{i}q_{ub}^{m-i} \sum_{i=0}^{i} \binom{i}{j}q_{ur}^{j}q_{um}^{i-j}P_{ur/\{j\}^{f},\{m-i\}^{b}}^{b}\overline{P}_{ud/\{i-j,r\}^{f},\{m-i\}^{b}}^{b}.\stepcounter{equation}\tag{\theequation}\label{eq:Tirb1}
\end{align*} 

\subsection{Delay Analysis}
\label{sec:Del}
We now compute the average delay for a packet that is in the head of the queue of a UE. The delay is constituted of three components: i) the transmission delay (i.e., on the links UE-mmAP, UE-R, and R-mmAP), ii) the queueing delay at the relay $D_{q}$, and iii) the beam alignment phase duration $D_{a}$. After a successful transmission, a new packet arrives at the head of the queue. At this point, as explained in Section~\ref{sec:NM}, the UE decides to transmit the packet with probability~$q_{u}$. 

Depending on the selected transmission strategies in the current timeslot and in the previous transmission attempt, the packet can be subject to different delays, $D_{i}$ with $i \in \mathcal{S}$, where, $\mathcal{S}=\{fm,fr,b\}$ and $P(s=fm)$, $P(s=fr)$, and $P(s=b)$ are defined in Section~\ref{sec:NM}. Given that the probability distributions of the transmission strategy selection $P(s=i)$ are independent and identical distributed (i.i.d.) for each timeslot, we can write the probability to use the $i$-th strategy in timeslot $k$ -- conditioned to using the $j$-th strategy in timeslot $h$ -- as follows: $P(s_{k}=i \cap s_{h}=j)=P(s_{k}=i)P(s_{h}=j)=P(s=i)P(s=j)$. Thus, we can express the average delay per packet as follows:
\begin{align*}
D=\sum_{i\in\mathcal{S}}P(s=i)\Bigl(D_{i}+(1-P(s=i))D_{a}\Bigr)\stepcounter{equation}\tag{\theequation}\label{eq:D1},
\end{align*} 

Then, we compute the terms $D_{i}$ of \eqref{eq:D1}, which are given by:
\begin{flalign*}
D_{fm}&=q_{u}T_{ud}^{f}+q_{u}\Bigl(1-T_{ud}^{f}\Bigr)\Bigl(1+q_{uf}q_{um}D_{fm}+q_{uf}q_{ur}(D_{a}+D_{fr})+q_{ub}(D_{a}+D_{b})\Bigr)\\
&+\overline{q}_{u}\Bigl(1+D_{fm}\Bigr), &\stepcounter{equation}\tag{\theequation}\label{eq:Dfd2}
\end{flalign*} 
\begin{flalign*}
D_{fr}&=q_{u}T_{ur}^{f}(1+D_{r})+q_{u}\Bigl(1-T_{ur}^{f}\Bigr)\Bigl(1+q_{uf}q_{ur}D_{fr}+q_{uf}q_{um}(D_{a}+D_{fm})+q_{ub}(D_{a}+D_{b})\Bigr)& \\
&+\overline{q}_{u}\Bigl(1+D_{fr}\Bigr), &\stepcounter{equation}\tag{\theequation}\label{eq:Dfr2}
\end{flalign*} 
\begin{flalign*}
D_{b}&=q_{u}T_{ud}^{b}+q_{u}T_{ur}^{b}(1+D_{r})+q_{u}\Bigl(1-T_{ud}^{b}-T_{ur}^{b}\Bigr) &\\
&\times\Bigl(1+q_{ub}D_{b}+q_{uf}q_{um}(D_{a}+D_{fm})+q_{uf}q_{ur}(D_{a}+D_{fr})\Bigr)+\overline{q}_{u}\Bigl(1+D_{b}\Bigr), & \stepcounter{equation}\tag{\theequation}\label{eq:Db2}
\end{flalign*} 
where, $T_{ud}^{f}$, $T_{ud}^{b}$, $T_{ur}^{f}$ and $T_{ur}^{b}$ are several contributions to the conditioned per-user throughput $T_{u}$ that are given in \ref{sec:Thr}. Since a UE transmits at most one packet per timeslot, $T_{u}$ can be also interpreted as the probability that a packet is successfully transmitted by a UE. The term $D_{r}$ is the total delay at the relay that is defined as the time when the packet entering the relay queue reaches the mmAP and it is given by:
\begin{align*}
D_{r}=D_{q}+\frac{1}{\mu_{r}} =\frac{\overline{Q}}{\lambda_{r}} + \frac{1}{\mu_{r}} \stepcounter{equation}\tag{\theequation}\label{eq:D32}.
\end{align*}
where, $D_{q}$ in~\eqref{eq:D32} is the queueing delay at the relay. The latter is the time when the packet being received by the relay reaches the head of its queue and it is computed by using the Little's law. More precisely, $\overline{Q}$ represents the average relay queue size and $\lambda_{r}$ the average arrival rate, which are given in Appendix D. Finally, by considering~\eqref{eq:D32} and replacing~\eqref{eq:Dfd2}, ~\eqref{eq:Dfr2}, and~\eqref{eq:Db2} in~\eqref{eq:D1}, the average delay per packet $D$ can be written as follows:
\begin{align*}
D&=\frac{1+q_{u}D_{r}\Bigl(q_{uf}q_{ur}T_{ur}^{f}+q_{ub}T_{ur}^{b}\Bigr)+D_{a}q_{u}C}{q_{u}T_{u}},\stepcounter{equation}\tag{\theequation}\label{eq:Delay2}
\end{align*}
\begin{align*}
C&=1+q_{uf}^{2}q_{um}^{2}\Bigl(T_{ud}^{f}-T_{u}-1\Bigr)+q_{uf}^{2}q_{ur}^{2}\Bigl(T_{ur}^{f}-T_{u}-1\Bigr)+q_{ub}^{2}\Bigl(T_{ud}^{b}+T_{ur}^{b}-T_{u}-1\Bigr).  \stepcounter{equation}\tag{\theequation}\label{eq:C}
\end{align*}

\section{Numerical \& Simulation Results}
\label{sec:Res}
In this section, we provide a numerical evaluation of the performance analysis derived for throughput and delay. Furthermore, we assess the validity of the analysis by comparing the numerical results of the analytical model with simulations. To compute the path loss and the LOS and NLOS probabilities, we use the 3GPP model for urban micro cells in outdoor street canyon environment~\cite{3GPP}, whose path loss term includes a lognormal shadowing whose variance depends on whether the link is in LOS or NLOS. Moreover, the path loss depends on the height of the mmAP, $10$\,m, the height of the UE, $1.5$\,m, the carrier frequency, $f_{c} = 30$\,GHz, and the distance between the transmitter and the receiver. The transmit power and the noise power are set to $P_{t} = 24$\,dBm and $P_{N} = -80$\,dBm, respectively. Furthermore, we consider a scenario where the relay $R$ is chosen to be a node that is placed in a position that guarantees the LOS with the mmAP, therefore we assume that $P(\text{LOS}_{rd})=1$. Then, the SINR in~\eqref{eq:SINR} and the success probability in~\eqref{eq:prob_succA} are numerically computed by considering $100,000$ instances of the lognormal shadowing. This success probability represents the input for both the numerical evaluations of the analytical model and simulations results, which are computed over $100,000$ timeslots. 

Moreover, unless otherwise specified, we set $d_{ur} = 30$\,m, $d_{ud} = 50$\,m, $\gamma = 10$\,dB, $\alpha = 0.1$\, and, in case of  $\mathrm{FD}$ transmissions, $\theta_{BW}=\theta_{BW}^{f}=5^\circ$. Instead, when a $\mathrm{BR}$ transmission is used, we set $\theta_{BW}=\theta_{BW}^{b}=\theta_{rd}$, which is the angle between the mmAP and $R$ with the UE as vertex. Throughout this section, we use solid lines for numerical evaluations of the analytical model and dotted lines for the simulation results.

In Fig.~\ref{fig:T_N0} and Fig.~\ref{fig:T_N5}, we show the throughput, $T$, while varying the number of UEs ($N$) for several UE transmit probability values, i.e., $q_{u}$, when $D_{a}=0$ and $D_{a}=5$, respectively. For both the cases, we can observe that the analytical model and the simulations almost coincide. Furthermore, in Fig.~\ref{fig:T_N0}, we can observe that for $q_{u} = 0.1$ the throughput is an increasing function of $N$. In contrast, for $q_{u}= 0.5$ and $q_{u}= 0.9$ the curves have non-monotonic behaviors. Indeed, after that the throughput reaches the maximum (at $N= 6$ and $N= 3$ for $q_{u}= 0.5$ and $q_{u}= 0.9$, respectively), increasing $N$ causes a decrease in $T$. Namely, high values of $N$ and $q_{u}$ lead to high interference that decreases the number of packets successfully received by $R$ and the mmAP. In Fig.~\ref{fig:T_N5}, the larger value of the beam alignment delay, $D_{a}=5$, decreases the transmit probability that causes a decrease of the interference. This explains the monotonic or quasi-monotonic behaviors of the throughput in Fig.~\ref{fig:T_N5}, in which, however, the maximum value of $T$ is lower with respect to Fig.~\ref{fig:T_N0}.
\begin{figure}[!tbp]
  \begin{subfigure}[b]{0.45\textwidth}
    \includegraphics[width=8cm]{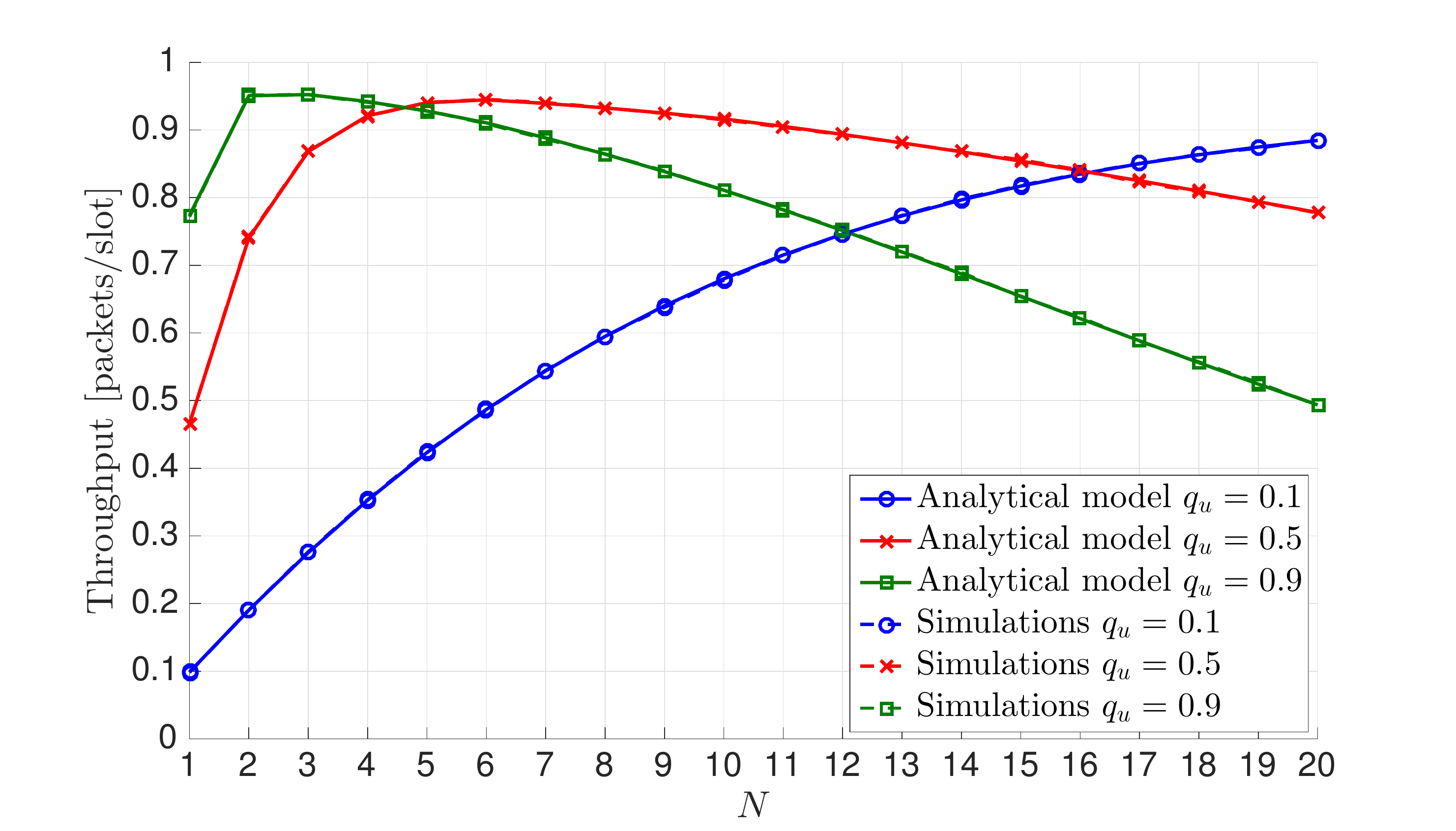}
    \caption{$D_{a}=0$}
    \label{fig:T_N0}
  \end{subfigure}
  \hfill
  \begin{subfigure}[b]{0.45\textwidth}
    \includegraphics[width=8cm]{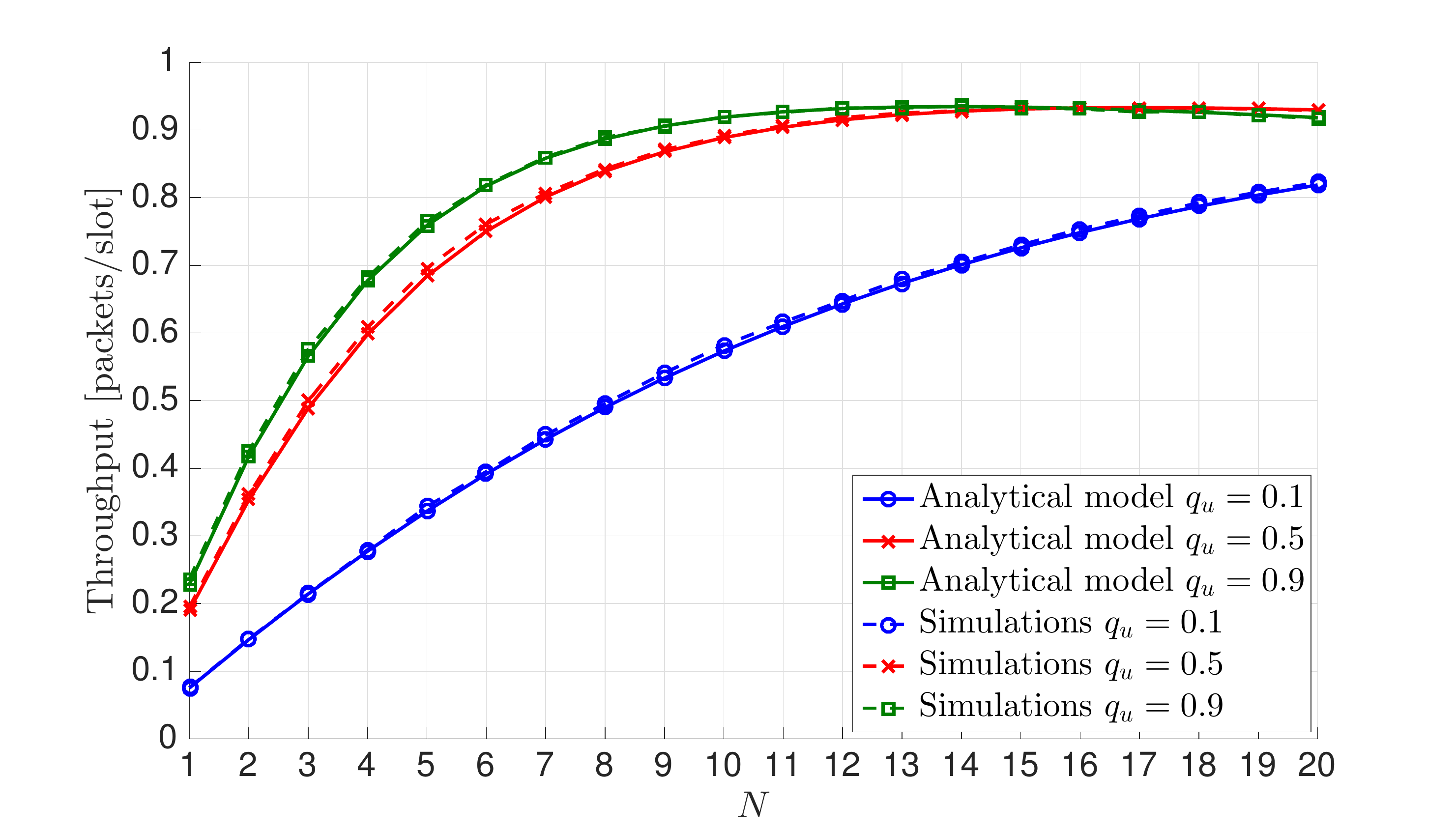}
    \caption{$D_{a}=5$}
    \label{fig:T_N5}
  \end{subfigure}
  \caption{Throughput, $T$, while varying the number of UEs for several UE transmit probability values, i.e., $q_{u}$, when a) $D_{a}=0$ and b) $D_{a}=5$, respectively.}
\end{figure}
\begin{figure}[!tbp]
  \begin{subfigure}[b]{0.45\textwidth}
    \includegraphics[width=8cm]{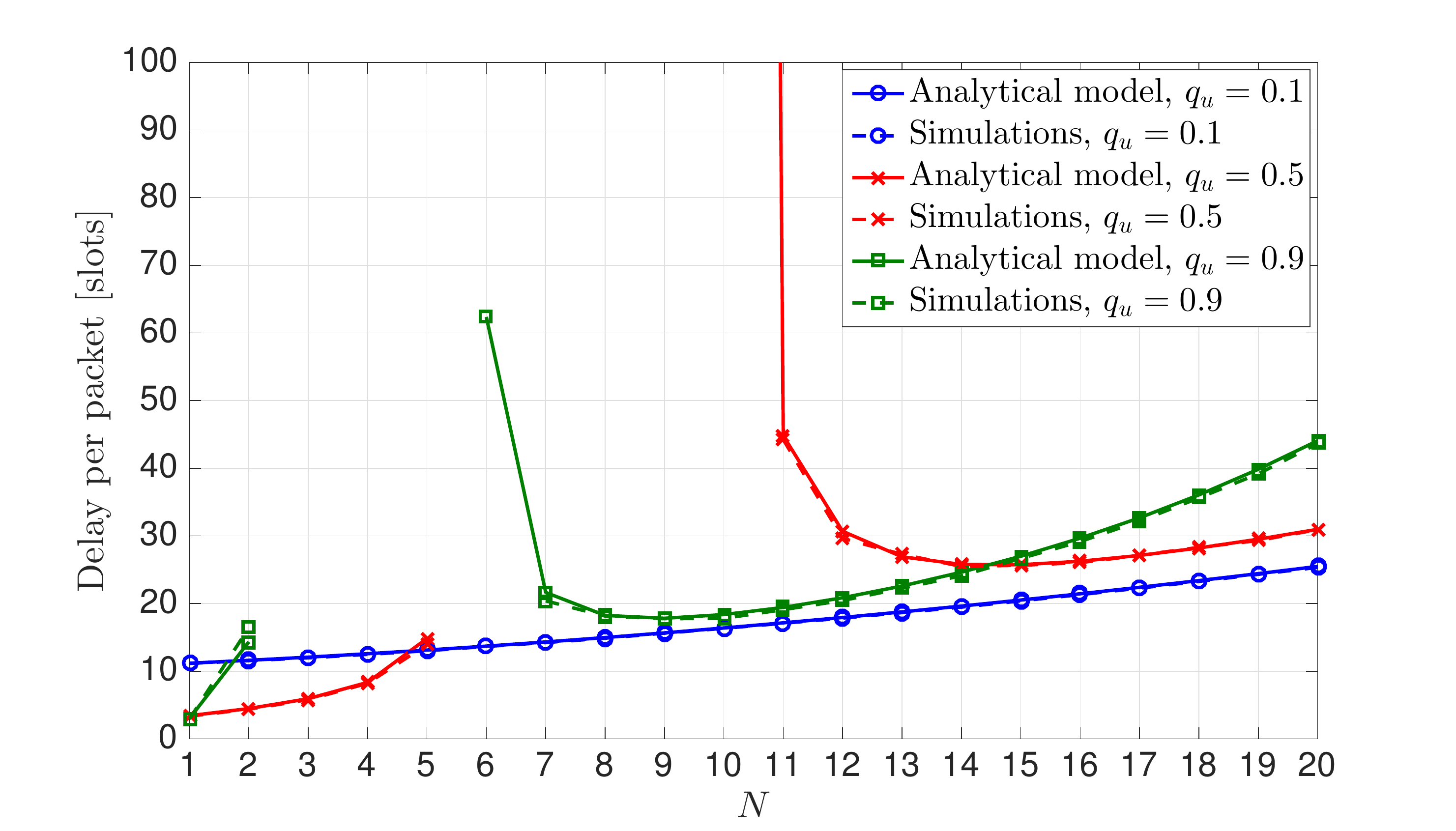}
    \caption{$D_{a}=0$.}
    \label{fig:D_N0}
  \end{subfigure}
  \hfill
  \begin{subfigure}[b]{0.45\textwidth}
    \includegraphics[width=8cm]{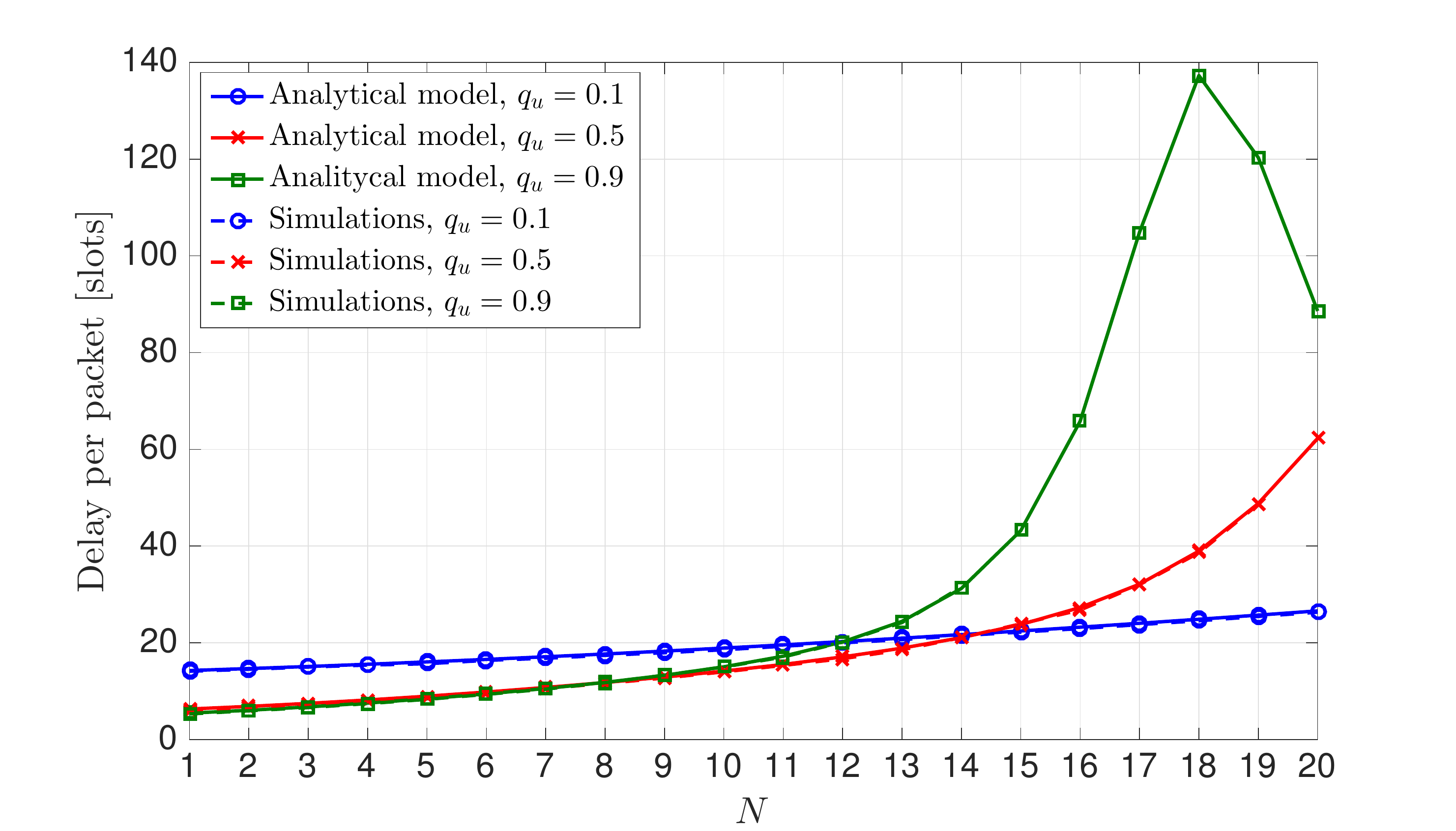}
    \caption{$D_{a}=5$.}
    \label{fig:D_N5}
  \end{subfigure}
  \caption{Delay per packet, $D$, while varying the number of UEs for several UE transmit probability values, i.e., $q_{u}$, when a) $D_{a}=0$ and b) $D_{a}=5$, respectively.}
\end{figure}

Now, by considering the same system parameters of Fig.~\ref{fig:T_N0} and Fig.~\ref{fig:T_N5}, we show the results for the delay in Fig.~\ref{fig:D_N0} and Fig.~\ref{fig:D_N5}. In these figures, we can identify the regions for which the queue at the relay is stable ($\lambda_{r} < \mu_{r}$), unstable ($\lambda_{r}>\mu_{r}$) and instable ($\lambda_{r} = \mu_{r}$). In this latest case, the arrival rate $\lambda_{r}$ is still below the service rate $\mu_{r}$, but very close to it. The three regions can be easily distinguished. Namely, for the case of instable queue we report only the analytical results (since simulation results are meaningless), for the unstable queue we do not report any results, because the delay increases towards infinity, and only for the stable case we report both analytical and simulation results. In Fig.~\ref{fig:D_N0}, for $q_{u}=0.1$, there is neither unstability nor instability regions. In contrast, for $q_{u}= 0.5$ and $q_{u}= 0.9$ the queue becomes unstable at $N= 6$ and $N= 3$, respectively, which is also approximately the point at which the throughput reaches its maximum. Furthermore, as explained for Fig.~\ref{fig:T_N0} and Fig.~\ref{fig:T_N5}, increasing $N$ causes a higher interference and a smaller arrival rate at the relay, whose queue becomes again stable at $N= 11$ and $N= 7$ for $q_{u}= 0.5$ and $q_{u}= 0.9$, respectively. Whereas, at  $N= 10$ ($q_{u}= 0.5$) and $N= 6$ ($q_{u}= 0.5$), we can clearly observe the region for which the queue is instable, $\lambda_{r} \approx \mu_{r}$, where the delay values is finite, but very high.
 \begin{figure}[!tbp]
  \begin{subfigure}[b]{0.45\textwidth}
    \includegraphics[width=8cm]{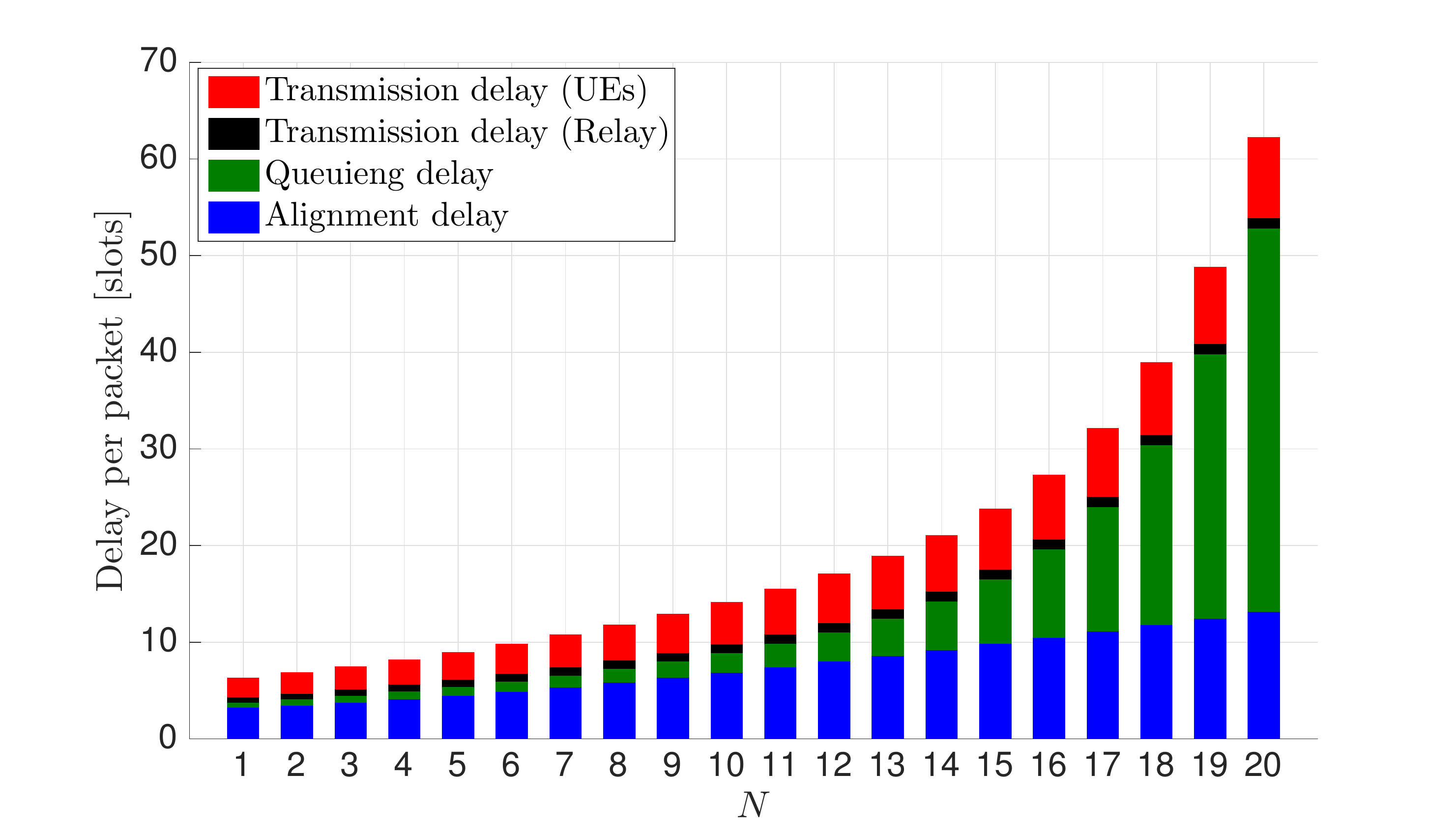}
    \caption{$\gamma=10$ dB.}
    \label{fig:D_C10}
  \end{subfigure}
  \hfill
  \begin{subfigure}[b]{0.45\textwidth}
    \includegraphics[width=8cm]{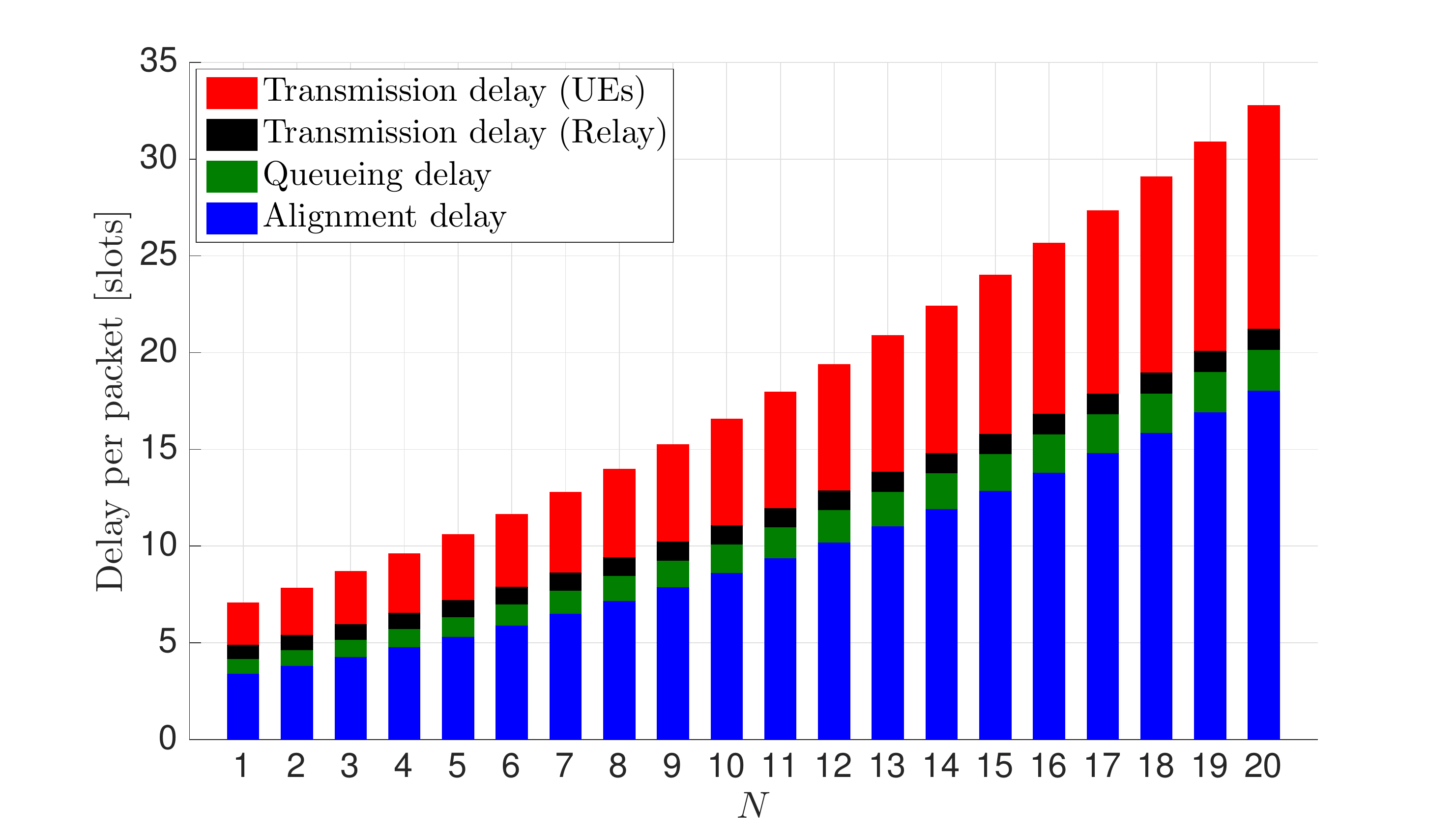}
    \caption{$\gamma=15$ dB.}
    \label{fig:D_C15}
  \end{subfigure}
  \caption{Delay per packet while varying the number of UEs, $N$, with $q_{u}=0.5$ and $D_{a}=5$.}
\end{figure}


The transmit probability $q_{tx}$ and the arrival packet rate at the relay decrease by increasing the value of the beam alignment delay to $D_{a}=5$. The effects of this can be observed in Fig.~\ref{fig:D_N5}, where the queue is never unstable and the instability regions change as well. Namely, the instability region for $q_{u}=0.5$ is visible at $N=20$, whereas, for $q_{u}=0.9$ it is between $N=15$ and $N=20$. For regions far from instability, the analytical model and the simulations almost coincide and the delay increases with the increasing number of UEs for all the curves. 

The increasing trend of the delay is caused by two main reasons: i) an increasing number of packets inside the queue and ii) increasing interference (that reduces the success probability of transmission). 
For a better understanding of this phenomenon, we show in Fig.~\ref{fig:D_C10} the delay per packet as sum of its components, i.e., UE's and relay transmission delays and queueing and alignment delays, for the red curve shown in Fig.~\ref{fig:D_N5} ($q_{u}=0.5$, $D_{a}=5$, and $\gamma=10$ dB). 
We can observe that close to the instability region ($N=20$) the biggest delay component is given by the queueing delay, whereas the transmission delays (both UE and $R$) as well as the alignment delay are barely increasing with $N$.  A different behavior can be observed in Fig.~\ref{fig:D_C15}, where the same scenario of Fig.~\ref{fig:D_C10} is considered, but for a higher SINR threshold, i.e., $\gamma=15$\,dB. In this case, the higher value of $\gamma$ reduces the success probability of transmission and the arrivals at the relay. Thus, the queueing delay does not represent anymore the main issue, nor does the relay transmission delay. In contrast, the increased unsuccessful transmission attempts make the packets waiting for being transmitted for most of the time inside the UEs' queue that increases the UE transmission and the alignment delays. This is also due to the fact that, as explained in Section \ref{sec:Ass}, after a transmission attempt the UE can change transmission strategy. 

\begin{figure}[!tbp]
  \centering
  \begin{minipage}[b]{0.45\textwidth}
    \includegraphics[width=8cm]{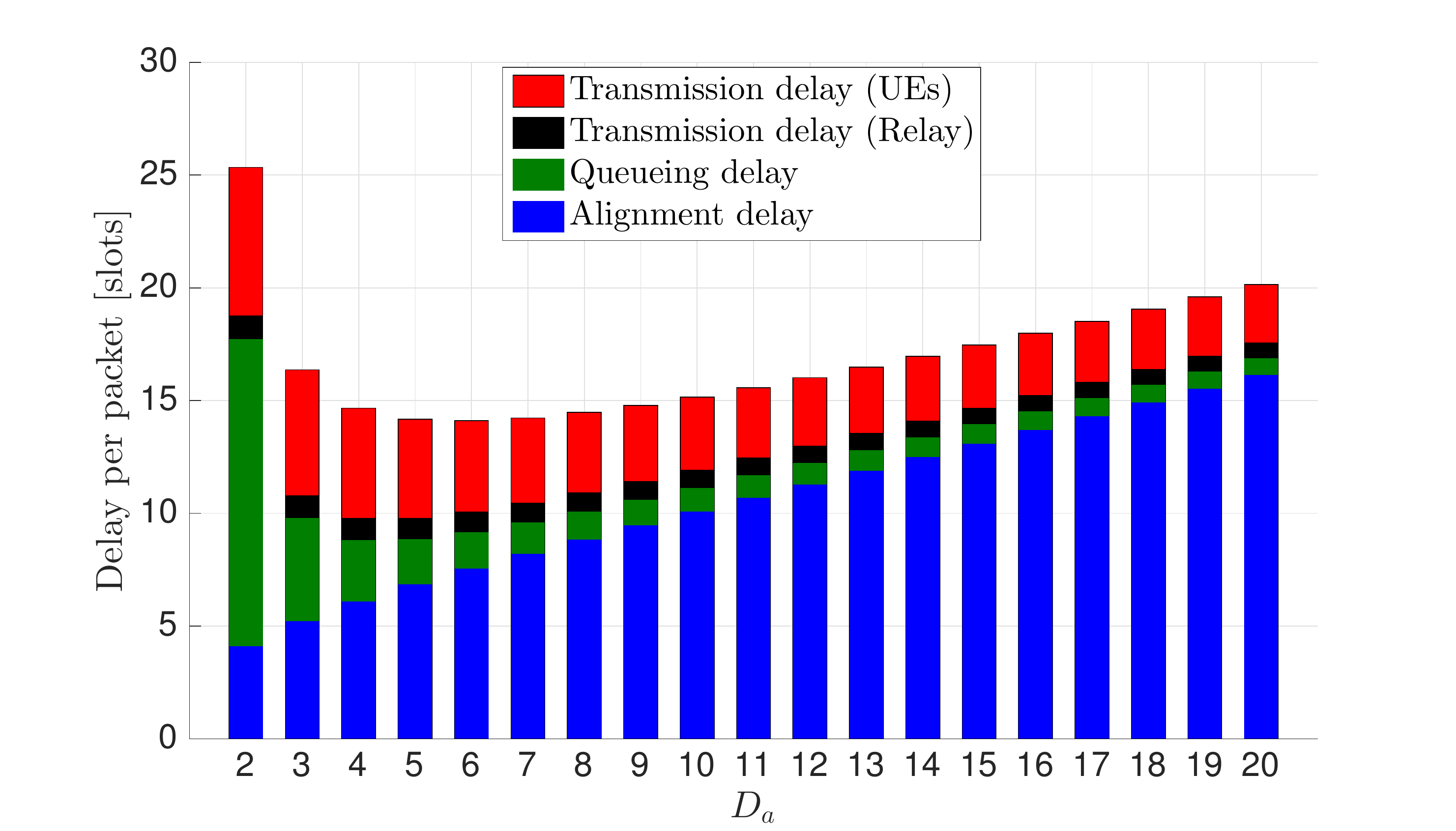}
    \caption{Delay per packet while varying the beam alignment duration, $D_{a}$, with $q_{u}=0.5$ and $N_{a}=10$.}
    \label{fig:D_Da}
  \end{minipage}
  \hfill
  \begin{minipage}[b]{0.45\textwidth}
    \includegraphics[width=8cm]{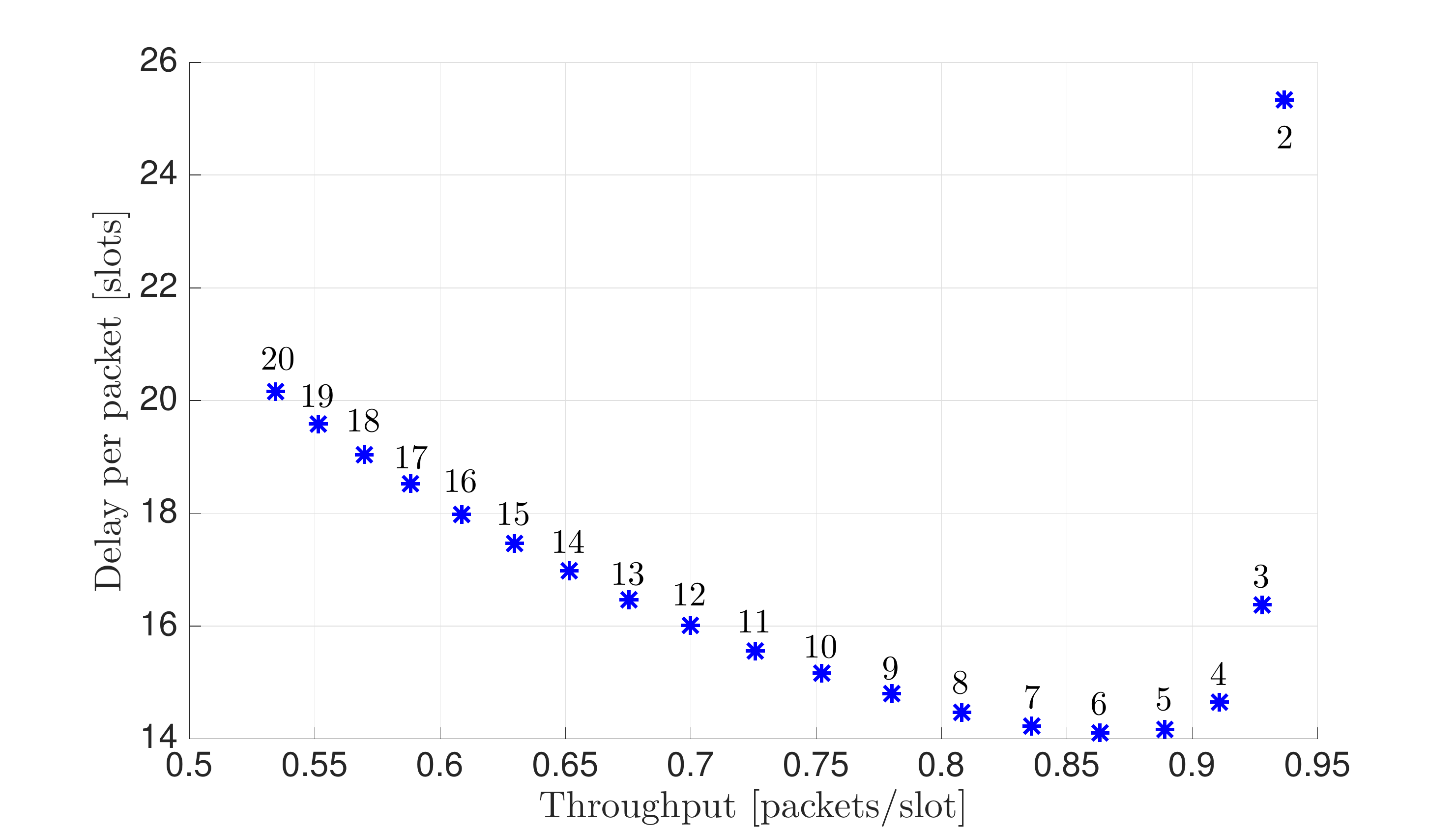}
    \caption{Throughput vs. delay tradeoff for several values of $D_{a}$, with $q_{u}=0.5$ and $N_{a}=10$. For each point we report the corresponding value of $D_{a}$.}
    \label{fig:T_D}
  \end{minipage}
\end{figure}
In Fig.~\ref{fig:D_Da}, we show the impact of the beam alignment duration $D_{a}$ on the delay components. We set the number of UEs $N=10$, $q_{u}=0.5$, and $\gamma=10$\ dB while increasing the value of $D_{a}$. First, we can observe that the delay has a non-monotonic behavior. More precisely, for $D_{a}=0$ and $D_{a}=1$ the queue is instable and unstable, respectively, and we do not report any value. Then, the delay decreases at first, mostly because the increased $D_{a}$ reduces the transmit probability and the arrival rate at the relay queue. This is confirmed by the decrease in the queueing delay that represents the highest delay component for lower values of $D_{a}$. However, above a certain value of  $D_{a}$, the delay start increasing again mostly because the alignment delay increases. In contrast to the delay, the throughput has a slightly different behavior. In Fig.~\ref{fig:T_D}, we show the throughput and delay tradeoff for several values of $D_{a}$ and it can be observed that the throughput monotonically decreases.

\subsection{Optimal Transmission Strategy}
\label{sec:OptRes}
Hereafter, we set $q_{u}= 0.1$ and $N=10$ (i.e., the queue is always stable) and we study the effect of the two transmission strategies ($\mathrm{FD}$ and $\mathrm{BR}$) on the throughput and the delay. In Fig.~\ref{fig:T_quf_0}, we set $q_{ur}= 0.5$ and show the aggregate throughput, $T$, while varying the probability of using the $\mathrm{FD}$ transmission, $q_{uf}$, and $\theta_{rd}$, for $D_{a}=0$. The solid blue line shows the values of $q_{uf}$ that maximizes the throughput for each value of $\theta_{rd}$. Namely, for small values of $\theta_{rd}$, the $\mathrm{BR}$ transmission is preferable (corresponding to small values of $q_{uf}$). In this case, we can use a narrow beam with high beamforming gain to transmit simultaneously to $R$ and the mmAP. In contrast, for higher values of $\theta_{rd}$, the optimal value of $q_{uf}$ becomes $1$, which corresponds to using the $\mathrm{FD}$ transmission. For $D_{a}=5$, in Fig.~\ref{fig:T_quf_5}, we have almost the same behavior. However, the selection of the best strategy is either $q_{uf}=0$ or $q_{uf}=1$, since the number of beam alignments is minimized and $q_{tx}$ maximized when $q_{uf}=0$ or $q_{uf}=1$. 
\begin{figure}[!tbp]
  \begin{subfigure}[b]{0.45\textwidth}
    \includegraphics[width=8cm]{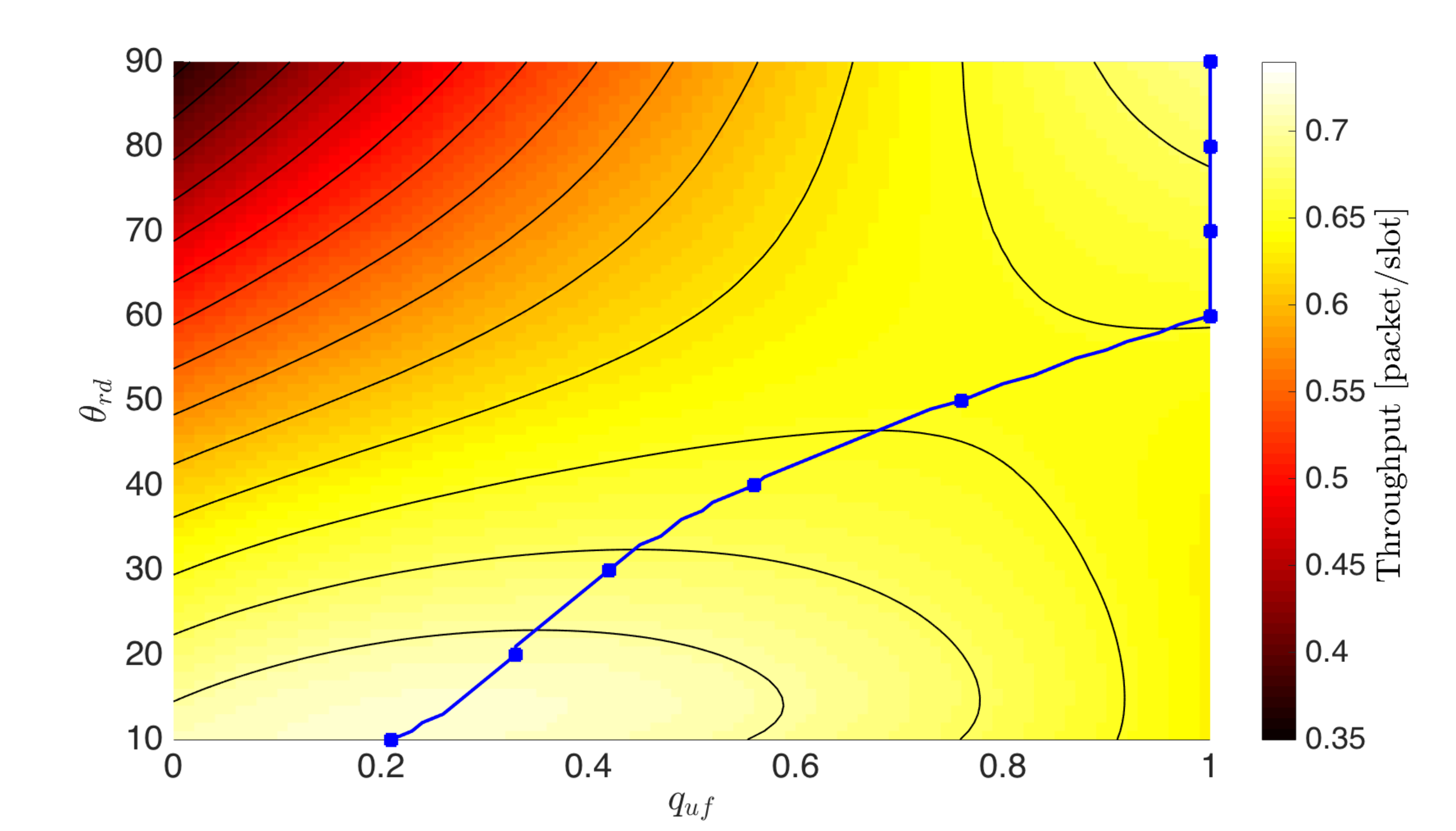}
    \caption{$D_{a}=0$.}
    \label{fig:T_quf_0}
  \end{subfigure}
  \hfill
  \begin{subfigure}[b]{0.45\textwidth}
    \includegraphics[width=8cm]{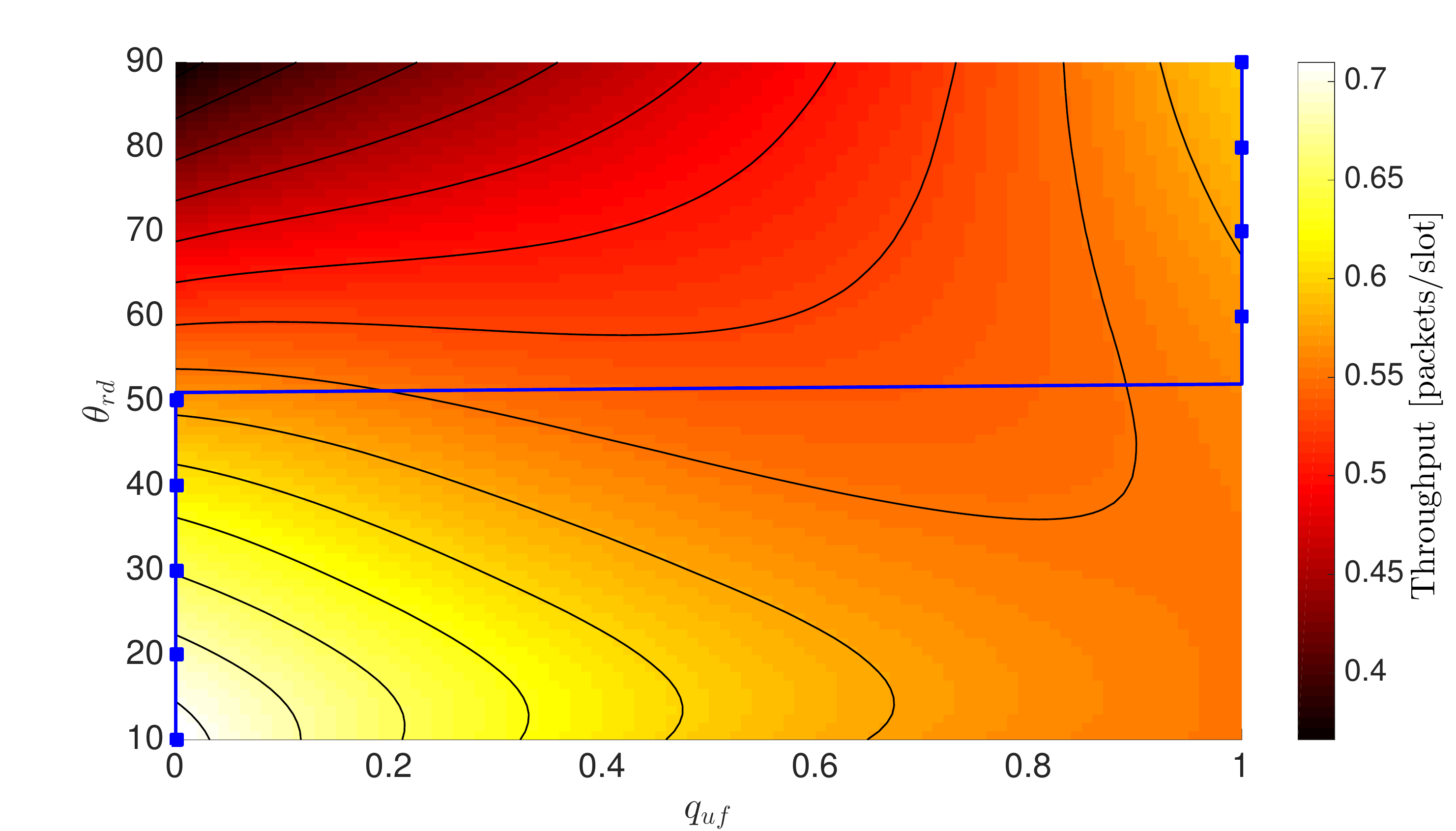}
    \caption{$D_{a}=5$.}
    \label{fig:T_quf_5}
  \end{subfigure}
  \caption{Throughput, $T$, while varying $q_{uf}$ and $\theta_{rd}$, with $q_{ur}=0.5$. For each value of $\theta_{rd}$, we represent with a blue solid line the value of $q_{uf}$ that maximizes the throughput.}
\end{figure}
\begin{figure}[!tbp]
  \begin{subfigure}[b]{0.45\textwidth}
    \includegraphics[width=8cm]{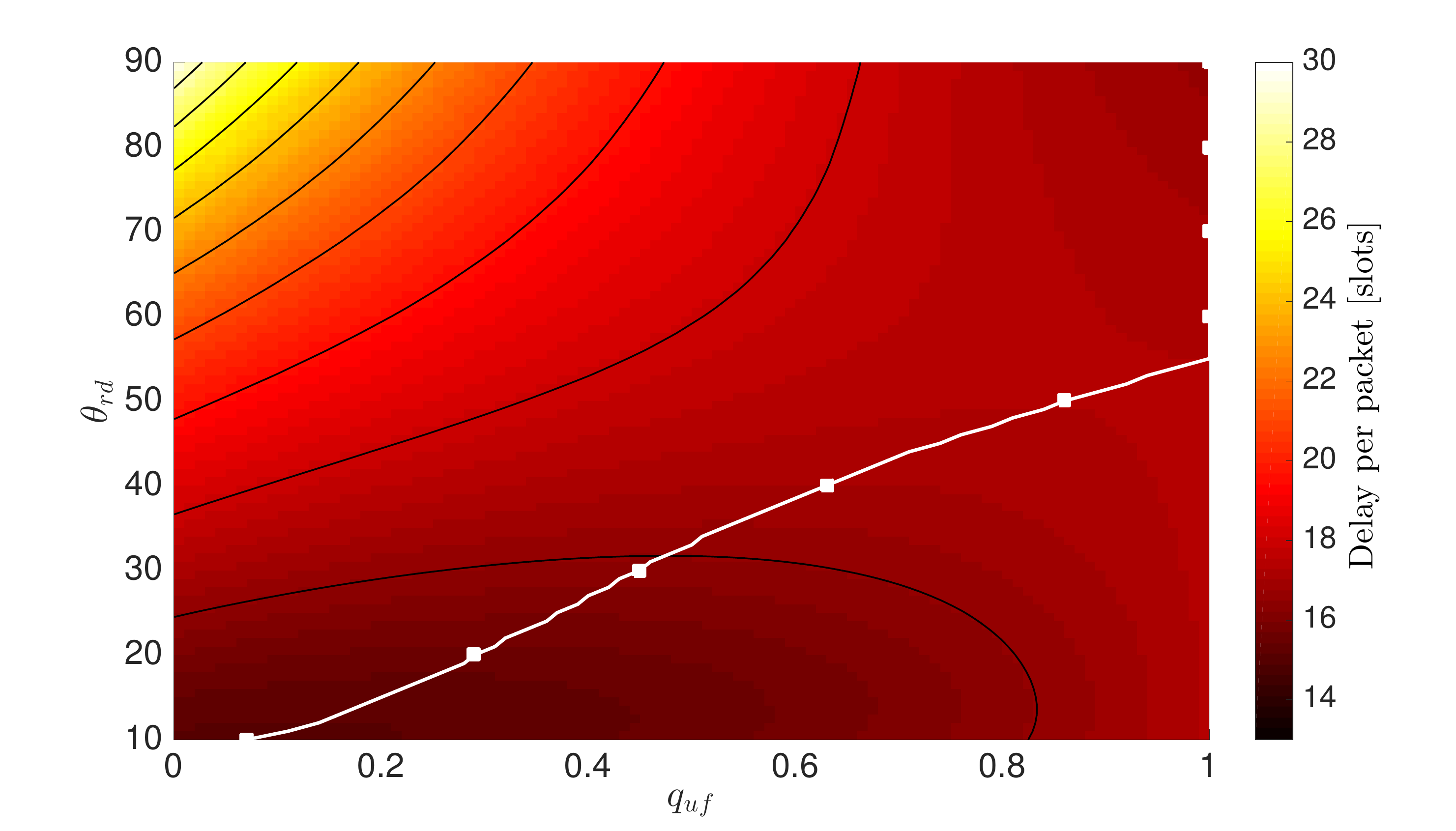}
    \caption{$D_{a}=0$.}
    \label{fig:D_quf_0}
  \end{subfigure}
  \hfill
  \begin{subfigure}[b]{0.45\textwidth}
    \includegraphics[width=8cm]{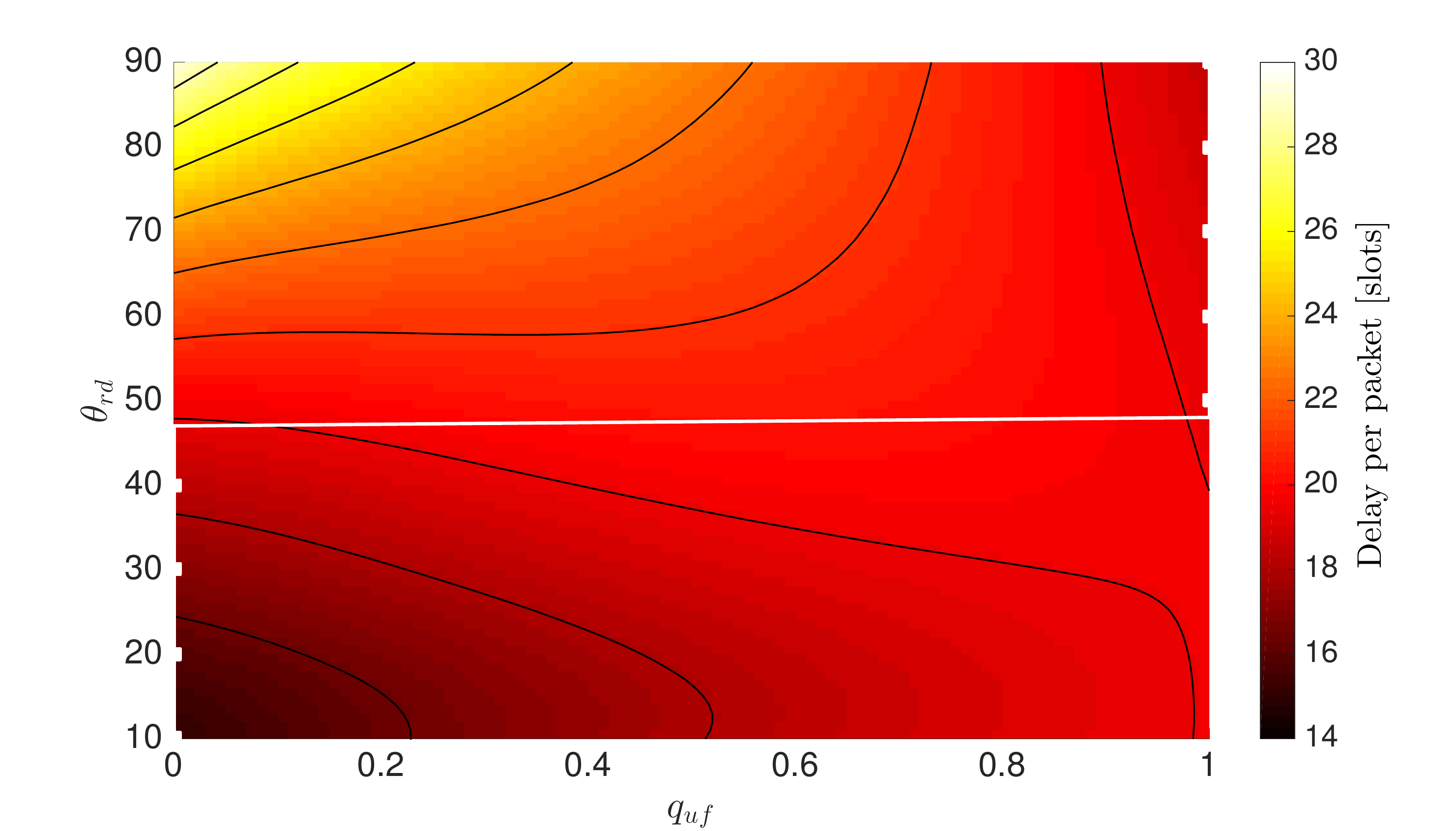}
    \caption{$D_{a}=5$.}
    \label{fig:D_quf_5}
  \end{subfigure}
  \caption{Delay per packet, $D$, while varying $q_{uf}$ and $\theta_{rd}$, with $q_{ur}=0.5$. For each value of $\theta_{rd}$, we represent with a white solid line the value of $q_{uf}$ that minimizes the delay.}
\end{figure} 
In Fig.~\ref{fig:D_quf_0} and Fig.~\ref{fig:D_quf_5} we show the delay for the same setting of Fig.~\ref{fig:T_quf_0} and Fig.~\ref{fig:T_quf_5}, respectively, while varying $q_{uf}$, and $\theta_{rd}$. For this scenario, where the queue is stable, the highest contributions to the delay are given by the transmission and alignment delays. For this reason the strategy that minimizes the delay, which is shown with a white solid line, follows almost the same behavior of the transmission strategy that maximizes the throughput.

\begin{figure}[!tbp]
  \begin{subfigure}[b]{0.45\textwidth}
    \includegraphics[width=8cm]{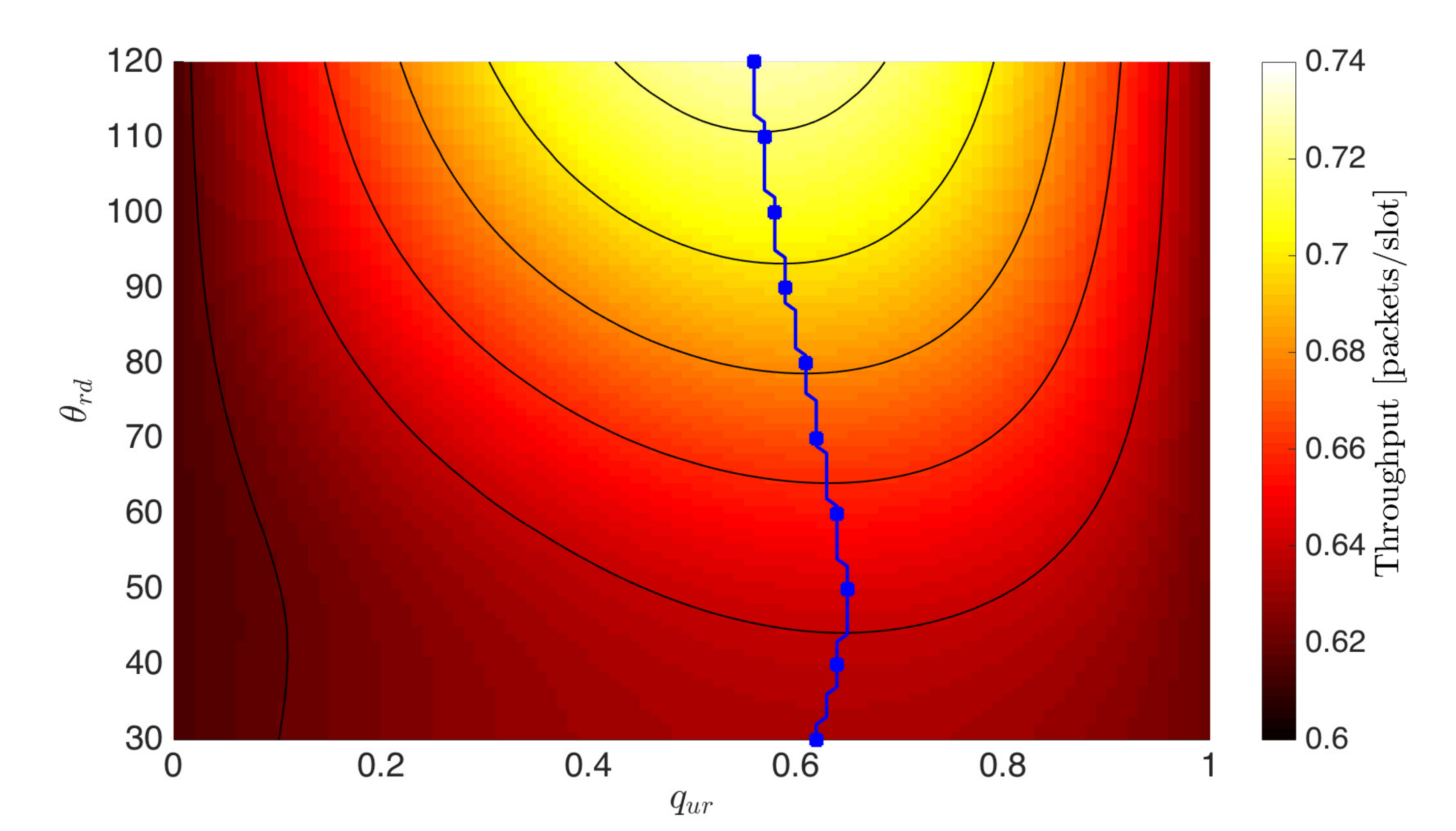}
    \caption{$D_{a}=0$.}
    \label{fig:T_qur_0}
  \end{subfigure}
  \hfill
  \begin{subfigure}[b]{0.45\textwidth}
    \includegraphics[width=8cm]{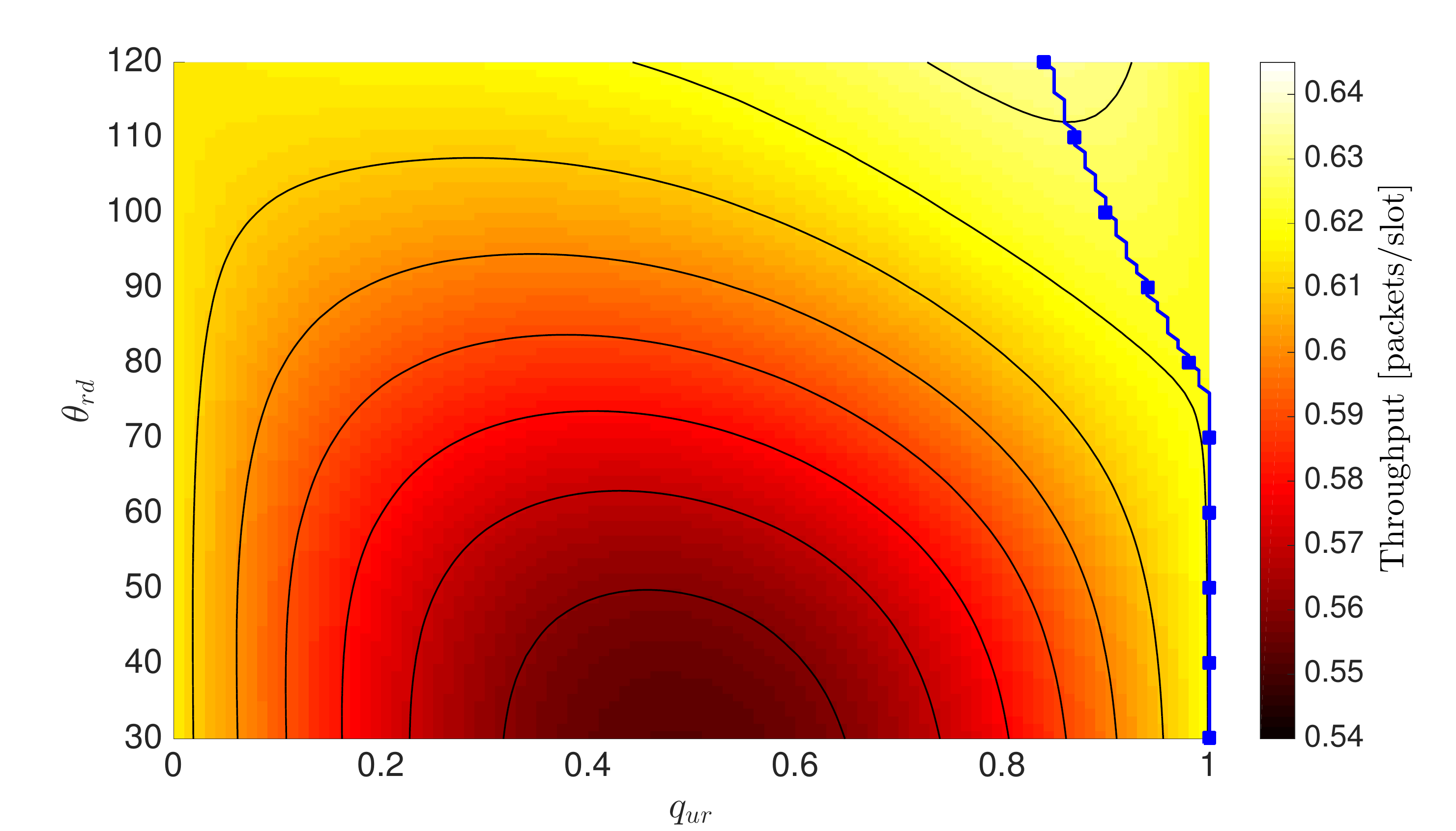}
    \caption{$D_{a}=5$.}
    \label{fig:T_qur_5}
  \end{subfigure}
  \caption{Throughput, $T$, while varying $q_{ur}$ and $\theta_{rd}$ with $q_{uf}=1$. For each value of $\theta_{rd}$, we represent with a solid blue line the value of $q_{ur}$ that maximizes the throughput.}
	\label{fig:T_qur}
\end{figure}
\begin{figure}[!tbp]
  \begin{subfigure}[b]{0.45\textwidth}
    \includegraphics[width=8cm]{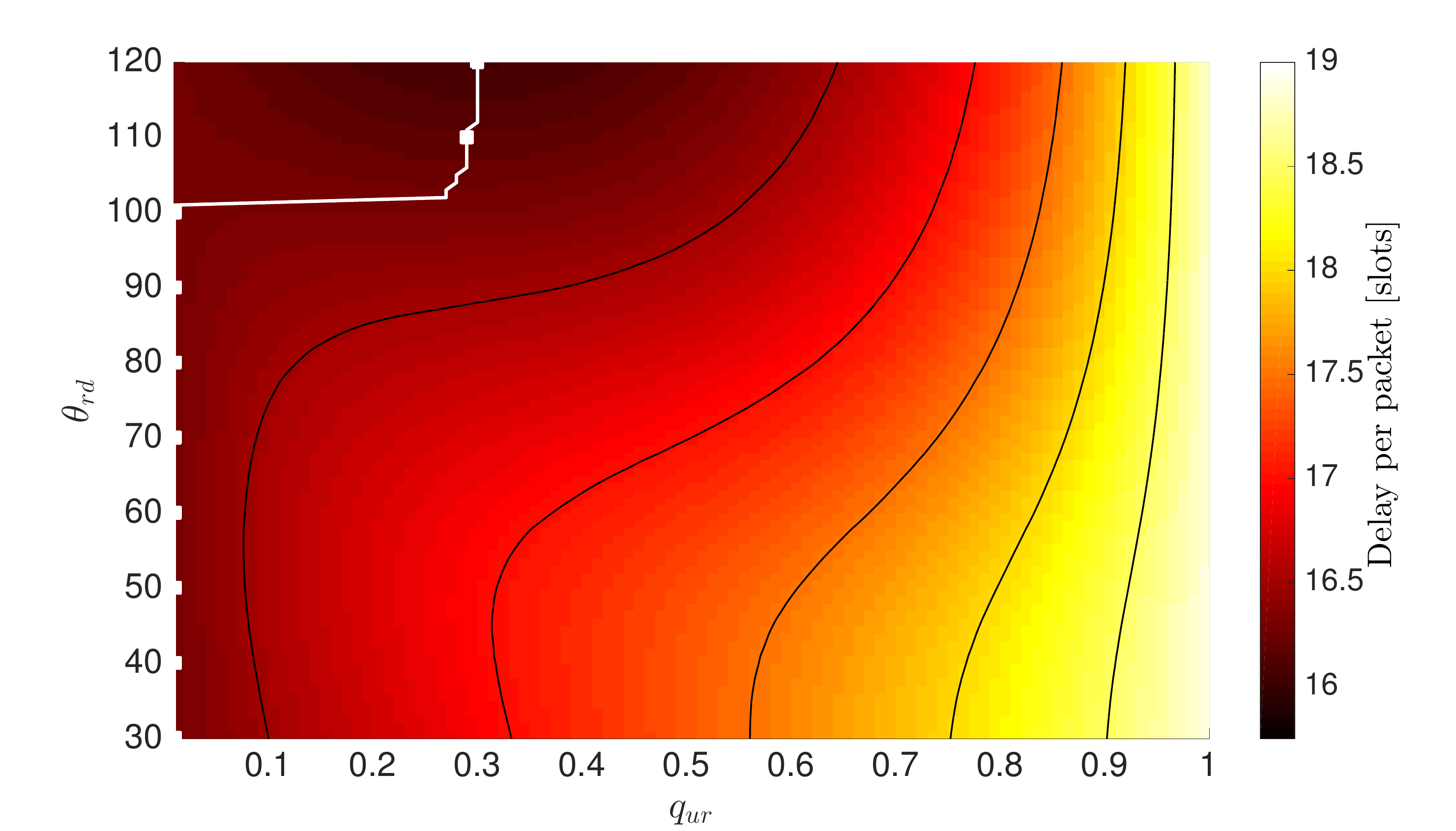}
    \caption{$D_{a}=0$.}
    \label{fig:D_qur_0}
  \end{subfigure}
  \hfill
  \begin{subfigure}[b]{0.45\textwidth}
    \includegraphics[width=8cm]{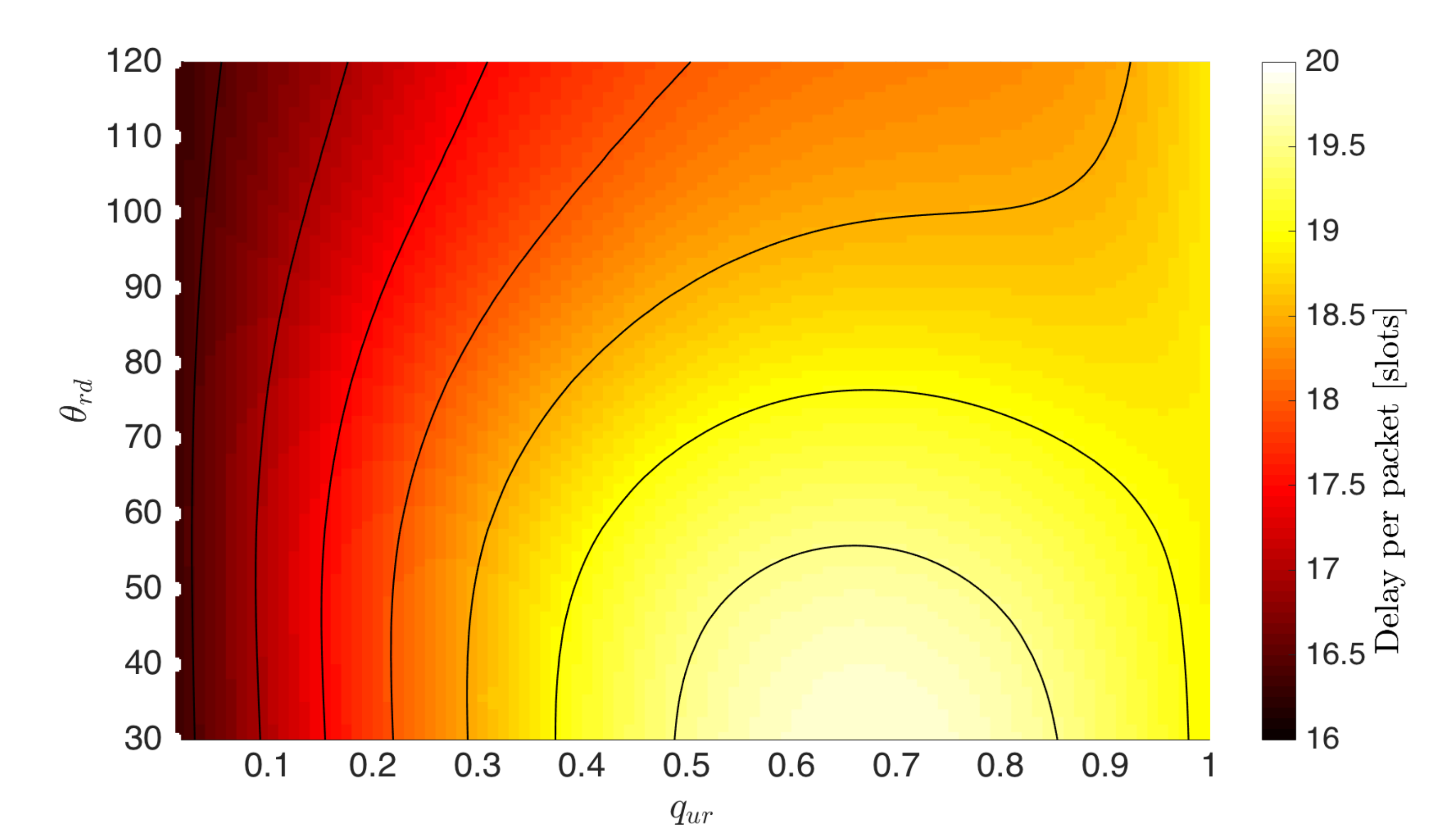}
    \caption{$D_{a}=5$.}
    \label{fig:D_qur_5}
  \end{subfigure}
  \caption{Delay per packet, $D$, while varying $q_{ur}$ and $\theta_{rd}$, with $q_{uf}=1$. For each value of $\theta_{rd}$, we represent with a white solid line the value of $q_{ur}$ that minimizes the delay.}
	\label{fig:D_qur}
\end{figure}

In Fig.~\ref{fig:T_qur}, we show the throughput $T$ while varying the probability to transmit at the relay $q_{ur}$ and $\theta_{rd}$, when the $\mathrm{FD}$ transmission is used, i.e., $q_{uf} = 1$. 
Note that larger values of $\theta_{rd}$ correspond to longer distances between $R$ and the mmAP, i.e., $d_{rd}$ (see Fig.~\ref{fig:FB}).  Thus, when $\theta_{rd}$ increases, the interference of the relay on the transmission to the mmAP decreases. As explained in Section~\ref{sec:Ass}, the relay is in LOS with the mmAP and uses always the $\mathrm{FD}$ transmission with high beamforming gain. Such high gain can cause high interference at the receiver side of the mmAP. As results of this, we can observe higher throughput for larger values of $\theta_{rd}$. Indeed, packets that are successfully transmitted by the relay are barely affected by increasing the distance between $R$ and the mmAP. In contrast, the packets that are successfully transmitted by UEs to the mmAP increases for wider $\theta_{rd}$ because the interference caused by $R$ decreases. 
In Fig.~\ref{fig:T_qur_0}, for which $D_{a}=0$, the strategy that maximizes  $T$ is shown by a solid blue line and is $q_{ur}\approx 0.6$ for all the values of $\theta_{rd}$. In contrast, in Fig.~\ref{fig:T_qur_5}, where $D_{a}=5$, we observe a different behavior. The optimal strategy coincides with $q_{ur}=1$ for lower value of $\theta_{rd}$ that allows to minimize the probability to change strategy. When $\theta_{rd}$ increases, the highest throughput is provided by a slightly smaller value of $q_{ur}$, with an increase in the transmissions to the mmAP.

\begin{figure}[!tbp]
  \begin{subfigure}[b]{0.45\textwidth}
    \includegraphics[width=8cm]{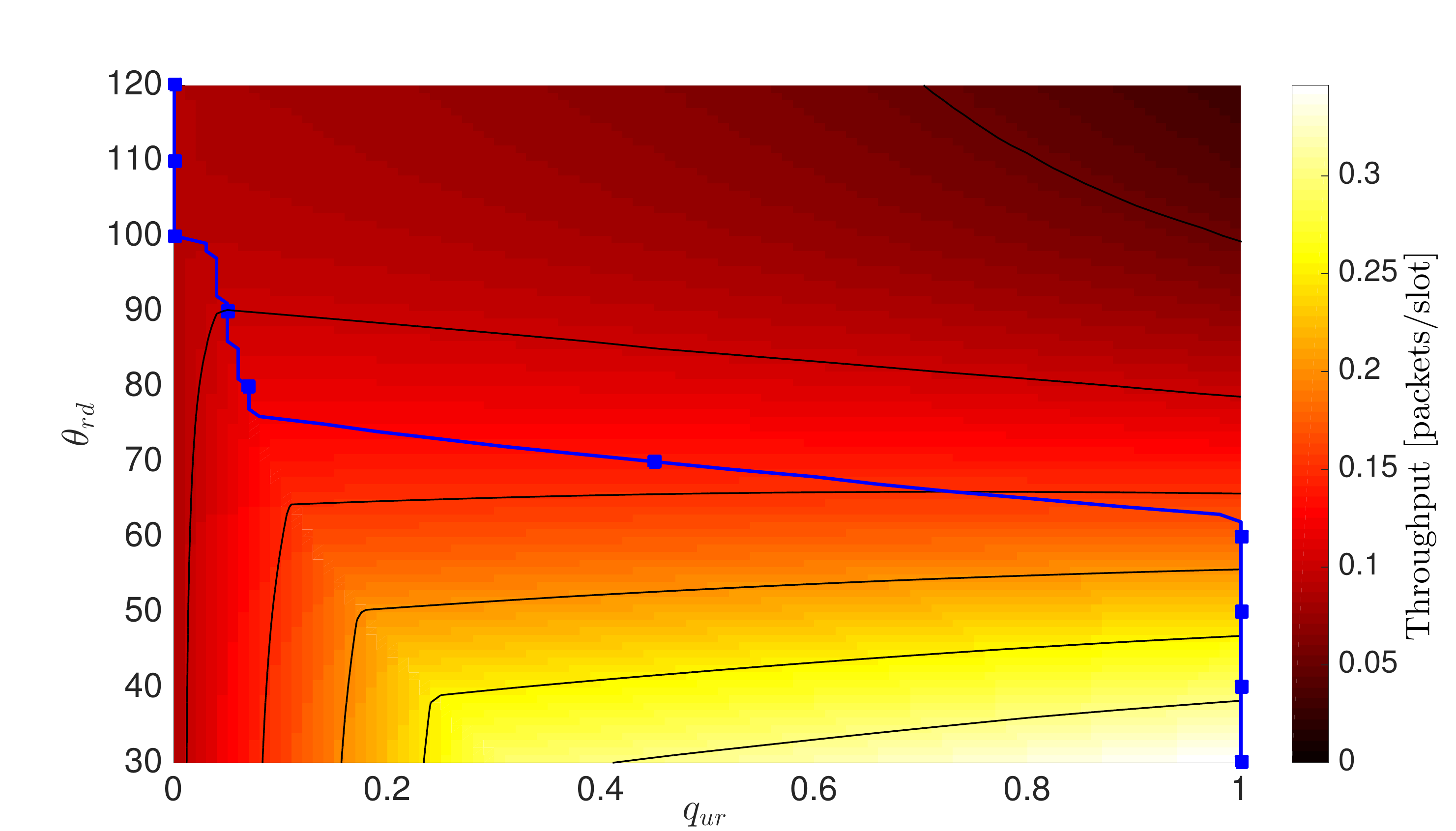}
    \caption{$D_{a}=0$.}
    \label{fig:T_qur_0_200}
  \end{subfigure}
  \hfill
  \begin{subfigure}[b]{0.45\textwidth}
    \includegraphics[width=8cm]{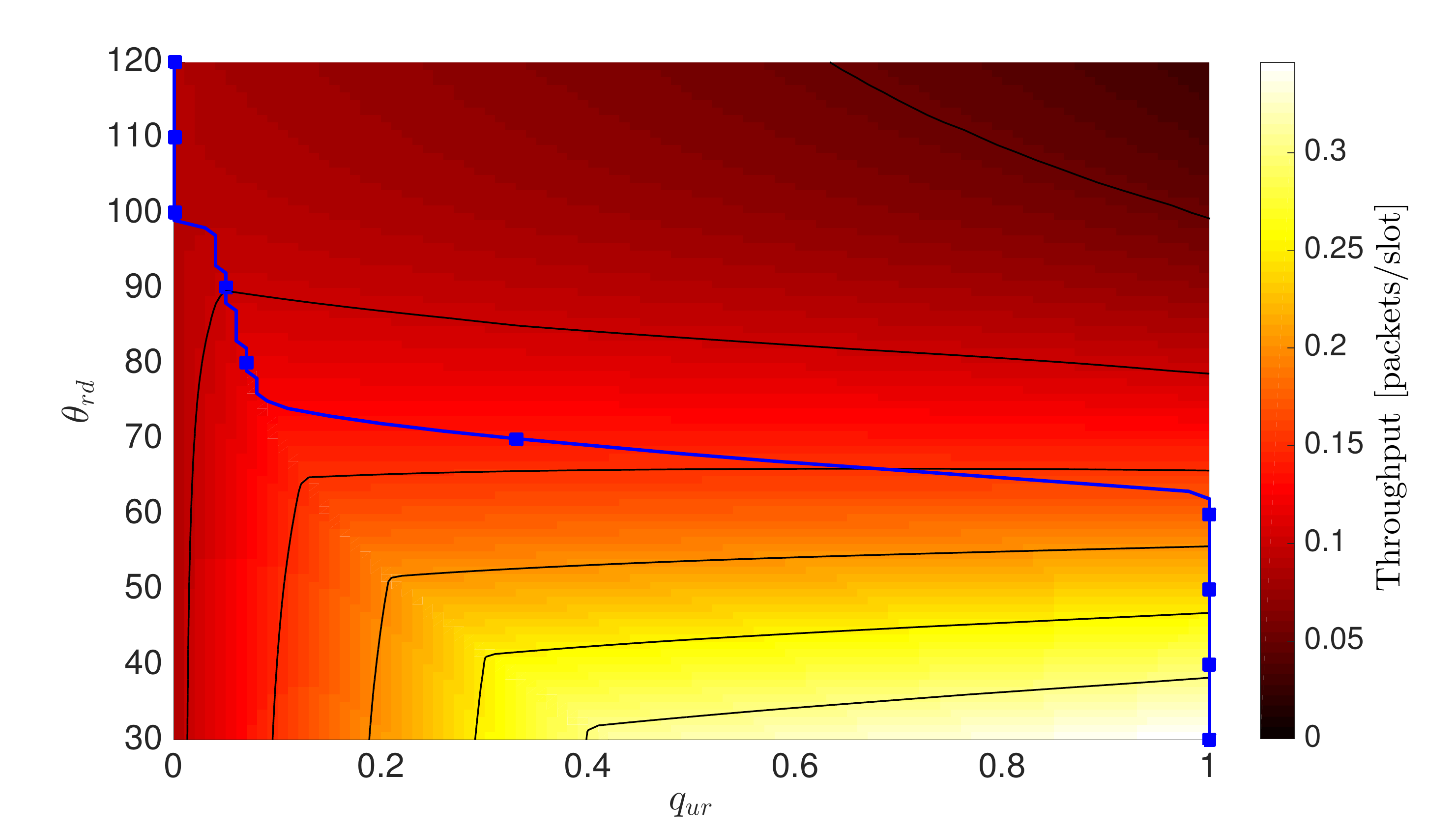}
    \caption{$D_{a}=5$.}
    \label{fig:T_qur_5_200}
  \end{subfigure}
  \caption{Throughput, $T$, while varying $q_{ur}$ and $\theta_{rd}$ with $q_{uf}=1$, $d_{ur}=30$~m and $d_{ud}=50$~m. For each value of $\theta_{rd}$, we represent with a solid blue line the value of $q_{ur}$ that maximizes the throughput.}
\end{figure}
\begin{figure}[!tbp]
  \begin{subfigure}[b]{0.45\textwidth}
    \includegraphics[width=8cm]{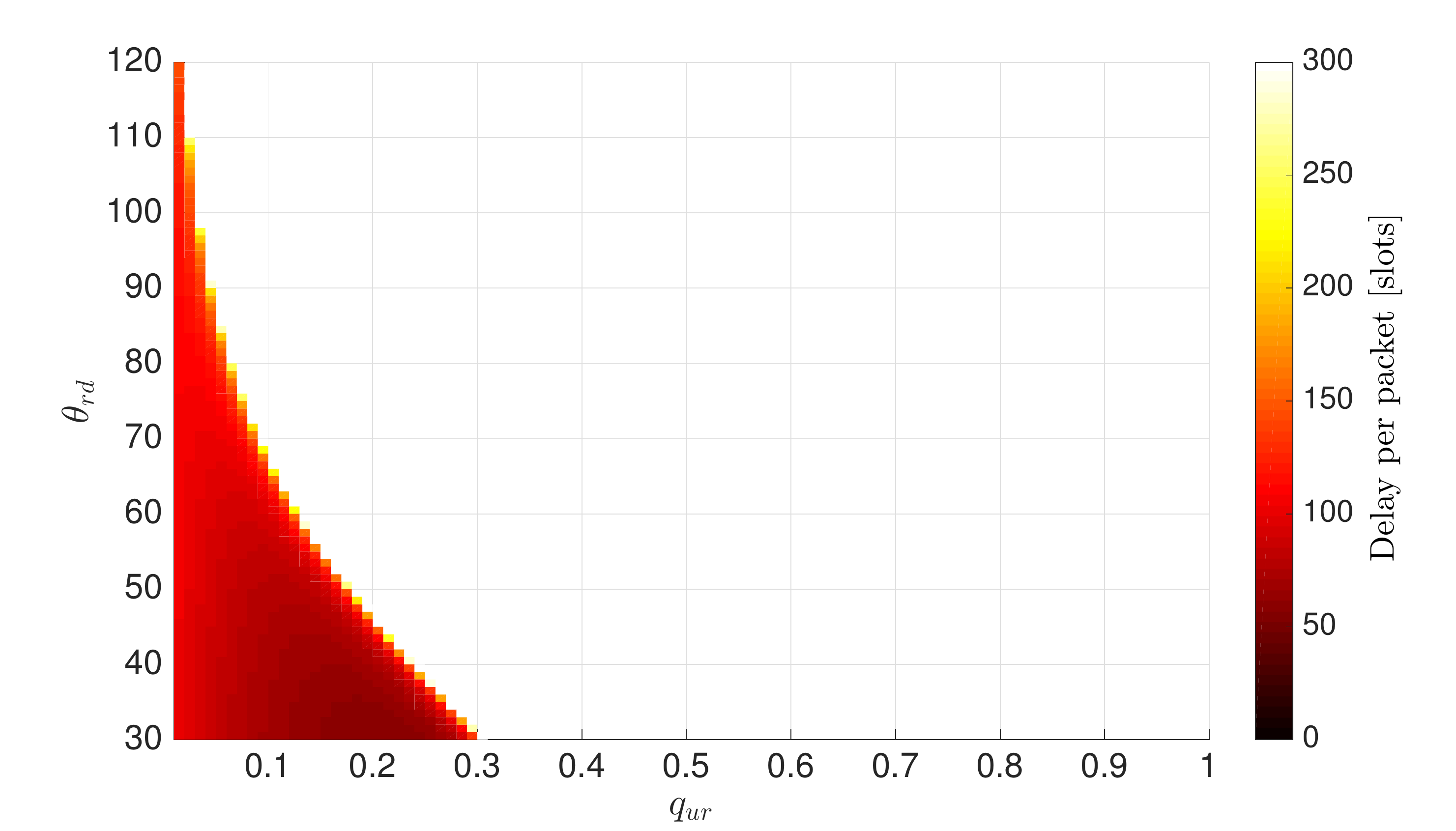}
    \caption{$D_{a}=0$.}
    \label{fig:D_qur_0_200}
  \end{subfigure}
  \hfill
  \begin{subfigure}[b]{0.45\textwidth}
    \includegraphics[width=8cm]{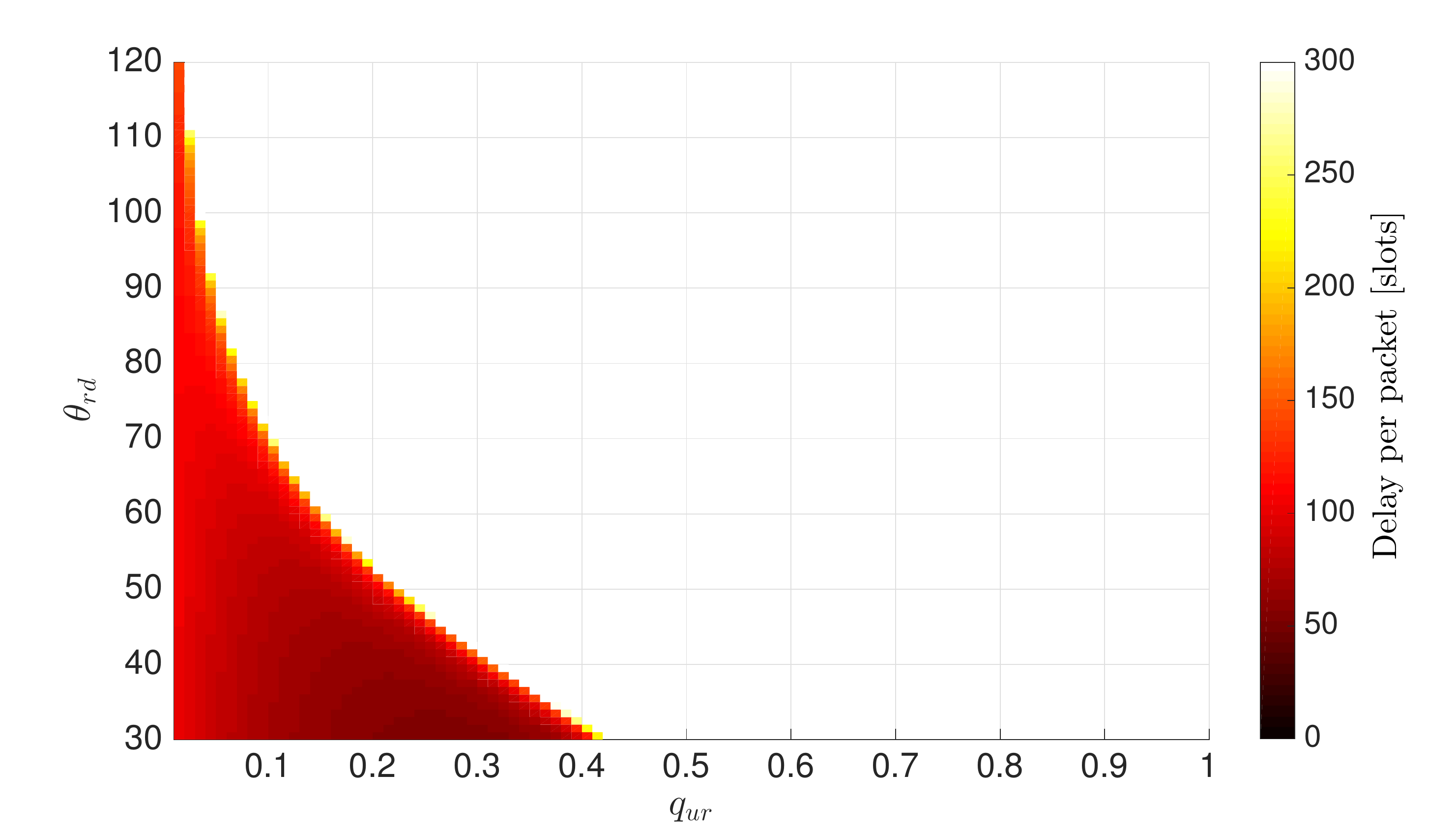}
    \caption{$D_{a}=5$.}
    \label{fig:D_qur_5_200}
  \end{subfigure}
  \caption{Delay per packet, $D$, while varying $q_{ur}$ and $\theta_{rd}$, with $q_{uf}=1$, $d_{ur}=30$~m and $d_{ud}=50$~m.}
\end{figure}

For the same settings of Fig.~\ref{fig:T_qur_0} and Fig.~\ref{fig:T_qur_5}, we show the results for delay in Fig.~\ref{fig:D_qur_0} and Fig.~\ref{fig:D_qur_5}, respectively. First, we can observe that the best transmission strategy for the delay is different from the one for the throughput. Indeed, for the throughput it is more beneficial to transmit to the relay $R$, while it is preferable to transmit to the mmAP for minimizing the delay. Namely, by transmitting to the mmAP the packets avoid the queueing delay at the relay and the interference that this creates on the mmAP, since the relay transmits a lower amount of packets.
In Fig.~\ref{fig:T_qur_0} and Fig.~\ref{fig:T_qur_5}, we observe that for the $\mathrm{FD}$ transmission and short distances ($d_{ur} = 30$ m and $d_{ud} = 50$ m), we have higher values of $T$ as $\theta_{rd}$ increases. 


Finally, we show the effects of increasing the distances, i.e, $d_{ud}$ and $d_{ur}$, when considering the same scenario of Fig.~\ref{fig:T_qur}. In Fig.~\ref{fig:T_qur_0_200}, we show the throughput with longer distances i.e., $d_{ur}=50$ m, $d_{ud}=200$ m, and $D_{a}=0$. The blue solid line shows the transmission strategy for maximizing the throughput. For lower values of $\theta_{rd}$ the transmissions to the relay are preferable, as in Fig.~\ref{fig:T_qur_0}. However, for longer distances $d_{rd}$ (high values of $\theta_{rd}$), the relay does not become anymore beneficial and in general $T$ decreases. As a result, the transmissions between the UEs and the mmAP are barely affected by the interference of $R$ and the path loss between $R$ and the mmAP is dominant. This path loss decreases the success probability for a packet from $R$ to the mmAP and makes the queue at $R$ not stable when $q_{ur}$ is above certain values. The unstabilty and instability regions can be better observed in Fig.~\ref{fig:D_qur_0_200}, where we show the delay for $D_{a}=0$. Here, we do not report (i.e., white area) the values of $q_{ur}$ and $\theta_{rd}$ for which the queue is unstable. For higher value of $D_{a}$, we can observe in Fig~\ref{fig:D_qur_0_200} and Fig~\ref{fig:D_qur_5_200} that the unstability region changes, but the optimal strategy for the throughput has almost the same behavior.

\subsection{Imperfect Beam Alignment}
\label{sec:Beam_Err}
Let us consider the problem of misalignment. Given the sectored antenna model described in Section~\ref{sec:SINRexp}, we introduce a beam alignment error ($\epsilon$) that is modelled by using the truncated gaussian error model in~\cite{BeamErr}. Thus, the transmitter and receiver gains in~\eqref{eq:SINR} can be computed as follows:
\begin{align*} 
    g=
    \begin{cases}
      \frac{2\pi}{\theta_{BW}} & \text{with probability  }  P_{G}(\sigma), \\
      0 & \text{with probability  } 1-P_{G}(\sigma).
    \end{cases}\stepcounter{equation}\tag{\theequation}\label{eq:Gain_err}
\end{align*} 
The term $P_{G}(\sigma)$ is the probability that the absolute value of the error is less than the beamwidth and it is given by:
\begin{align*} 
P_{G}(\sigma)=P(|\epsilon|\le \theta_{BW})=\frac{\text{Erf}\bigl(\theta_{BW}/\sqrt{2\sigma^2}\bigr)}{\text{Erf}\bigl(\pi/\sqrt{2\sigma^2}\bigr)},\stepcounter{equation}\tag{\theequation}\label{eq:prob_succ_gain}
\end{align*} 
\begin{figure}[!tbp]
  \begin{subfigure}[b]{0.45\textwidth}
    \includegraphics[width=8cm]{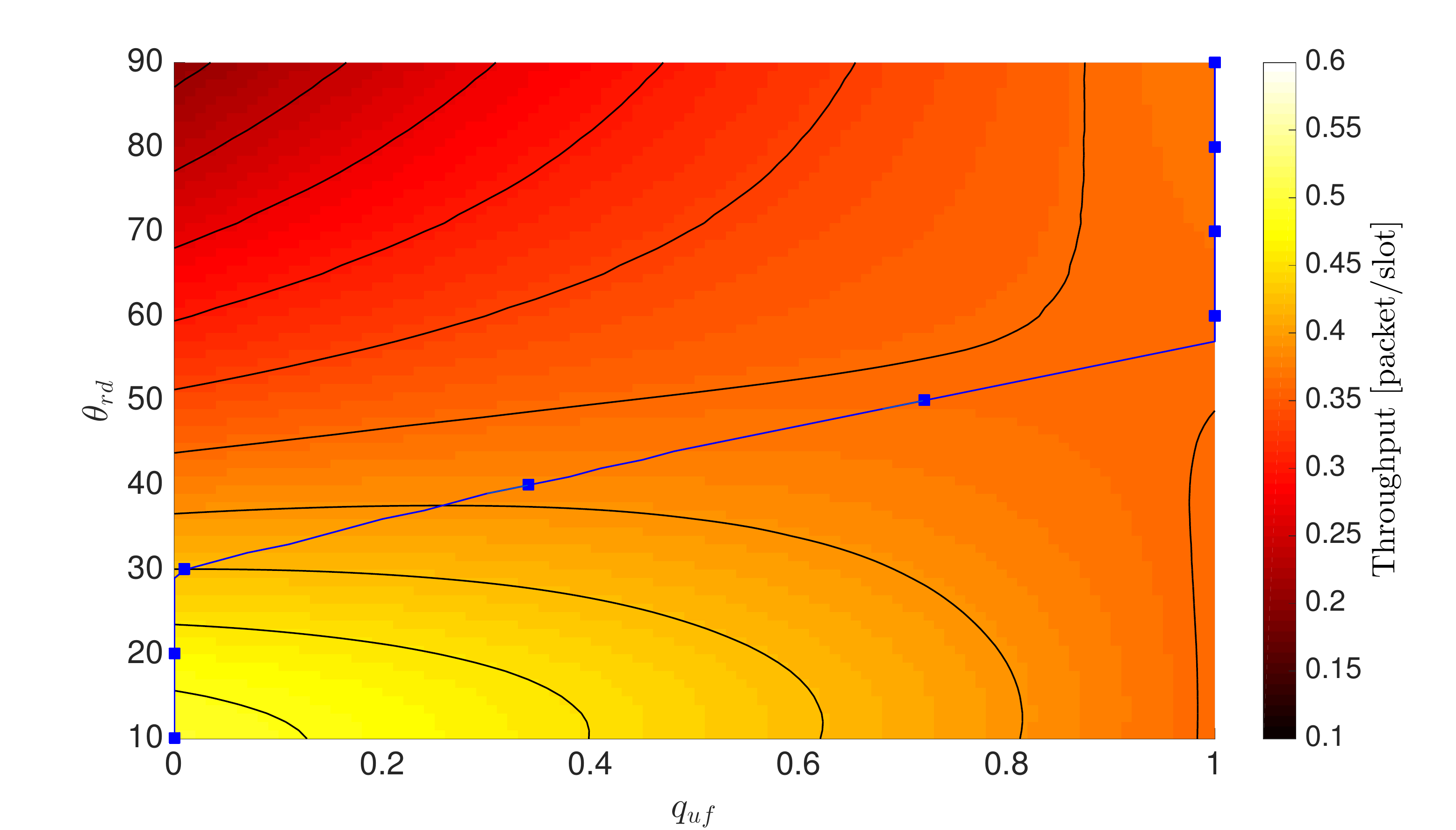}
    \caption{$\sigma^{2}=5^{\circ}$.}
    \label{fig:T_quf_th_ERR_BW5}
  \end{subfigure}
  \hfill
  \begin{subfigure}[b]{0.45\textwidth}
    \includegraphics[width=8cm]{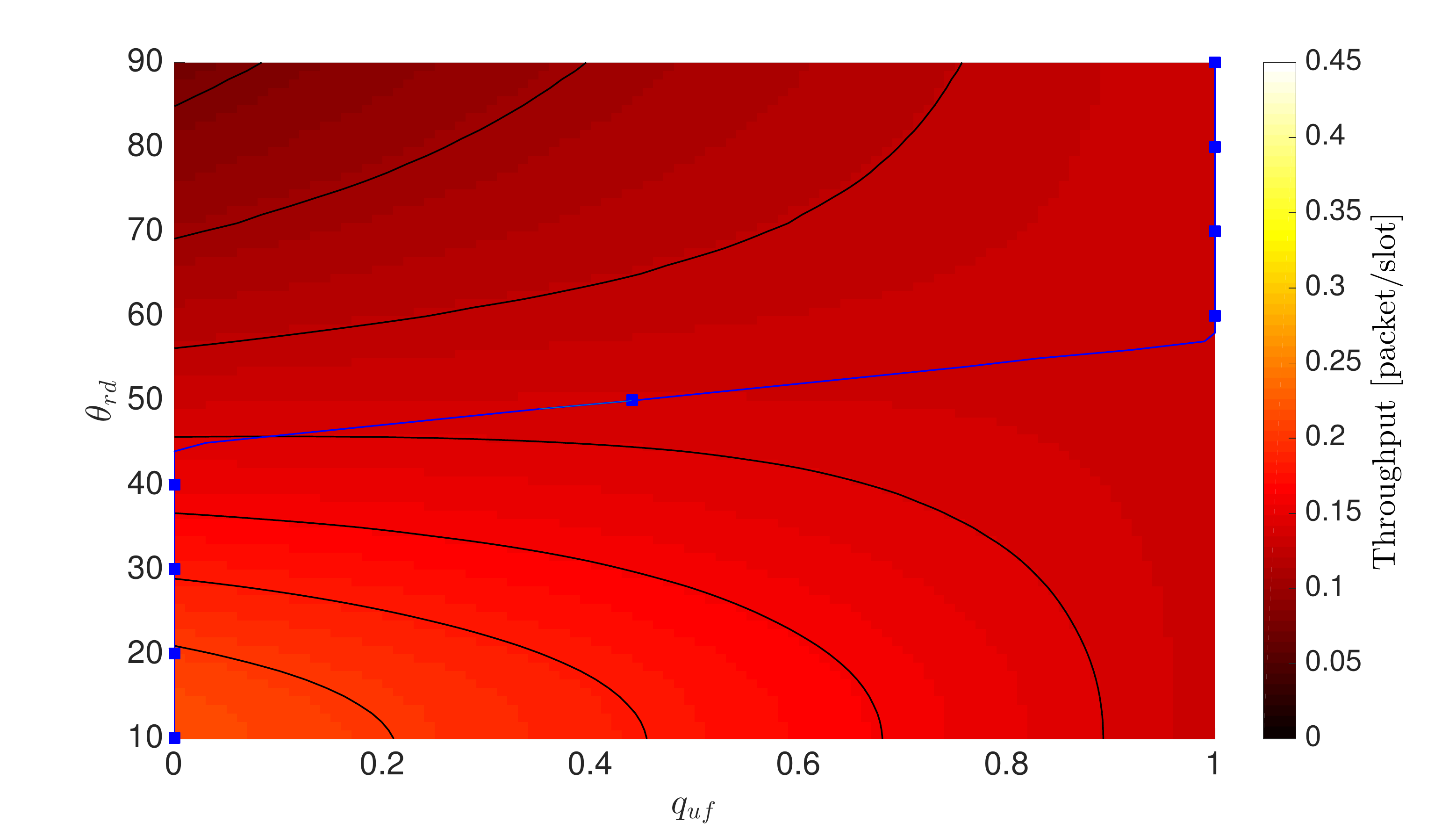}
    \caption{$\sigma^{2}=10^{\circ}$.}
    \label{fig:T_quf_th_ERR_BW10}
  \end{subfigure}
  \caption{Throughput with imperfect beam alignment while varying $q_{uf}$ and $\theta_{rd}$. Moreover, we set $q_{ur}=0.5$, $d_{ur}=30$~m and $d_{ud}=50$~m.}
\end{figure}
\begin{figure}[!tbp]
  \begin{subfigure}[b]{0.45\textwidth}
    \includegraphics[width=8cm]{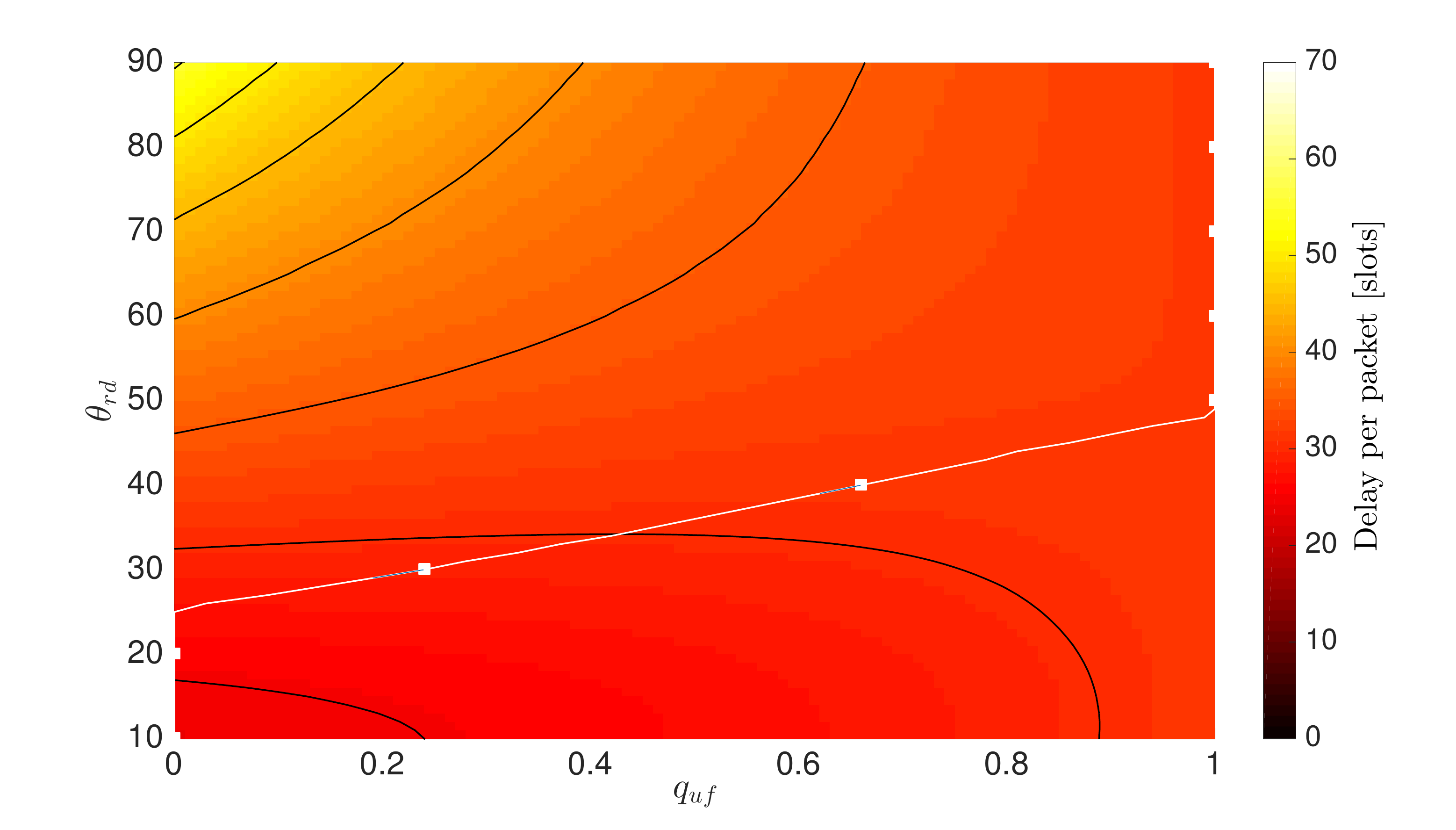}
    \caption{$\sigma^{2}=5^{\circ}$.}
    \label{fig:D_quf_th_ERR_BW5}
  \end{subfigure}
  \hfill
  \begin{subfigure}[b]{0.45\textwidth}
    \includegraphics[width=8cm]{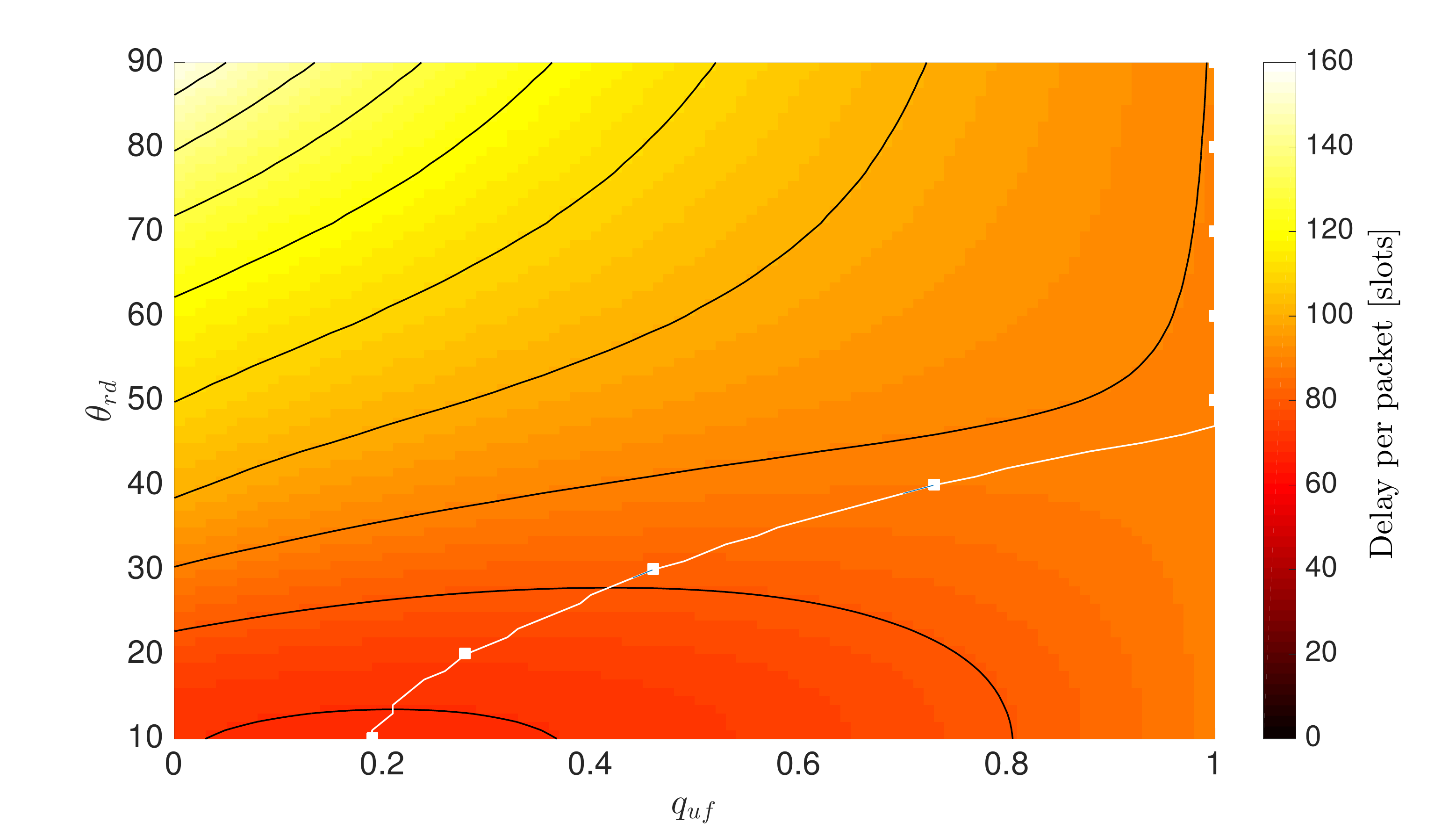}
    \caption{$\sigma^{2}=10^{\circ}$.}
    \label{fig:D_quf_th_ERR_BW10}
  \end{subfigure}
  \caption{Throughput with imperfect beam alignment while varying $q_{uf}$ and $\theta_{rd}$. Moreover, we set $q_{ur}=0.5$, $d_{ur}=30$~m and $d_{ud}=50$~m.}
\end{figure}
where, $\text{Erf}$ is the error function and $\sigma^2$ represents the variance of $\epsilon$. Note that, the beam alignment error affects only the computation of the success probability transmission in Appendix A, whereas the rest of the analysis remains the same. 

Now, we can show the results for throughput and delay when errors in the beam alignment phase are considered. In Fig~\ref{fig:T_quf_th_ERR_BW5} and~\ref{fig:T_quf_th_ERR_BW10} we show the throughput while varying $q_{uf}$ and $\theta_{rd}$ for $\sigma^{2}=5^{\circ}$ and $\sigma^{2}=10^{\circ}$, respectively. By comparing these figures with~Fig.~\ref{fig:T_quf_0}, where $\sigma^{2}=0^{\circ}$, we can observe that the throughput, $T$, decreases with the increase of $\sigma^{2}$. Moreover, for higher values of this parameter, the $\mathrm{BR}$ scheme increases its range of $\theta_{rd}$ values for which it represents the optimal choice. Namely, since $P_{G}(\sigma)$ increases for higher values of $\theta_{BW}$, wider beams are less subject to alignment errors. The delay for $\sigma^{2}=5^{\circ}$ has the same behavior of $T$, as it is shown in~\ref{fig:D_quf_th_ERR_BW5}. Whereas, in Fig~\ref{fig:D_quf_th_ERR_BW10} we can observe that, for $\sigma^{2}=10^{\circ}$, the delay is not minimum when $q_{uf}=0$. Indeed, although $\mathrm{BR}$ transmissions are more robust to alignment errors, they provide a lower beamforming gain that decreases the number of packets that are successfully received by the mmAP. However, these packets are still successfully received by $R$, whose queue size (and queueing delay) increases significantly with respect to Fig~\ref{fig:D_quf_th_ERR_BW5}.
\begin{figure}[!tbp]
  \centering
  \begin{minipage}[b]{0.45\textwidth}
    \includegraphics[width=8cm]{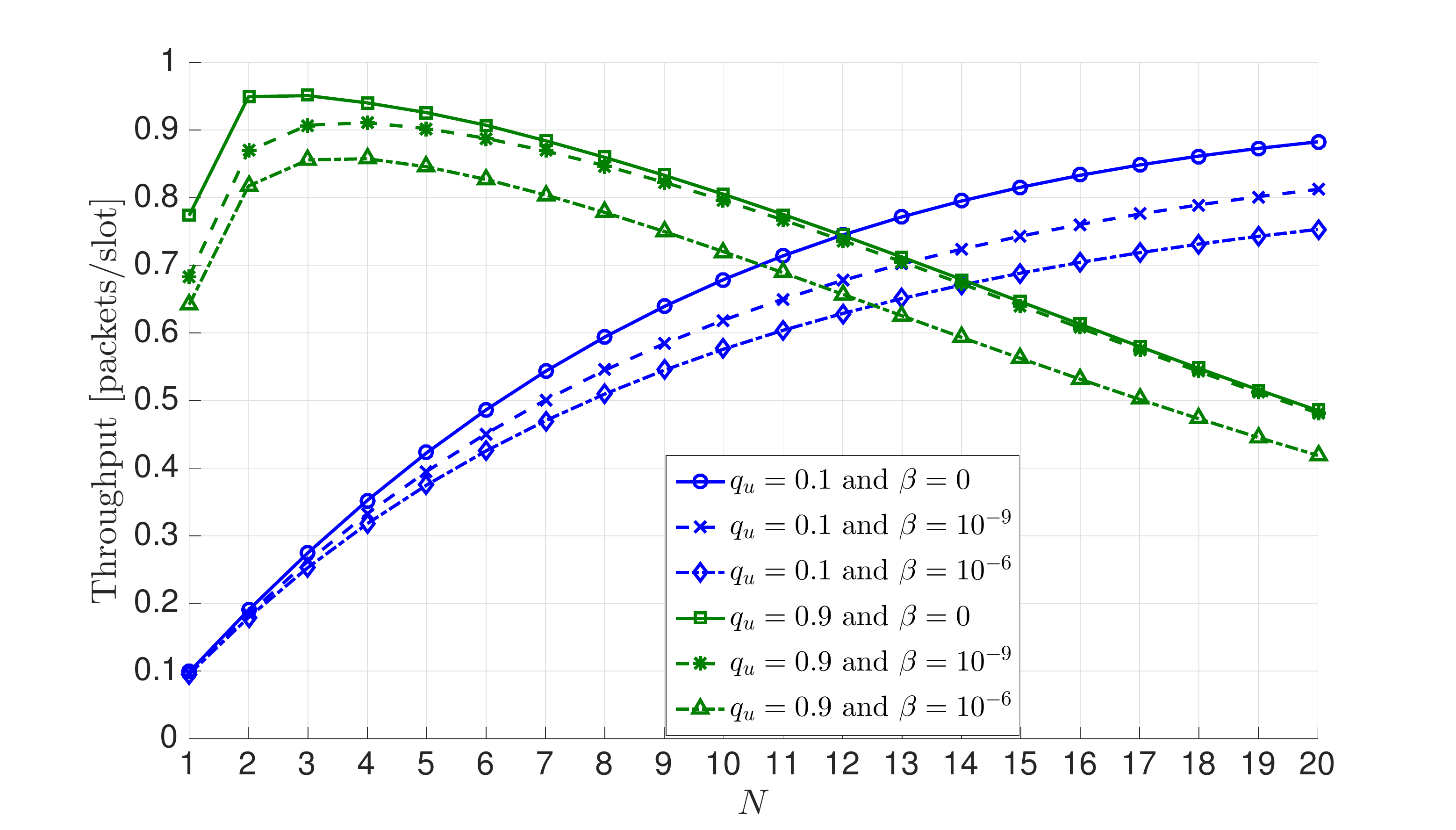}
    \caption{Throughput, $T$, while varying the number of UEs for several values of $\beta$ and transmit probability (i.e., $q_u=0.1$ and $q_u=0.9$) when $D_{a}=0$}
    \label{fig:T_N_quf_Self}
  \end{minipage}
  \hfill
  \begin{minipage}[b]{0.45\textwidth}
    \includegraphics[width=8cm]{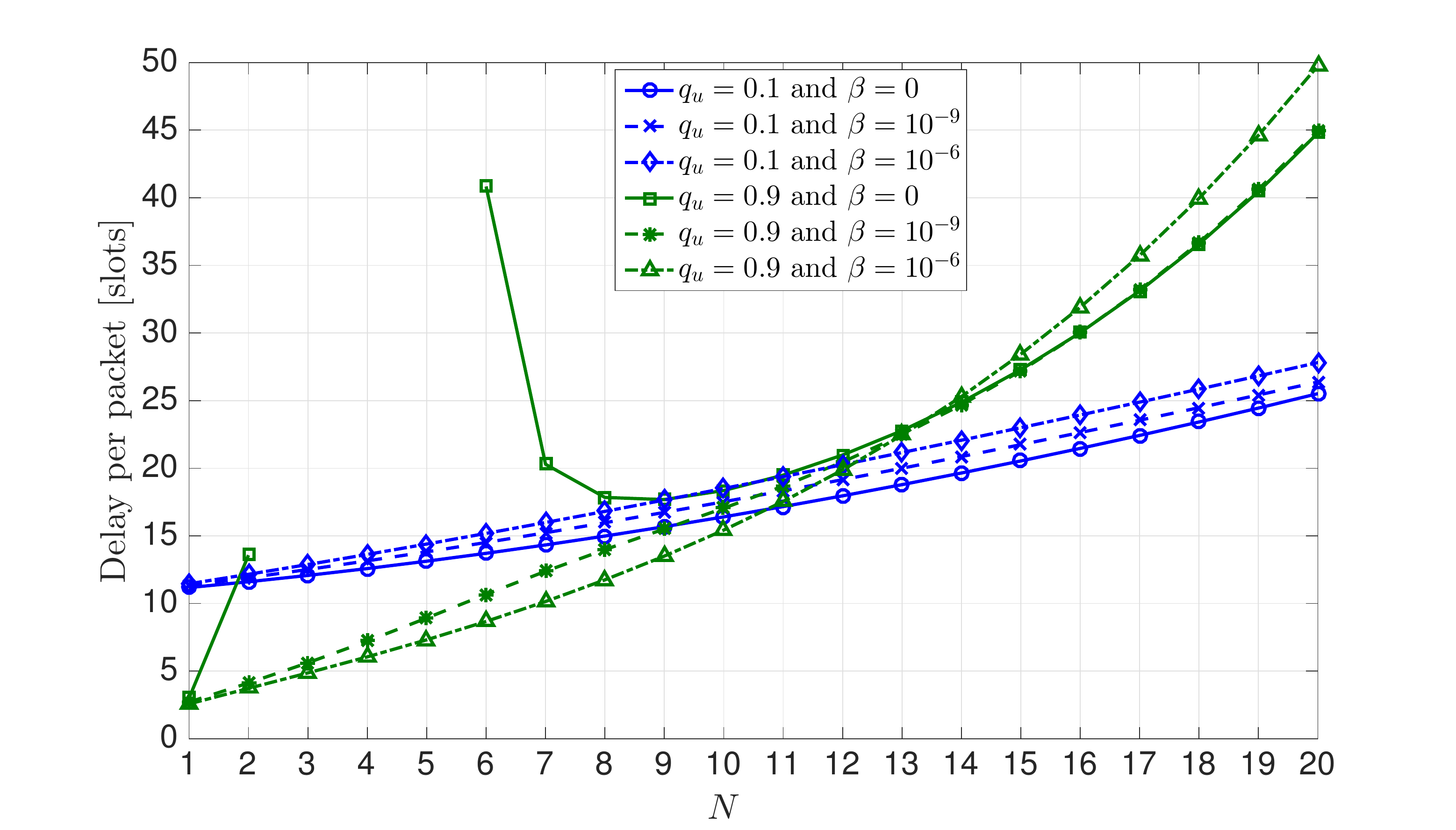}
    \caption{Delay per packet while varying the number of UEs for several values of $\beta$ and transmit probability (i.e., $q_u=0.1$ and $q_u=0.9$) when $D_{a}=0$}
    \label{fig:D_N_quf_Self}
  \end{minipage}
\end{figure}
\subsection{Imperfect Full-Duplex Communications}
\label{sec:FDself}
In this section, we consider non-perfect full-duplex relay operations, where packets that are transmitted by UEs to the relay are subject to an additional interference term when the relay is transmitting~\cite{Niko,FDmmWave}. We assume that the relay implements a self-interference mitigation technique , whose efficiency is modelled by a scalar $0 \le \beta \le 1$. This is similar to the parameter $\alpha$ used to model the interference cancellation at the receiver side of the relay in Section~\ref{sec:SINRexp}. Namely, $\beta=0$ models a perfect self-interference cancellation, whereas, when $\beta=1$ the self-interference cancellation is not used. The term $\beta$ affects the SINR expression in~\eqref{eq:SINR} and the success probability transmission, but the rest of the analysis in Section~\ref{sec:PA} and Section~\ref{sec:ThrDel} remain the same. In Fig.~\ref{fig:T_N_quf_Self} and Fig.~\ref{fig:D_N_quf_Self} we show the throughput and delay, respectively, while varying $N$ for several values of $\beta$ and transmit probability, i.e., $q_{u}$. From these figures, we can observe that, when $\beta>0$, the additional interference term (self-interference at $R$) decreases the number of packets successfully received by $R$ and, therefore, the throughput. Moreover, Fig.~\ref{fig:D_N_quf_Self} shows that for $q_{u}=0.9$, the lower number of packets at the relay makes the queue always stable. However, in general, the behavior of the curves for $\beta>0$ does not change significantly with respect to the case when $\beta=0$.

\subsection{Transmission Strategy-Dependent Alignment Delay}
\label{sec:TSDAD}
In this section, we show new results for the throughput and delay when the beam alignment duration is not constant. Indeed, $D_{a}$ depends on the beamforming technique, the beam alignment algorithm, and the number of beams that are used by transmitters and receivers~\cite{PadComp,PadMult}. In the following, we first describe the adopted model and next we show the results for throughput and delay when $D_{a}$ is not constant. We consider a bi-dimensional scenario where the nodes can form beams that are taken from a discrete set (codebook). These beams are not overlapped and cover the whole space. Moreover, we assume that the mmAP and the relay send downlink pilots for the beam alignment, which is performed at the UEs by using an exhaustive search algorithm. In according to these assumptions and the analysis in~\cite{PadMult}, the duration of the beam alignment, $D_{a}$, can be computed as follows:
\begin{align*} 
    PL_{UMi-LOS}=
    \begin{cases}
      \frac{N_{B}^{UE}N_{B}^{m}T_{sig}}{LM} & \text{when  }  s=fm, \\
      \frac{N_{B}^{UE}N_{B}^{r}T_{sig}}{LM} & \text{when  }  s=fr, \\
      \frac{N_{B}^{UE}N_{B}^{m}N_{B}^{r}}{LM} & \text{when  }  s=b,
    \end{cases}\stepcounter{equation}\tag{\theequation}\label{eq:Da_new}
\end{align*}
where, $s$ is the transmission strategy\footnote{Note that, when $s=b$ ($\mathrm{BR}$ case), UEs must align with both $R$ and the mmAP. Although faster beam alignment algorithm are also presented in~\cite{PadComp,PadMult} for a multi-connectivity case (e.g., uplink strategy), we consider here a worst case scenario, where each combination of beams at the UE, at the mmAP and at $R$ must be evaluated.}, $N_{B}^{UE}$, $N_{B}^{m}$, and $N_{B}^{r}$ are the numbers of beams at the UE, the mmAP and the relay, respectively. These depend on the beamwidth, i.e., $N_{B}=\frac{2\pi}{\theta_{BW}}$. The term $M=\frac{\Tau_{slot}}{\Tau_{sig}}$, is the number of pilots that can be sent in a timeslot whose duration is $\Tau_{slot}$ and $\Tau_{sig}$ is the transmitting time of a pilot. $L$ is the number of directions that the receiver can look simultaneously and depends on the beamformers at the mmAP and the relay. Since they have multi-packet reception capability, we can assume that $R$ and the mmAP are equipped with a digital or hybrid beamformers and $L$ is at least equal to the number of UEs. Thus, given~\eqref{eq:Da_new}, the delay can be rewritten as follows:
\begin{align*}
D&=\frac{1+q_{u}D_{r}\Bigl(q_{uf}q_{ur}T_{ur}^{f}+q_{ub}T_{ur}^{b}\Bigr)+q_{u}(D_{a}^{f}C'+D_{a}^{b}C'')}{q_{u}T_{u}},\stepcounter{equation}\tag{\theequation}\label{eq:Delay3}
\end{align*}
where, $D_{a}^{f}$ and $D_{a}^{b}$ are the beam alignment duration when $s=fm=fr$ and $s=b$, respectively. The terms $C'$ and $C''$ are given by:
\begin{align*}
C'&=q_{uf}^{2}q_{um}^{2}\Bigl(T_{ud}^{f}-T_{u}-1\Bigr)+q_{uf}^{2}q_{ur}^{2}\Bigl(T_{ur}^{f}-T_{u}-1\Bigr).  \stepcounter{equation}\tag{\theequation}\label{eq:C'}
\end{align*} 
\begin{align*}
C''&=1+q_{ub}^{2}\Bigl(T_{ud}^{b}+T_{ur}^{b}-T_{u}-1\Bigr),  \stepcounter{equation}\tag{\theequation}\label{eq:C''}
\end{align*} 
\begin{figure}[!tbp]
  \begin{subfigure}[b]{0.45\textwidth}
    \includegraphics[width=8cm]{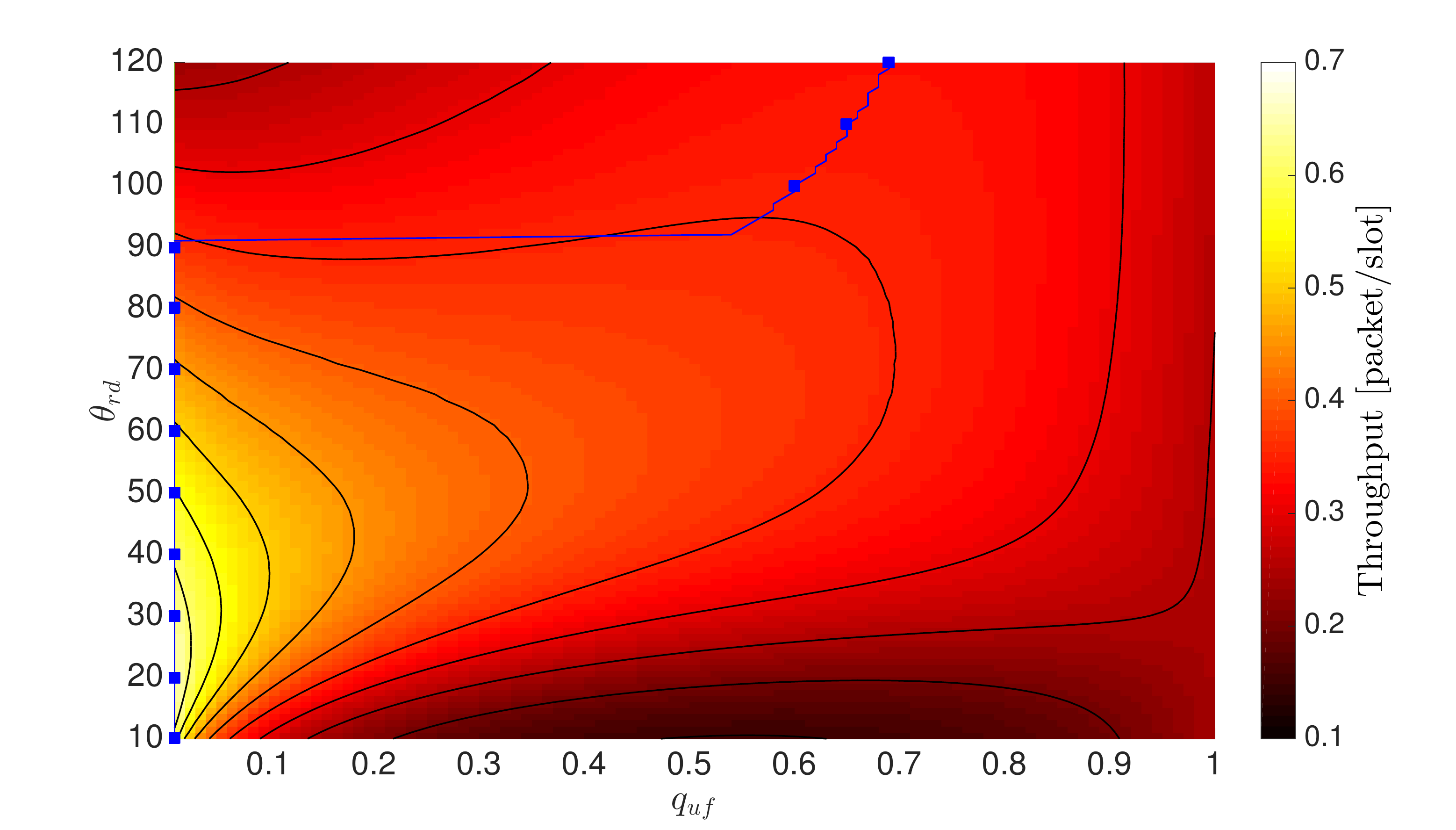}
    \caption{$\theta_{BW}^{f}=5^{\circ}$.}
    \label{fig:T_quf_th_Da_BW5}
  \end{subfigure}
  \hfill
  \begin{subfigure}[b]{0.45\textwidth}
    \includegraphics[width=8cm]{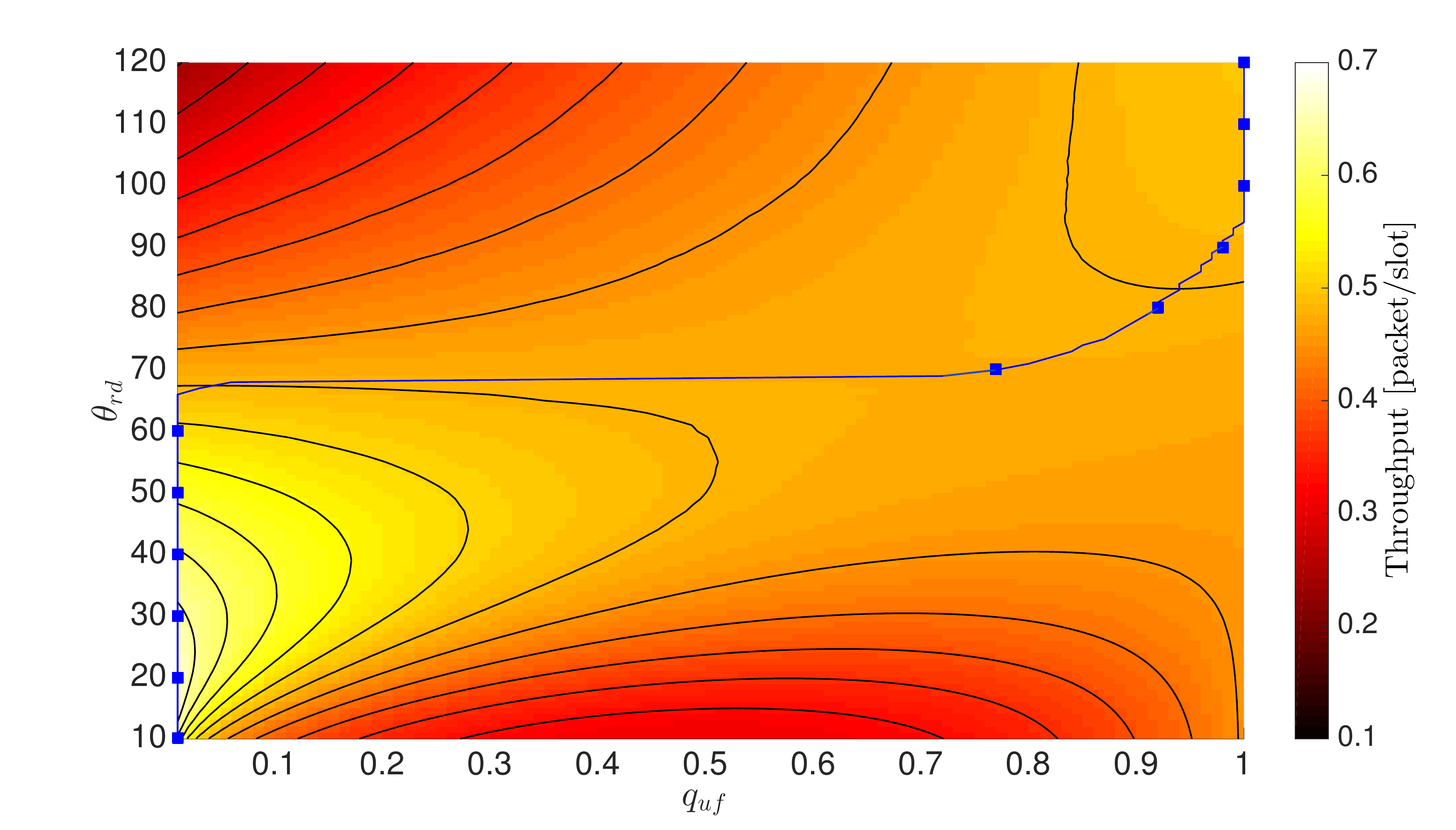}
    \caption{$\theta_{BW}^{f}=10^{\circ}$.}
    \label{fig:T_quf_th_Da_BW10}
  \end{subfigure}
  \caption{Throughput when $D_{a}$ is not constant while varying $q_{uf}$ and $\theta_{rd}$. Moreover, we set $q_{ur}=0.5$, $d_{ur}=30$~m and $d_{ud}=50$~m.}
\end{figure}
\begin{figure}[!tbp]
  \begin{subfigure}[b]{0.45\textwidth}
    \includegraphics[width=8cm]{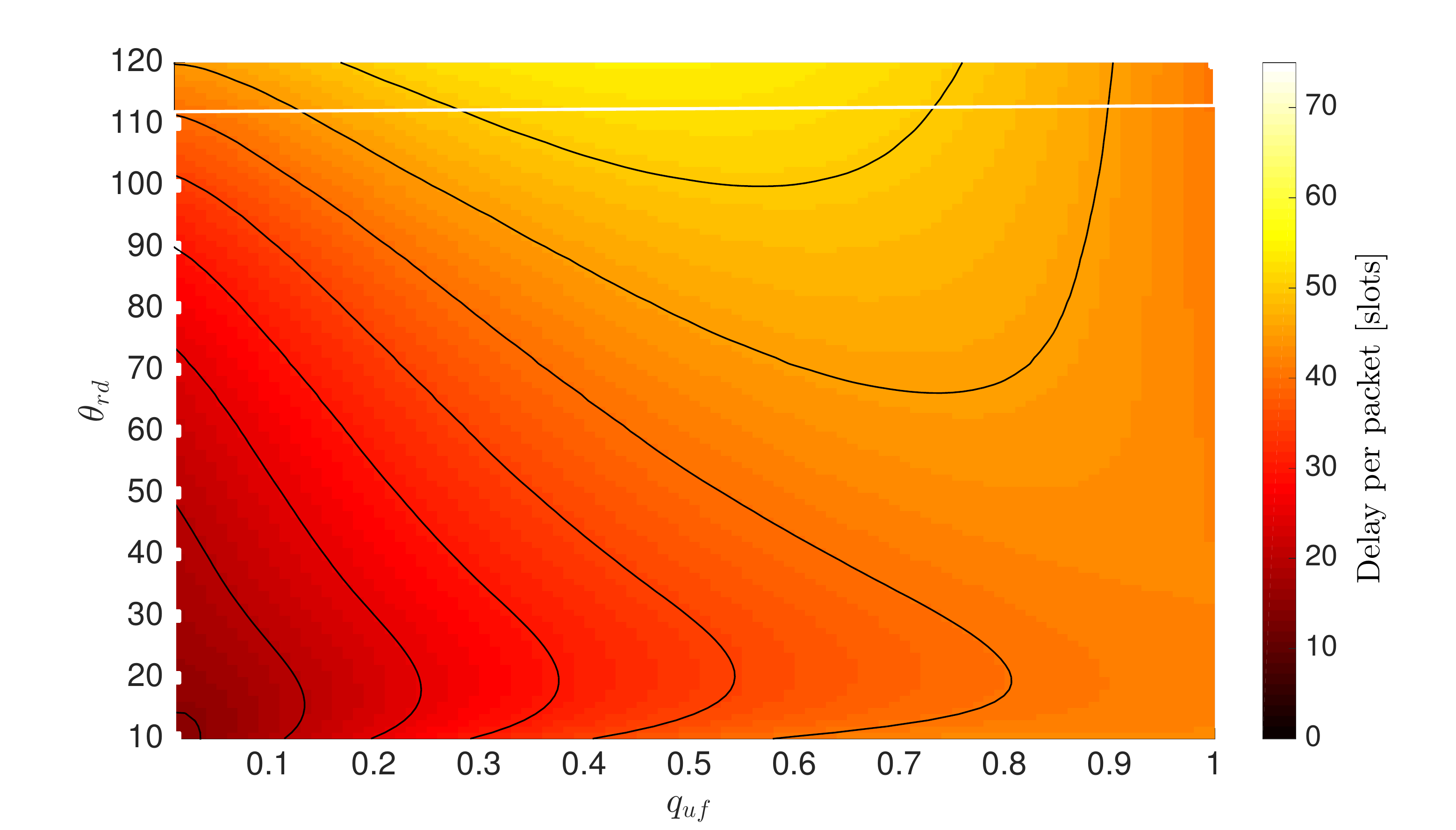}
    \caption{$\theta_{BW}^{dFD}=5^{\circ}$.}
    \label{fig:D_quf_th_Da_BW5}
  \end{subfigure}
  \hfill
  \begin{subfigure}[b]{0.45\textwidth}
    \includegraphics[width=8cm]{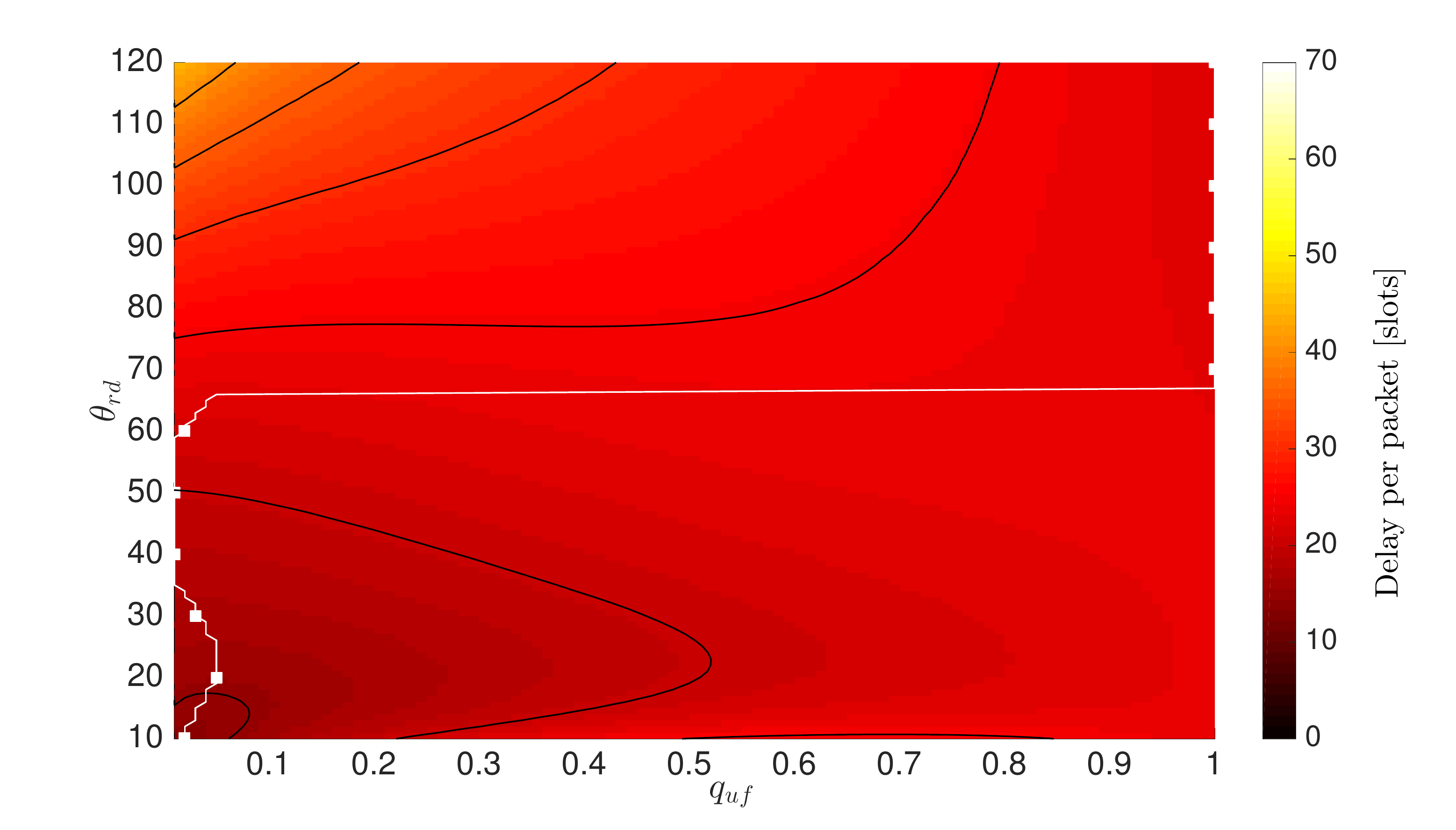}
    \caption{$\theta_{BW}^{f}=10^{\circ}$.}
    \label{fig:D_quf_th_Da_BW10}
  \end{subfigure}
  \caption{Delay when $D_{a}$ is not constant while varying $q_{uf}$ and $\theta_{rd}$. Moreover, we set $q_{ur}=0.5$, $d_{ur}=30$~m and $d_{ud}=50$~m.}
\end{figure}
where, we assume $N_{B}^{m}=N_{B}^{r}$. The rest of the analysis remains the same. Therefore, by setting  $\Tau_{sig}=10$ $\mu$s~\cite{PadComp}, $\Tau_{slot}=1$ ms, $M=100$, and $L=16$, we show in Fig.~\ref{fig:T_quf_th_Da_BW5} and~\ref{fig:T_quf_th_Da_BW10} the throughput while varying $\theta_{rd}$ and $q_{uf}$ with $\theta_{BW}^{f}=5^{\circ}$ and $\theta_{BW}^{f}=10^{\circ}$, respectively. Recall that $\theta_{BW}^{f}$ is the beamwidth for $\mathrm{FD}$ transmissions, whereas, for the $\mathrm{BR}$ case $\theta_{BW}^{b}=\theta_{rd}$. The rest of the parameters are set as in Fig.~\ref{fig:T_quf_0}. As in this figure, where $D_{a}$ is constant, we can observe that in both Fig.~\ref{fig:T_quf_th_Da_BW5} and~\ref{fig:T_quf_th_Da_BW10} $\mathrm{BR}$ transmissions are preferable for lower values of $\theta_{rd}$. More specifically, when $q_{uf}=0$, in every timeslot we can transmit to both the mmAP and $R$ and minimize the number of beam alignments. When $\theta_{rd}$ increases, the contribution to the throughput of $\mathrm{FD}$ transmissions increases as well. Moreover, this is even higher in Fig.~\ref{fig:T_quf_th_Da_BW10} when the beamwidh increases from $\theta_{BW}^{f}=5^{\circ}$ to $\theta_{BW}^{f}=10^{\circ}$. Indeed, although wider beams provide lower gain, they decrease the alignment delay and increase the transmit probability. Finally, in Fig.~\ref{fig:D_quf_th_Da_BW5} and~\ref{fig:D_quf_th_Da_BW10} we show the delay for the same parameters of the previous two figures. We can observe that the optimal strategy for the delay is highly affected from the beam alignment delay and it has almost the same behavior of the one that maximizes the throughput. Indeed in Fig.~\ref{fig:D_quf_th_Da_BW5} and~\ref{fig:D_quf_th_Da_BW10} we can observe that, in most of the cases, the delay is minimum when $q_{uf}=0$ or $q_{uf}=1$.

\section{Conclusion}
\label{sec:Conc}
We have presented a throughput and delay analysis for relay assisted mm-wave wireless networks, where the UEs can adopt either an $\mathrm{FD}$ or a $\mathrm{BR}$ transmission. In particular, we have analyzed the performance of the queue at the relay by deriving the stability conditions and the arrival and service rates. We have numerically evaluated the analytical model and validated our analysis with simulations.
The analytical model matches well the simulation results. The latter show that beam alignment causes a decrease in the transmit probability inversely proportional to the beam alignment duration, $D_{a}$, and the probability to change the strategy. We have shown how, in case of queue stability, the increase in $D_{a}$ decreases the throughput and the delay. However, for dense scenarios, where the queue at the relay is close to becoming unstable, the increase in $D_{a}$ can decrease the delay per packet. More precisely, when being close to the instability condition, we could show that the highest delay component is given by the queueing delay. Whereas, when the queue is stable, the delay is affected mostly by the transmission and alignment delay.

Moreover, we have showed that the optimal transmission strategy highly depends on the network topology, e.g., $d_{ud}$, $d_{ur}$, $\theta_{rd}$, and the queue condition. When the queue is stable, the values of $q_{uf}$ and $q_{ur}$ that maximize the throughput and minimize the delay usually coincide. 
However, when being close to the instability and unstability regions, the throughput and delay present a tradeoff. Furthermore, as expected, we showed that is not always beneficial to use narrow beams ($\mathrm{FD}$) compared to wider beams ($\mathrm{BR}$).  As a matter of fact, for short distances and a beamwidth of $30^{\circ}$, a broadcast transmission is still preferable, although it provides a lower beamforming gain than $\mathrm{FD}$. We have additionally showed that wider beams are more robust to beam alignment errors. However, when the angle between the mmAP and the relay is too wide or the distances and the SINR threshold increase, an $\mathrm{FD}$ strategy should be~chosen.

Finally, we could observe, that for the evaluated scenarios, the interference caused by the relay and the link path loss represent the main impediments for the success probability, hence the throughput and the delay, in case of short and long distances among the nodes, respectively.  

\appendices
\section{}
\label{sec:SPE}
We now derive the success probability expression, conditioned to the sets $\mathcal{I}_{f}$ and $\mathcal{I}_{b}$, for a generic link $ij$ with $N$ symmetric UEs. In order to average on all the possible scenarios for the LOS and NLOS links, we consider that $k$ and $h$ UEs over $|\mathcal{I}_{f}|$ and $|\mathcal{I}_{b}|$ interferers, respectively, are in LOS. Thus, the success probability can be derived as follows:
\begin{align*} 
&P_{ij/\mathcal{I}_{f},\mathcal{I}_{b}}^{f}=P(\text{LOS}_{ij})P(\text{SINR}_{ij/\mathcal{I}_{f},\mathcal{I}_{b}}^{f} \ge \gamma |\text{LOS}_{ij})+P(\text{NLOS}_{ij})P(\text{SINR}_{ij/\mathcal{I}_{f},\mathcal{I}_{b}}^{f} \ge \gamma | \text{NLOS}_{ij})\\
&=P(\text{LOS}_{ij})\Biggr[\sum_{k=0}^{|\mathcal{I}_{f}|}\binom{|\mathcal{I}_{f}|}{k}P(\text{LOS}_{ij})^{k}P(\text{NLOS}_{ij})^{|\mathcal{I}_{f}|-k}\sum_{h=0}^{|\mathcal{I}_{b}|}\binom{|\mathcal{I}_{b}|}{h}P(\text{LOS}_{ij} f^{h}P(\text{NLOS}_{ij})^{|\mathcal{I}_{b}|-h}\stepcounter{equation}\\
&\times P(\text{SINR}_{ij/\mathcal{I}_{fl},\mathcal{I}_{fn},\mathcal{I}_{bl},\mathcal{I}_{bn}}^{f} \ge \gamma |\text{LOS}_{ij})\Biggr]+P(\text{NLOS}_{ij})\Biggr[\sum_{k=0}^{|\mathcal{I}_{f}|}\binom{|\mathcal{I}_{f}|}{k}P(\text{LOS}_{ij})^{k}P(\text{NLOS}_{ij})^{|\mathcal{I}_{f}|-k}\\
&\times \sum_{h=0}^{|\mathcal{I}_{b}|}\binom{|\mathcal{I}_{b}|}{h}P(\text{LOS}_{ij})^{h}P(\text{NLOS}_{ij})^{|\mathcal{I}_{b}|-h}P(\text{SINR}_{ij/\mathcal{I}_{fl},\mathcal{I}_{fn},\mathcal{I}_{bl},\mathcal{I}_{bn}}^{f} \ge \gamma |\text{NLOS}_{ij})\Biggr]\tag{\theequation}\label{eq:prob_succA},
\end{align*} 
where, $P(\text{SINR}_{ij/\mathcal{I}_{f},\mathcal{I}_{b}}^{f} \ge \gamma |\text{LOS}_{ij})$ and $P(\text{SINR}_{ij/\mathcal{I}_{f},\mathcal{I}_{b}}^{f} \ge \gamma | \text{NLOS}_{ij})$ are the probabilities that the received SINR is above $\gamma$, when link $ij$ is in LOS and NLOS, respectively, conditioned to the specific scenarios of interferers, $\mathcal{I}_{f}$ and $\mathcal{I}_{b}$. The expression for $P(\text{SINR}_{ij/\mathcal{I}_{fl},\mathcal{I}_{fn},\mathcal{I}_{bl},\mathcal{I}_{bn}}^{f} \ge \gamma |\text{LOS}_{ij})$ is given in~\eqref{eq:SINR}.

\section{}
\label{sec:TAL}
\begin{proof}
To prove Theorem~\ref{theo}, 
we compute the probability that a UE transmits in timeslot $k$. We identify the following mutually exclusive events:
\begin{enumerate}
\item  the UE is performing a beam alignment. In this case the UE cannot transmit and $q_{tx}=0$.
\item $Al_{k-D_{a}}$: the UE starts a beam alignment in timeslot $k-D_{a}$. In this case $q_{tx}=1$. This is because the UE may start a beam alignment only if it decides to transmit.
\item $I_{k}$: the UE has not started an alignment in the previous $D_{a}$ timeslots with respect to the $k$-th timeslot ($k-D_{a}$, $k-D_{a}-1$,..., $k-1$). In this case, the UE can transmit in timeslot $k$, if and only if it decides to transmit with the same strategy used in the previous transmission attempt. Thus, $q_{tx}=q_{u}P(s_{k}=i \cap s_{\hat{k}}=i)$, with, $i \in \mathcal{S}$, and $\hat{k}$ represents the timeslot where the previous transmission attempt occurs with respect to the $k$-th timeslot.
\end{enumerate}
First, we analyze the second event. It occurs when in timeslot $k-D_{a}$ the following two independent events hold: i) $I_{k-D_{a}}$ and ii) \textit{the UE decides to transmit with a different strategy than that used in the previous transmission attempt}. Thus the event $Al_{k-D_{a}}$ occurs with probability:
\begin{align*} 
P\bigl(Al_{k-D_{a}}\bigr)=P\bigl(I_{k-D_{a}}\bigr)q_{u}\bigl(1-P(s_{k-D_{a}}=i \cap s_{\widehat{k-D_{a}}}=i)\bigr)\stepcounter{equation}\tag{\theequation}\label{eq:AlkDa}
\end{align*} 
Thus, we can express $q_{tx}$ as follows:
\begin{equation}\label{eq:qtx}
\begin{split} 
q_{tx}&=P\bigl(Al_{k-D_{a}}\bigr)+P\bigl(I_{k}\bigr)q_{u}P(s_{k}=i \cap s_{\hat{k}}=i)\\
&=P\bigl(I_{k-D_{a}}\bigr)q_{u}\bigl(1-P(s_{k-D_{a}}=i \cap s_{\widehat{k-D_{a}}}=i)\bigr)+ P\bigl(I_{k}\bigr)q_{u}P(s_{k}=i \cap s_{\hat{k}}=i)\\
&=P\bigl(I_{k}\bigr)q_{u},
\end{split}
\end{equation}
where the first equality exploits the mutual exclusivity of the three events. In the second equality of \eqref{eq:qtx}, we take into account~\eqref{eq:AlkDa} and in the last equality we have assumed that $P(s_{k}=i)$ is i.d. for any timeslot, giving $P\bigl(I_{k-D_{a}}\bigr)=P\bigl(I_{k}\bigr)$ and $P(s_{k-D_{a}}=i \cap s_{\widehat{k-D_{a}}}=i)=P(s_{k}=i \cap s_{\hat{k}}=i)$.

Consider the probability of the complementary event of $I_{k}$, i.e., $\overline{I}_{k}$. This event occurs when the UE starts a beam alignment in one of timeslots $k-D_{a}$, $k-D_{a}-1$, ..., $k-1$. Thus, $P\bigl(\overline{I}_{k}\bigr)$ is derived as the probability of the union of the following mutually exclusive events: $\bigcup^{k-1}_{i=k-D_{a}} Al_{i}$:
\begin{equation}\label{eq:PNIk}
\begin{split} 
&P\bigl(\overline{I}_{k}\bigr)=P\biggl(\bigcup^{k-1}_{i=k-D_{a}} Al_{i}\biggr)=\sum_{i=k-D_{a}}^{k-1}P\bigl(Al_{i}\bigr)
=\sum_{i=k-D_{a}}^{k-1}P\bigl(I_{i}\bigr)q_{u}\bigl(1-P(s_{i}=j \cap s_{\hat{i}}=j)\bigr)\\
&=D_{a}P\bigl(I_{k}\bigr)q_{u}\bigl(1-P(s_{k}=i \cap s_{\hat{k}}=i)\bigr),
\end{split}
\end{equation}
where, in the last step of~\eqref{eq:PNIk}, we use the same reasoning for the last equality of~\eqref{eq:qtx}.
Finally, by replacing~\eqref{eq:PNIk} in $P\bigl(\overline{I}_{k}\bigr)=1-P\bigl(I_{k}\bigr)$ we obtain~\eqref{eq:qtxf}.
\end{proof}

\section{}
\label{sec:TrP}
In this appendix, we provide the transition probabilities $p_{k}^{0}$ and $p_{k}^{1}$ for the two UE case. In this scenario, in every timeslot, the queue size can increase by a maximum of two, i.e., when both UEs successfully transmit to $R$ and $R$ itself does not successfully transmit any packet. Moreover, $R$ can transmit a packet only when the queue is not empty. Therefore, by considering all the possible transmission strategies and combinations of successfully received packets at $R$, we can write the following:
\begin{align*} 
p_{1}^{0}&=2q_{tx}\overline{q}_{tx}q_{uf}q_{ur}P_{ur}^{f}+2q_{tx}\overline{q}_{tx}q_{ub}P_{ur}^{b}\overline{P}_{ud}^{b}+2q_{tx}^{2}q_{uf}^{2}q_{ur}^{2}P_{ur/\{1\}^{f}}^{f}\overline{P}_{ur/\{1\}^{f}}^{f}+2q_{tx}^{2}q_{uf}^{2}q_{ur}q_{um}P_{ur}^{f}\\
&+2q_{tx}^{2}q_{uf}q_{ub}q_{ur}\Bigl[P_{ur/\{1\}^{b}}^{f}\Bigl(1-P_{ur/\{1\}^{f}}^{b}\overline{P}_{ud}^{b}\Bigl)+\overline{P}_{ur/\{1\}^{b}}^{f}P_{ur/\{1\}^{f}}^{b}\overline{P}_{ud}^{b}\Bigl]+2q_{tx}^{2}q_{ub}q_{uf}q_{um}P_{ur}^{b}\overline{P}_{ud/\{1\}^{f}}^{b}\\
&+q_{tx}^{2}q_{ub}^{2}\Bigl[2P_{ur/\{1\}^{b}}^{b}\overline{P}_{ud/\{1\}^{b}}^{b}\Bigl(1-P_{ur/\{1\}^{b}}^{b}\overline{P}_{ud/\{1\}^{b}}^{b}\Bigl)\Bigl]\stepcounter{equation}\tag{\theequation}\label{eq:p01b}.
\end{align*} 
\begin{align*} 
p_{2}^{0}&=\Bigl(q_{tx}q_{uf}q_{ur}P_{ur/\{1\}^{f}}^{f}\Bigl)^{2}+\Bigl(q_{tx}q_{ub}P_{ur/\{1\}^{b}}^{b}\overline{P}_{ud/\{r\}^{f},\{1\}^{b}}^{b}\Bigl)^{2}+2q_{tx}^{2}q_{ub}q_{uf}q_{ur}P_{ur/\{1\}^{f}}^{b}\overline{P}_{ud}^{b}P_{ur/\{1\}^{b}}^{f}\stepcounter{equation}\tag{\theequation}\label{eq:p02}.
\end{align*} 
\begin{align*} 
p_{-1}^{1}&=q_{r}\Bigl[P_{rd}^{f}\Bigl(\overline{q}_{tx}^{2}+2q_{tx}\overline{q}_{tx}q_{uf}q_{ur}\overline{P}_{ur}^{f}+(q_{tx}q_{uf}q_{ur}\overline{P}_{ur/\{1\}^{f}}^{f})^{2}\Bigl)+P_{rd/\{1\}^{f}}^{f}\\
&\times \Bigl(2q_{tx}\overline{q}_{tx}q_{uf}q_{um}+2q_{tx}^{2}q_{uf}^{2}q_{um}q_{ur}\overline{P}_{ur}^{f}\Bigl)+P_{rd/\{1\}^{b}}^{f}\Bigl(2q_{tx}\overline{q}_{tx}q_{ub}(1-P_{ur}^{b}\overline{P}_{ud/\{r\}^{f}}^{b})\\
&+2q_{tx}^{2}q_{ub}q_{uf}q_{ur}(1-P_{ur/\{1\}^{f}}^{b}\overline{P}_{ud/\{r\}^{f}}^{b})\overline{P}_{ur/\{1\}^{b}}^{f}\Bigl)+P_{rd/\{1\}^{f},\{1\}^{b}}^{f}2q_{tx}^{2}q_{uf}q_{ub}q_{um}(1-P_{ur}^{b}\overline{P}_{ud/\{1,r\}^{f}}^{b})\\
&+P_{rd/\{2\}^{b}}^{f}\Bigl(q_{tx}q_{ub}(1-P_{ur/\{1\}^{b}}^{b}\overline{P}_{ud/\{r\}^{f},\{1\}^{b}}^{b})\Bigl)^{2}\Bigl]+P_{rd/\{2\}^{f}}^{f}q_{tx}^{2}q_{uf}^{2}q_{um}^{2}\stepcounter{equation}\tag{\theequation}\label{eq:p_1f}.
\end{align*} 
\begin{align*} 
p_{1}^{1}&=\overline{q}_{r}p_{1}^{0}+q_{r}\Bigl[2q_{tx}\overline{q}_{tx}q_{uf}q_{ur}P_{ur}^{f}\overline{P}_{rd}^{f}+2q_{tx}\overline{q}_{tx}q_{ub}P_{ur}^{b}\overline{P}_{ud/\{r\}^{f}}^{b}\overline{P}_{rd/\{1\}^{b}}^{f}\\
&+2q_{tx}^{2}q_{uf}^{2}q_{um}q_{ur}P_{ur}^{f}\overline{P}_{rd/\{1\}^{f}}^{f}+2q_{tx}^{2}q_{uf}q_{ub}q_{um}P_{ur}^{b}\overline{P}_{ud/\{1,r\}^{f}}^{b}\overline{P}_{rd/\{1\}^{f},\{1\}^{b}}^{f}\\
&+q_{tx}^{2}q_{uf}^{2}q_{ur}^{2}\Bigl(P_{ur/\{1\}^{f}}^{f}\overline{P}_{ur/\{1\}^{f}}^{f}\overline{P}_{rd}^{f}+(P_{ur/\{1\}^{f}}^{f})^{2}P_{rd}^{f}\Bigl)+q_{tx}^{2}q_{ub}^{2}\Bigl(2P_{ur/\{1\}^{b}}^{b}\overline{P}_{ud/\{r\},\{1\}}^{b}\overline{P}_{rd/\{2\}^{b}}^{f}\\
&\times(1-P_{ur/\{1\}^{b}}^{b}\overline{P}_{ud/\{r\}^{f},\{1\}^{b}}^{b})+(P_{ur/\{1\}^{b}}^{b}\overline{P}_{ud/\{r\}^{f},\{1\}^{b}}^{b})^{2} {P}_{rd/\{2\}^{b}}^{f}\Bigl)\\
&+2q_{tx}^{2}q_{ub}q_{uf}q_{ur}\Bigl(P_{ur/\{1\}^{f}}^{b}\overline{P}_{ud/\{r\}^{f}}^{b}\overline{P}_{ur/\{r\}^{f},\{1\}^{b}}^{f}\overline{P}_{rd/\{1\}^{b}}^{f}+(1-P_{ur/\{1\}^{f}}^{b}\overline{P}_{ud/\{r\}^{f}}^{b})P_{ur/\{1\}^{b}}^{f}\overline{P}_{rd/\{1\}^{b}}^{f}\\
&+P_{ur/\{2\}^{f}}^{b}\overline{P}_{ud/\{r\}^{f}}^{b}P_{ur/\{1\}^{b}}^{f}P_{rd/\{1\}^{b}}^{f}\Bigl)\Bigl]\stepcounter{equation}\tag{\theequation}\label{eq:p11f}.
\end{align*}
\begin{align*} 
p_{2}^{1}&=\overline{q}_{r}p_{2}^{0}+q_{r}\Bigl[\Bigl(q_{tx}q_{uf}q_{ur}P_{ur/\{1\}^{f}}^{f}\Bigl)^{2}\overline{P}_{rd}^{f}+\Bigl(q_{tx}q_{ub}P_{ur/\{1\}^{b}}^{b}\overline{P}_{ud/\{r\}^{f},\{1\}^{b}}^{b}\Bigl)^{2}\overline{P}_{rd/\{2\}^{b}}^{f}\\
&+2q_{tx}^{2}q_{ub}q_{uf}q_{ur}P_{ur/\{1\}^{f}}^{b}\overline{P}_{ud}^{b}P_{ur/\{1\}^{b}}^{f}\overline{P}_{rd/\{1\}^{b}}^{f}\Bigl]\stepcounter{equation}\tag{\theequation}\label{eq:p12}.
\end{align*}

\section{}
\label{sec:QNS}
Hereafter, we analyze the performance of the queue at the relay for $N$ UEs. The average arrival rate that is given in~\eqref{eq:lambdaN}.
In order to compute the number of packets successfully received by $R$, i.e., $r_{k}^{0}$ and $r_{k}^{1}$, we consider all the possible combinations of UE's transmission strategies and interference scenarios. Thus, we indicate with $m$ out of $N$ the number of UEs that transmit a packet, with $i$ (at most $m$) the number of transmitting UEs that use $\mathrm{FD}$ transmissions ($m-i$ UEs use the $\mathrm{BR}$ transmission), and with $j$ the number of $\mathrm{FD}$ UEs that transmit to $R$ ($i-j$ UEs transmit to the mmAP). Moreover, $k$ is the total number of packets successfully received by the relay in a timeslot, and $k_{f}$ out of them are received by using $\mathrm{FD}$ transmissions ($k-k_{f}$ packets are received by using a $\mathrm{BR}$ transmission). Thus, we can express $r_{k}^{0}$ and $r_{k}^{1}$ as follows:
\begin{align*}
r_{k}^{0}&=\sum_{m=k}^{N}\binom{N}{m}q_{tx}^{m}\overline{q}_{tx}^{\ N-m}\sum_{i=0}^{m}\binom{m}{i}q_{uf}^{i}q_{ub}^{m-i} \sum_{j=\max(0,k+i-m)}^{i} \binom{i}{j}q_{ur}^{j}q_{um}^{i-j} \sum_{k_{f}=\max(0,k+i-m)}^{\min(j,k)}\binom{j}{k_{f}}\\
&\times (P_{ur/\{j-1\}^{f},\{m-i\}^{b}}^{f})^{k_{f}} (\overline{P}_{ur/\{j-1\}^{f},\{m-i\}^{b}}^{f})^{j-k_{f}}\times \binom{m-i}{k-k_{f}}(P_{ur/\{j\}^{f},\{m-i-1\}^{b}}^{b}\overline{P}_{ud/\{i-j\}^{f},\{m-i-1\}^{b}}^{b})^{k-k_{f}}\\
&\times (1-P_{ur/\{j\}^{f},\{m-i-1\}^{b}}^{b}\overline{P}_{ud/\{i-j\}^{f},\{m-i-1\}^{b}}^{b})^{m-i-k+k_{f}},\stepcounter{equation}\tag{\theequation}\label{eq:NRarrival0}
\end{align*} 

\begin{align*}
r_{k}^{1}&=\overline{q}_{r}r_{k}^{0}+q_{r}\sum_{m=k}^{N}\binom{N}{m}q_{tx}^{m}\overline{q}_{tx}^{\ N-m}\sum_{i=0}^{m}\binom{m}{i}q_{uf}^{i}q_{ub}^{m-i}\sum_{j=\max(0,k+i-m)}^{i} \binom{i}{j}q_{ur}^{j}q_{um}^{i-j} \sum_{k_{f}=\max(0,k+i-m)}^{\min(j,k)}\binom{j}{k_{f}}\\
&\times (P_{ur/\{j-1\}^{f},\{m-i\}^{b}}^{f})^{k_{f}} (\overline{P}_{ur/\{j-1\}^{f},\{m-i\}^{b}}^{f})^{j-k_{f}} \binom{m-i}{k-k_{f}}(P_{ur/\{j\}^{f},\{m-i-1\}^{b}}^{b}\overline{P}_{ud/\{i-j,r\}^{f},\{m-i-1\}^{b}}^{b})^{k-k_{f}}\\
&\times (1-P_{ur/\{j\}^{f},\{m-i-1\}^{b}}^{b}\overline{P}_{ud/\{i-1,r\}^{f},\{m-i-1\}^{b}}^{b})^{m-i-k+k_{f}}.\stepcounter{equation}\tag{\theequation}\label{eq:NRarrival1}
\end{align*}
Then, for deriving the relay's service rate $\mu_{r}$, we follow the same reasoning that is done above and the successful packet transmission of $R$ is averaged over all the possible interference scenarios:
\begin{align*}
\mu_{r}&=q_{r}\sum_{m=0}^{N}\binom{N}{m}q_{tx}^{m}\overline{q}_{tx}^{\ N-m}\sum_{i=0}^{m}\binom{m}{i}q_{uf}^{m}q_{ub}^{N-m}\sum_{j=0}^{i} \binom{i}{j}q_{ur}^{j}q_{um}^{i-j} P_{rd/\{i-j\}^{f},\{m-i\}^{b}}^{f},\stepcounter{equation}\tag{\theequation}\label{eq:ServiceRN}
\end{align*}
Now, we can study the evolution of the queue at $R$ by using a discrete time Markov Chain, whose transition matrix is a lower Hessenberg matrix, whose elements are represented by the probabilities that the queue size increases by $k$ packets in a timeslot when the queue is empty or not, i.e., $p_{k}^{0}$ and $p_{k}^{1}$. By using the same notation as in~\eqref{eq:NRarrival0}, these terms are given by:
\begin{equation}\label{eq:p_k0}
\begin{split} 
p_{k}^{0}&=r_{k}^{0},
\end{split}
\end{equation}
\begin{align*}
&p_{-1}^{1}=q_{r}\sum_{m=0}^{N}\binom{N}{m}q_{tx}^{m}\overline{q}_{tx}^{\ N-m}\sum_{i=0}^{m}\binom{m}{i}q_{uf}^{i}q_{ub}^{m-i}\sum_{j=0}^{i} \binom{i}{j}q_{ur}^{j}q_{um}^{i-j} P_{rd/\{i-j\}^{f},\{m-i\}^{b}}^{f}\Bigl(\overline{P}_{ur/\{j-1\}^{f},\{m-i\}^{b}}^{f}\Bigr)^{j}\\
&\times(1-P_{ur/\{j\}^{f},\{m-i-1\}^{b}}^{b}\overline{P}_{ud/\{i-j,r\}^{f},\{m-i-1\}^{b}}^{b})^{m-i},\stepcounter{equation}\tag{\theequation}\label{eq:p_11N}
\end{align*}
\begin{align*}
&p_{k}^{1}=\overline{q}_{r}r_{k}^{0}+q_{r}\sum_{m=k}^{N}\binom{N}{m}q_{tx}^{m}\overline{q}_{tx}^{\ N-m}\sum_{i=0}^{m}\binom{m}{i}q_{uf}^{i}q_{ub}^{m-i}\sum_{j=\max(0,k+i-m)}^{i} \binom{i}{j}q_{ur}^{j}q_{um}^{i-j} \sum_{k_{f}=\max(0,k+i-m)}^{\min(j,k)}\binom{j}{k_{f}}\\
&\times (P_{ur/\{j-1\}^{f},\{m-i\}^{b}}^{f})^{k_{f}} (\overline{P}_{ur/\{j-1\}^{f},\{m-i\}^{b}}^{f})^{j-k_{f}} \times \binom{m-i}{k-k_{f}}(P_{ur/\{j\}^{f},\{m-i-1\}^{b}}^{b}\overline{P}_{ud/\{i-j,r\}^{f},\{m-i-1\}^{b}}^{b})^{k-k_{f}}\\
&\times (1-P_{ur/\{j\}^{f},\{m-i-1\}^{b}}^{b}\overline{P}_{ud/\{i-j,r\}^{f},\{m-i-1\}^{b}}^{b})^{m-i-k+k_{f}}\overline{P}_{rd/\{i-j\}^{f},\{m-i\}^{b}}^{f} + q_{r}\sum_{m=k+1}^{N}\binom{N}{m}q_{tx}^{m}\overline{q}_{tx}^{\ N-m}\\
&\times \sum_{i=0}^{m}\binom{m}{i}q_{uf}^{i}q_{ub}^{m-i} \sum_{j=\max(0,k+1+i-m)}^{i} \binom{i}{j}q_{ur}^{j}q_{um}^{i-j} \sum_{k_{f}=\max(0,k+1+i-m)}^{\min(j,k+1)}\binom{j}{k_{f}} \binom{m-i}{k+1-k_{f}}\\
&\times (P_{ur/\{j-1\}^{f},\{m-i\}^{b}}^{f})^{k_{f}} (\overline{P}_{ur/\{j-1\}^{f},\{m-i\}^{b}}^{f})^{j-k_{f}} (P_{ur/\{j\}^{f},\{m-i-1\}^{b}}^{b}\overline{P}_{ud/\{i-j,r\}^{f},\{m-i-1\}^{b}}^{b})^{k+1-k_{f}}\\
&\times (1-P_{ur/\{j\}^{f},\{m-i-1\}^{b}}^{b}\overline{P}_{ud/\{i-j,r\}^{f},\{m-i-1\}^{b}}^{b})^{m-i-k-1+k_{f}}P_{rd/\{i-j\}^{f},\{m-i\}^{b}}^{f},\stepcounter{equation}\tag{\theequation}\label{eq:pk1}
\end{align*}
\begin{equation}\label{eq:p01N}
\begin{split} 
p_{0}^{1}&=1-p_{-1}^{1}-\sum_{k=1}{N}p_{k}^{1}.
\end{split}
\end{equation}
Finally, we can compute the probability that the queue is empty, $P(Q=0)$, and the average relay queue size, $\overline{Q}$. Hereafter, we show the main steps of the derivations that are illustrated with more details in~\cite{book}. First, we can note that the queue at $R$ can be modelled as an $M^{N}/M/1$ queue, therefore, the equation that describes the evolution of the states is given by:
\begin{equation}\label{eq:Ps}
\begin{split} 
s_{i}=a_{i}s_{0}+\sum_{j=1}^{i+1}b_{i-j+1}s_{j},
\end{split}
\end{equation}
where, $s_{i}$ represents the probability of finding our system in state $i$ at equilibrium. Let $s$ be the steady-state distribution vector and $S(z)$ its Z-transformation, we have:
\begin{align*}
S(z)=\sum_{i=1}^{\infty}s_{i}z^{-i} \Rightarrow \overline{Q}=-S'(1)=-s_{0}\frac{K''(1)}{L''(1)}.\stepcounter{equation}\tag{\theequation}\label{eq:Sz}
\end{align*}
The terms $K''(z)$ and $L''(z)$ in~\eqref{eq:Sz} are the second derivatives of $K(z)$ and $L(z)$, respectively. These are given by~\cite{book}:
\begin{align*}
K(z)&=(-z^{-2}A(z)+z^{-1}A'(z)-B'(z))(z^{-1}-B(z))\\
&-(z^{-1}A'(z)-B(z))(-z^{-2}-B'(z)),\stepcounter{equation}\tag{\theequation}\label{eq:Kz}
\end{align*}
\begin{align*}
L(z)=(z^{-1}-B(z))^{2},\stepcounter{equation}\tag{\theequation}\label{eq:Lz}
\end{align*}
where, $A(z)=\sum_{i=1}^{N}a_{i}z^{-i}$ and $B(z)=\sum_{i=1}^{N+1}b_{i}z^{-i}$  ($a_{i}=p_{i}^{0}$ and $b_{i}=p_{i-1}^{1}$). The term $s_{0}$, in~\eqref{eq:Sz}, is the probability that the queue is empty at equilibrium, which can be written as follows~\cite{book}:
\begin{align*}
P(Q=0)=\frac{1+B'(1)}{1+B'(1)-A'(1)}.\stepcounter{equation}\tag{\theequation}\label{eq:PQ01}
\end{align*}
Then, by replacing the first derivative of $A(z)$ and $B(z)$ in \eqref{eq:PQ01}, we obtain:
\begin{equation}\label{eq:QN}
\begin{split} 
P(Q=0)=\frac{p_{-1}^{1}-\sum_{i=1}^{N}ip_{i}^{1}}{p_{-1}^{1}-\sum_{i=1}^{N}ip_{i}^{1}+\lambda_{r}^{0}}.
\end{split}
\end{equation}
Finally, by considering~\eqref{eq:Sz},~\eqref{eq:Kz},~\eqref{eq:Lz}, and~\eqref{eq:QN}, we can express $\overline{Q}$ as follows:
\begin{align*}
\overline{Q}&=\frac{\Bigl(\sum_{k=1}^{N}kp_{k}^{1}-p_{-1}^{1}\Bigr)\sum_{k=1}^{N}k(k+3)p_{k}^{0}+\lambda_{r}^{0}\Bigl(2p_{-1}^{1}-\sum_{k=1}^{N}k(k+3)p_{k}^{1}\Bigr)}{2\Bigl(\sum_{k=1}^{N}kp_{k}^{1}-p_{-1}^{1}\Bigr)\Bigl(p_{-1}^{1}-\sum_{k=1}{N}kp_{k}^{1}+\lambda_{r}^{0}\Bigr)}.\stepcounter{equation}\tag{\theequation}\label{eq:Qsize}
\end{align*}

\bibliography{ref}
\bibliographystyle{IEEEtran}

\end{document}